\newcommand\defeq{\stackrel{\mathclap{\normalfont\mbox{\tiny def}}}{=}}
\theoremstyle{plain}
\newtheorem{thm}{Theorem}
\newtheorem{lem}[thm]{Lemma}
\newtheorem{prop}[thm]{Proposition}
\newtheorem{cor}[thm]{Corollary}
\newtheorem{defn}[thm]{Definition}
\theoremstyle{definition}
\theoremstyle{remark}
\newtheorem*{rem}{Remark}
\newtheorem*{note}{Note}
\title{\vspace{-2.0cm} How to Generate Pseudorandom Permutations Over Other Groups:
Even-Mansour and Feistel Revisited} 
\author{Hector B. Hougaard\footnote{This article is based on work done for my Master's Thesis at the University of Copenhagen. For more details on the thesis, contact me at aehogo@gmail.com.}}
\date{\vspace{-5ex}}
\begin{document}
\maketitle

\begin{abstract}
Recent results by Alagic and Russell have given some evidence that the Even-Mansour cipher may be secure against quantum adversaries with quantum queries, if considered over other groups than $(\mathbb{Z}/2)^n$. This prompts the question as to whether or not other classical schemes may be generalized to arbitrary groups and whether classical results still apply to those generalized schemes.

In this paper, we generalize the Even-Mansour cipher and the Feistel cipher. We show that Even and Mansour's original notions of secrecy are obtained on a one-key, group variant of the Even-Mansour cipher. We generalize the result by Kilian and Rogaway, that the Even-Mansour cipher is pseudorandom, to super pseudorandomness, also in the one-key, group case. Using a Slide Attack we match the bound found above. After generalizing the Feistel cipher to arbitrary groups we resolve an open problem of Patel, Ramzan, and Sundaram by showing that the $3$-round Feistel cipher over an arbitrary group is not super pseudorandom.

Finally, we generalize a result by Gentry and Ramzan showing that the Even-Mansour cipher can be implemented using the Feistel cipher as the public permutation. In this last result, we also consider the one-key case over a group and generalize their bound.
\end{abstract}

\section{Introduction}
In \citep{EM}, Even and Mansour introduced and proved security for the DES inspired block cipher scheme we now call the Even-Mansour (EM) scheme. Given a public permutation, $P$, over $n$-bit strings, with two different, random, secret, $n$-bit keys $k_1$ and $k_2$, a message $x\in \lbrace 0,1 \rbrace^n$ could be enciphered as
\begin{align*}
EM_{k_1,k_2}^P (x) = P(x \oplus k_1)\oplus k_2,
\end{align*}
with an obvious decryption using the inverse public permutation. The scheme was minimal, in the sense that they needed to XOR a key before and after the permutation, otherwise the remaining key could easily be found. As an improvement, Dunkelman, Keller, and Shamir \citep{DKS} showed that there was only a need for a single key and the scheme would still retain an indistinguishability from random, i.e. it was pseudorandom. As another consideration of block ciphers, the Feistel cipher construction of Luby and Rackoff \citep{LubyR} showed how to build pseudorandom permutations from pseudorandom functions. 

Eventually, Kuwakado and Morii showed that both the EM scheme \citep{BrokenEM} and the Feistel scheme \citep{BrokenFeistel} could be broken by quantum adversaries with quantum queries. Rather than discard these beautiful constructions entirely, Alagic and Russell \citep{Gorjan} considered whether it would be possible to define the two-key EM scheme over Abelian groups in order to retain security against quantum adversaries with quantum queries. What they showed was a security reduction to the Hidden Shift Problem, over certain groups, such as $\mathbb{Z}/2^n$ and $S_n$. This result inspires us to ask whether the EM and Feistel schemes can be generalized over all groups, and if so, whether or not we can get pseudorandomness in some model.

\subsection{Prior Work}
In extension of their simplification of the EM scheme, Dunkelman, Keller, and Shamir \citep{DKS} attacked the construction using variants of slide attacks in order to show that the security bound was optimal. They further considered other variants of the EM scheme, such as the Addition Even-Mansour with an Involution as the Permutation (two-keyed). Also Kilian and Rogaway \citep{Kilr} were inspired by DESX and EM to define their $FX$ construction, of which the EM scheme is a special case.

As referred to above, Kuwakado and Morii were able to break the EM scheme \citep{BrokenEM} and the $3$-round Feistel scheme \citep{BrokenFeistel} on $n$-bit strings, using Simon's algorithm, if able to query their oracle with a superposition of states. Kaplan et al. \citep{Kaplan}, using Kuwakado and Morii's results, showed how to break many classical cipher schemes, which in turn incited Alagic and Russell \citep{Gorjan}.

In their work with the Hidden Shift Problem, \citep{Gorjan} posit that a Feistel cipher construction over other groups than the bit strings might be secure against quantum adversaries with quantum queries. Many Feistel cipher variants exist, with different relaxations on the round functions, see for example \citep{NaorReingold} and \citep{PatelRamzanSundaram}, the latter of which also considered Feistel ciphers over other groups. Vaudenay \citep{Vaudenay} also considered Feistel ciphers over other groups in order to protect such ciphers against differential analysis attacks by what he called decorrelation.

Removed from the schemes considered below and with a greater degree of abstraction, Black and Rogaway \citep{BlackRogaway} consider ciphers over arbitrary domains. In general, on the question of the existence of quantum pseudorandom permutations, see \citep{Zhandry}.

\subsection{Summary of Results}
We work in the Random Oracle Model and consider groups $G$ in the family of finite groups, $\mathcal{G}$. We consider pseudorandom permutations, given informally as the following.

\begin{defn}
\textbf{[Informal]} A keyed permutation $P$ on a group $G$ is a \textbf{Pseudorandom Permutation (PRP)} on $G$ if it is indistinguishable from a random permutation for all probabilistic distinguishers having access to only polynomially many permutation-oracle queries.
\end{defn}

A \textbf{Super Pseudorandom Permutation (SPRP)} is a permutation where the distinguisher is given access to the inverse permutation-oracle as well.

We define the \textbf{Group Even-Mansour (EM) scheme} on $G$ to be the encryption scheme having the encryption algorithm
\begin{align*}
E_k(m) = P(m\cdot k) \cdot k,
\end{align*}
where $m\in G$ is the plaintext and $k\in G$ is the uniformly random key.

We define two problems for the Group Even-Mansour scheme: \textbf{Existential Forgery} (EFP) and \textbf{Cracking} (CP). In EFP, the adversary must eventually output a plaintext-ciphertext pair which satisfies correctness. In CP, the adversary is given a ciphertext and asked to find the corresponding plaintext.

It holds that for our Group EM scheme, the probability that an adversary succeeds in the EFP is polynomially bounded:
\begin{thm}\label{IntuitiveEFP}
\textbf{[Informal]} If $P$ is a uniformly random permutation on $G$ and $k\in G$ is chosen uniformly at random. Then, for any probabilistic adversary $\mathcal{A}$, the success probability of solving the EFP is negligible, specifically, bounded by
\begin{align*}
O\left( \frac{st}{|G|} \right),
\end{align*}
where $s$ and $t$ are the amount of encryption/decryption- and permutation/inverse permutation-oracle queries, respectively.
\end{thm}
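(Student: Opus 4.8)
The plan is to bound the EFP success probability by a standard random-oracle hybrid argument, tracking the adversary's "view" as a partial description of the permutation $P$ together with the (unknown) key $k$. The key observation is that an existential forgery $(m, c)$ is valid precisely when $P(m \cdot k) \cdot k = c$, i.e. when $P(m \cdot k) = c \cdot k^{-1}$. So the adversary must, in effect, predict the value of $P$ at some input point $m \cdot k$ that it has not already "pinned down" through its oracle access. First I would set up the bookkeeping: the adversary makes $s$ encryption/decryption queries and $t$ direct $P/P^{-1}$ queries. Each encryption query $E_k(m)$ internally evaluates $P$ at $m \cdot k$, and each direct query pins down $P$ at a point of the adversary's choosing. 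The central claim is that, because $k$ is uniformly random and independent of $P$, the point $m \cdot k$ appearing in any given encryption query is uniformly distributed over the as-yet-unqueried part of $G$, so the adversary gains essentially no information linking its direct $P$-queries to the secret shift $k$.

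Next I would formalize the notion of a "collision" event that is necessary for forgery. Concretely, define the bad event to be that at some point during the interaction, one of the $s$ internally-evaluated permutation inputs $m_i \cdot k$ coincides with one of the $t$ points at which the adversary has directly queried $P$ (or, symmetrically, that two such points collide in a way that leaks $k$). Conditioned on this bad event never occurring, I would argue that the adversary's final forgery attempt must hit a fresh input of $P$, whose output is then a uniformly random element of the remaining range; hence the forgery succeeds with probability at most $1/(|G| - (s+t))$, which is $O(1/|G|)$ for polynomially bounded $s, t$. The main work is then to bound the probability of the bad event itself. Since $k$ is uniform and each comparison pairs one of the $t$ adversary-chosen points against one of the $s$ shifted encryption inputs, a union bound over the $st$ pairs gives that each individual coincidence holds with probability $O(1/|G|)$ over the choice of $k$, yielding the claimed $O(st/|G|)$ bound.

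The main obstacle I anticipate is handling the adaptivity of the adversary correctly: the points $m_i$ it submits to the encryption oracle, and the points it submits to $P$, may depend on earlier oracle responses, so I cannot simply treat the $st$ pairs as independent events fixed in advance. To deal with this I would use a lazy-sampling / deferred-randomness argument, sampling $k$ only implicitly and revealing $P$-values on demand, so that the analysis proceeds query-by-query and at each step the conditional probability of a new coincidence (given the entire transcript so far and the bad event not yet having occurred) remains $O(1/|G|)$. The clean way to make this rigorous is to maintain, inductively, that conditioned on the transcript so far and on no prior bad event, the key $k$ is still uniformly distributed over a set of size at least $|G| - (s+t)$; each new query rules out at most one value of $k$ as a potential collision source, and summing these per-query failure probabilities across all pairs recovers the $O(st/|G|)$ bound. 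A secondary subtlety, worth a sentence in the proof, is that the argument should be carried out symmetrically for decryption and inverse-permutation queries using $P^{-1}$, since the informal statement counts both directions in $s$ and $t$; the group structure makes these cases identical up to replacing $k$ by $k^{-1}$ and $P$ by $P^{-1}$.
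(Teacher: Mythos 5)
Your outline is correct and follows essentially the same route as the paper: the paper omits a written proof of this theorem (deferring to Even and Mansour's original argument), but the bad event you define---a collision between a shifted encryption input $m_i\cdot k$ (or $c_i\cdot k^{-1}$) and a directly queried $P/P^{-1}$-point---combined with deferring the choice of $k$ until after the transcript is fixed is exactly the ``bad key'' mechanism used in the appendix proof of Theorem~\ref{PseudoEMbounded}, giving the same union bound over the $st$ pairs. One minor quantitative slip: conditioned on no prior bad event, $k$ remains uniform over a set of size at least $|G|-2st$ (each of the $st$ pairs rules out at most one key for each of the two relations), not $|G|-(s+t)$, but this does not affect the $O(st/|G|)$ conclusion.
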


By a basic reduction, and for the latter, by an inference result, we also get that
\begin{thm}
\textbf{[Informal]} If $P$ is a super pseudorandom permutation on $G$ and $k\in G$ is chosen uniformly at random. For any probabilistic adversary $\mathcal{A}$, the success probability of solving the EFP is negligible.
\end{thm}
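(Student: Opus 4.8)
The plan is to prove this by a black-box reduction to Theorem~\ref{IntuitiveEFP}: the bound for an ideal (uniformly random) permutation is already in hand, so it suffices to show that replacing the ideal permutation by a super pseudorandom one can change the forgery probability by at most the SPRP-distinguishing advantage, which is negligible by hypothesis. Concretely, I would prove a direct bound: any forger $\mathcal{A}$ against the Group EM scheme instantiated with the SPRP $P$ yields a distinguisher $\mathcal{D}$ that tells $P$ (with $P^{-1}$) apart from a uniformly random permutation $\pi$ (with $\pi^{-1}$), and then invoke the SPRP assumption.

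First I would describe the simulation. The distinguisher $\mathcal{D}$ receives oracle access to a permutation $\Pi$ and its inverse $\Pi^{-1}$, samples a key $k\in G$ uniformly, and runs $\mathcal{A}$, answering its queries as follows. An encryption query on $m$ is answered by $\Pi(m\cdot k)\cdot k$; a decryption query on $c$ is answered by $\Pi^{-1}(c\cdot k^{-1})\cdot k^{-1}$, using that $c=P(m\cdot k)\cdot k$ is equivalent to $m=P^{-1}(c\cdot k^{-1})\cdot k^{-1}$; and a permutation (resp. inverse-permutation) query is forwarded directly to $\Pi$ (resp. $\Pi^{-1}$). Meanwhile $\mathcal{D}$ keeps a log of the plaintext-ciphertext pairs revealed through the encryption/decryption interface. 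When $\mathcal{A}$ halts with a candidate forgery $(m^\ast,c^\ast)$ not already in the log, $\mathcal{D}$ checks validity with one further oracle call, testing whether $c^\ast=\Pi(m^\ast\cdot k)\cdot k$, and outputs $1$ exactly when the forgery is valid.

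Next I would carry out the two-case analysis. When $\Pi=P$, the view of $\mathcal{A}$ is identically distributed to its view in the real Group EM experiment, so $\Pr[\mathcal{D}=1]$ equals $\mathcal{A}$'s true EFP success probability $\varepsilon$. When $\Pi$ is a uniformly random permutation, $\mathcal{D}$ perfectly emulates the Group EM scheme over an ideal permutation, so by Theorem~\ref{IntuitiveEFP} we have $\Pr[\mathcal{D}=1]=O(st/|G|)$. Since $\mathcal{A}$ makes polynomially many queries and each induces at most a constant number of oracle calls for $\mathcal{D}$, the distinguisher is efficient and its advantage is at least $\varepsilon-O(st/|G|)$. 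As $P$ is super pseudorandom this advantage is negligible, forcing $\varepsilon$ itself to be negligible.

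Finally, the point I expect to require the most care is the oracle bookkeeping and the reason super pseudorandomness, rather than plain pseudorandomness, is exactly what is needed. Because both the Group EM decryption interface and $\mathcal{A}$'s inverse-permutation queries compel $\mathcal{D}$ to call $\Pi^{-1}$, the reduction cannot go through against a mere PRP; it is essential that $\mathcal{D}$ may query the inverse, which matches the SPRP assumption precisely. I would also ensure that a valid forgery excludes pairs trivially obtained from the encryption/decryption log, so that the event measured under $\Pi=P$ and under random $\Pi$ is the same and Theorem~\ref{IntuitiveEFP} applies verbatim; and I would confirm that sampling $k$ inside $\mathcal{D}$ reproduces exactly the key distribution assumed by that theorem.
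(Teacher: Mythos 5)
Your proposal is correct and is exactly the ``simple reduction'' the paper invokes (and omits): simulate the EFP game for $\mathcal{A}$ using the permutation and inverse-permutation oracles, output $1$ iff the forgery is valid, and compare against the random-permutation bound of Theorem~\ref{IntuitiveEFP}. The bookkeeping points you flag (the need for the inverse oracle, excluding logged pairs, and the key distribution) are the right ones, so nothing is missing relative to the paper's intended argument.
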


\begin{cor}
\textbf{[Informal]} If $P$ is a super pseudorandom permutation on $G$ and $k\in G$ is chosen uniformly at random. For any \underline{polynomial-time} probabilistic adversary $\mathcal{A}$, the success probability of solving the CP is negligible.
\end{cor}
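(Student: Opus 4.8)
The plan is to prove the Corollary by contraposition, reducing the Cracking Problem to the Existential Forgery Problem: I will show that any polynomial-time adversary solving CP with non-negligible probability can be converted into an adversary solving EFP with comparable probability, contradicting the preceding Theorem. Concretely, suppose $\mathcal{A}$ solves CP with success probability $\epsilon$. I would build an EFP-adversary $\mathcal{B}$ that first samples a challenge ciphertext $c \in G$ uniformly at random, then runs $\mathcal{A}$ on input $c$, relaying every encryption/decryption- and permutation/inverse-permutation-oracle query of $\mathcal{A}$ to $\mathcal{B}$'s own oracles and passing the answers back verbatim. When $\mathcal{A}$ halts with a candidate plaintext $m$, the adversary $\mathcal{B}$ simply outputs the pair $(m,c)$.

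The correctness of the reduction rests on the observation that $E_k$ is a permutation of $G$, so a uniformly random ciphertext $c$ is exactly the encryption $E_k(m^{*})$ of a uniformly random plaintext $m^{*}=E_k^{-1}(c)$; this matches the distribution of a CP challenge, so $\mathcal{A}$'s view inside $\mathcal{B}$ is identical to its view in a genuine CP experiment. Hence whenever $\mathcal{A}$ succeeds we have $E_k(m)=c$, and the pair $(m,c)$ that $\mathcal{B}$ outputs satisfies correctness, i.e. it is a valid plaintext-ciphertext pair. Because $\mathcal{B}$ generates $c$ on its own without ever calling the encryption oracle to produce it, and because a legitimate CP-solver is not permitted to recover $m$ by querying the decryption oracle on the challenge $c$, this pair is a \emph{fresh} forgery rather than one read off an oracle. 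Thus $\mathcal{B}$ solves EFP with probability at least $\epsilon$, up to the negligible chance that $\mathcal{A}$ stumbles onto the preimage of $c$ through a disallowed query, which I would bound away using the query bookkeeping already set up for the forgery game.

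Finally, since $\mathcal{A}$ runs in polynomial time it issues only polynomially many oracle queries, so $\mathcal{B}$ inherits the same polynomial counts $s$ and $t$, plus the single self-generated challenge. The preceding Theorem then applies to $\mathcal{B}$ and forces its EFP success probability to be negligible; combined with the reduction this makes $\epsilon$ negligible as well, proving the Corollary. The polynomial-time hypothesis enters exactly here: it is what guarantees that $\mathcal{B}$ is admissible for the super-pseudorandomness-based forgery theorem, whose passage from a uniformly random $P$ to a super pseudorandom $P$ is an inference that only holds against efficient adversaries.

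I expect the main obstacle to be the freshness bookkeeping in the second paragraph: one must argue that a successful CP-solver cannot have already extracted the target plaintext through a degenerate oracle interaction, for instance an inverse-permutation query that, together with the key relation $E_k(m)=P(m\cdot k)\cdot k$, pins down $m$, and that discarding such runs costs only a negligible additive term. Getting the definitions of ``admissible CP query'' and ``fresh EFP pair'' to align so that the reduction preserves validity is the delicate part; the rest of the argument is a routine syntactic relay of oracle queries.
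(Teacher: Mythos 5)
Your reduction is, in outline, the right reconstruction of what the paper actually does here: the paper gives no proof of this corollary at all, but simply invokes the Even--Mansour inference result that EFP security implies CP security and remarks that nothing in that argument depends on the group being $(\mathbb{Z}/2)^n$. So you are re-deriving the cited inference, and the setup (sample $c$ uniformly, use that $E_k$ is a permutation to match the CP challenge distribution, output $(m,c)$) is correct.

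However, there is a genuine gap in your freshness step. The EFP game only accepts a pair $(m,c)$ that \emph{has not been queried before}, and in the CP game the adversary is forbidden only from querying the \emph{decryption} oracle on the challenge $c_0$; it remains entirely free to query the \emph{encryption} oracle on candidate plaintexts. A perfectly natural cracking adversary verifies its guess by querying $E_k(m)$ and checking that the answer equals $c_0$ before outputting $m$. In that event the pair $(m,c_0)$ has already been queried when $\mathcal{B}$ outputs it, so $\mathcal{B}$'s forgery is rejected; the ``disallowed query'' event you propose to bound away can therefore occur with probability as large as $\epsilon$ itself, not negligibly. The standard repair --- and the one Even and Mansour use --- is to have $\mathcal{B}$ pick a uniformly random index $i \in \{1,\dots,s+1\}$ and either halt $\mathcal{A}$ just before its $i$-th encryption query $m_i$ and output $(m_i,c_0)$, or, for $i=s+1$, output $\mathcal{A}$'s final answer paired with $c_0$. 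If $\mathcal{A}$ cracks successfully, then either there is a first moment at which it queries $E_k$ on the true plaintext $m_0$ (at which moment $(m_0,c_0)$ is still fresh) or it never does so; in either case some value of $i$ yields a fresh valid forgery, so $\mathcal{B}$ succeeds with probability at least $\epsilon/(s+1)$. This multiplicative polynomial loss still forces $\epsilon$ to be negligible, and it is a second place (besides the PRP-to-random switch you identify) where the hypothesis that $\mathcal{A}$ makes only polynomially many queries is essential.
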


With the same bound as in Theorem~\ref{IntuitiveEFP}, we find that

\begin{thm}
\textbf{[Informal]} For any probabilistic adversary $\mathcal{A}$, limited to polynomially many encryption- and decryption-oracle queries and polynomially many permutation- and inverse permutation-oracle queries, the Group EM scheme over a group $G$ is a super pseudorandom permutation.
\end{thm}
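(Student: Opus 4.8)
The plan is to bound the SPRP-distinguishing advantage in the usual two-world formulation, following the transcript-based (coupling) argument of Kilian and Rogaway \citep{Kilr} adapted to an arbitrary group $G$. In the \emph{real} world the distinguisher $\mathcal{A}$ is given oracle access to the cipher $E_k$ with $E_k(m)=P(m\,k)\,k$ and its inverse $E_k^{-1}(c)=P^{-1}(c\,k^{-1})\,k^{-1}$, together with the public-permutation oracles $P$ and $P^{-1}$, where $P$ is a uniform permutation of $G$ and $k\in G$ is uniform; in the \emph{ideal} world the cipher oracle is replaced by an independent uniform permutation $\pi$ (with $\pi^{-1}$), while $P,P^{-1}$ are retained. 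Writing $s$ for the number of cipher queries and $t$ for the number of $P$-queries, the goal is $\bigl|\Pr[\mathcal{A}^{E_k,P}=1]-\Pr[\mathcal{A}^{\pi,P}=1]\bigr|=O(st/|G|)$. As usual I may assume $\mathcal{A}$ is deterministic and never repeats a query.

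The key observation is that every cipher query — forward $m\mapsto c$ or backward $c\mapsto m$ — encodes one evaluation of the hidden permutation $P$, namely $P(m\,k)=c\,k^{-1}$. Thus a full interaction produces $s$ \emph{implicit} input--output pairs $(m_i\,k,\;c_i\,k^{-1})$ on the graph of $P$ and $t$ \emph{explicit} pairs $(u_j,v_j)$ from the direct $P$-queries. I will record all of these, and I will even hand $k$ to the distinguisher at the end of the interaction (this only strengthens $\mathcal{A}$), so a transcript is $\tau=(k,\{(m_i,c_i)\},\{(u_j,v_j)\})$. Since right-multiplication by $k$ is a bijection, the implicit inputs are pairwise distinct and so are the implicit outputs; likewise the explicit pairs are internally consistent. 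Hence the only way the transcript can fail to describe a single partial permutation is a \emph{cross-collision}: an input collision $m_i\,k=u_j$, equivalently $k=m_i^{-1}u_j$, or an output collision $c_i\,k^{-1}=v_j$, equivalently $k=v_j^{-1}c_i$. Call $\tau$ \emph{bad} if some such collision occurs and \emph{good} otherwise.

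I will then apply the identical-until-bad principle (equivalently, the H-coefficient method). First, in the ideal world $k$ is independent of the entire transcript, so for each of the at most $st$ pairs $(i,j)$ the two collision conditions each hold with probability exactly $1/|G|$; a union bound gives $\Pr_{\mathrm{ideal}}[\tau\text{ bad}]\le 2st/|G|=O(st/|G|)$. Second, for a good $\tau$ I compare the two world-probabilities directly. In the ideal world the cipher constraints and the $P$-constraints are satisfied by independent permutations, contributing $\tfrac{(|G|-s)!}{|G|!}\cdot\tfrac{(|G|-t)!}{|G|!}$ (times $1/|G|$ for the phantom $k$); in the real world the same $s+t$ pairs are distinct and must all lie on the single permutation $P$, contributing $\tfrac{1}{|G|}\cdot\tfrac{(|G|-s-t)!}{|G|!}$. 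A term-by-term comparison of the two products shows the ratio $\Pr_{\mathrm{real}}[\tau]/\Pr_{\mathrm{ideal}}[\tau]\ge 1$ for every good $\tau$. The H-coefficient lemma then bounds the advantage by $0+\Pr_{\mathrm{ideal}}[\tau\text{ bad}]=O(st/|G|)$, matching Theorem~\ref{IntuitiveEFP}.

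The step I expect to be the real work is the bookkeeping forced by the \emph{super}-pseudorandom (two-sided) setting over a non-abelian group. Unlike the bit-string case, where $k=k^{-1}$ collapses the input and output collisions into a single condition, here they are the genuinely different events $k=m_i^{-1}u_j$ and $k=v_j^{-1}c_i$, and I must verify that backward cipher queries and backward $P$-queries produce exactly the same implicit relation $P(m\,k)=c\,k^{-1}$ so that no extra collision types arise. I also need to confirm that, conditioned on goodness, the $s+t$ implicit and explicit pairs really do form a consistent partial permutation (distinct inputs and distinct outputs simultaneously), which is what licenses the clean $(|G|-s-t)!$ count; this is where the group structure, rather than mere set structure, is genuinely used.
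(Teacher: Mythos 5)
Your proposal is correct, and it reaches the same $O(st/|G|)$ bound via the same underlying bad event as the paper, but by a genuinely different route. The paper's proof (Theorem~\ref{PseudoEMbounded} in the appendix) is a lazy-sampling game-hopping argument: it defines \textbf{Game R} (answer everything uniformly, merely flag \emph{bad}) and \textbf{Game X} (answer uniformly but patch for consistency when a collision is detected), argues the two are identical until the flag is set, and then defers the key choice to the end of the interaction (\textbf{Game R'}) to bound $Pr[BAD]\leq 2st/|G|$. The price of that route is the simulation-faithfulness arguments relegated to Appendices~\ref{ExplainX} and~\ref{App:ReqRR}, i.e.\ the case analyses showing that \textbf{Game X} really computes $E_k(m)=P(m\cdot k)\cdot k$ and that the flag in \textbf{Game R} and \textbf{Game R'} coincide. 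You instead run the H-coefficient method: you augment the transcript with the key, identify exactly the same bad event (the cross-collisions $m_i\cdot k=u_j$ and $c_i\cdot k^{-1}=v_j$, which are the paper's ``bad key'' conditions), bound its ideal-world probability by $2st/|G|$ using the independence of $k$ from the transcript (the same observation that powers \textbf{Game R'}), and replace the entire Game~X machinery with the ratio computation
\begin{align*}
\frac{Pr_{\mathrm{real}}[\tau]}{Pr_{\mathrm{ideal}}[\tau]}=\frac{|G|!\,(|G|-s-t)!}{(|G|-s)!\,(|G|-t)!}\geq 1,
\end{align*}
which is a correct and standard term-by-term comparison. What your approach buys is compactness: the delicate ``define-as-you-go versus predefine'' equivalence the paper must prove disappears, at the cost of having to justify the exact interchangeability of the transcript distribution with the product of permutation-constraint probabilities. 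One point worth making explicit in a write-up: in the real world a cross-collision $m_i\cdot k=u_j$ forces $c_i\cdot k^{-1}=v_j$ (both pairs lie on the single $P$), so good transcripts genuinely carry $s+t$ \emph{distinct} constraints and the $(|G|-s-t)!$ count is licensed; you gesture at this in your last paragraph and it does go through, precisely because right- and left-translation by $k$ are bijections on $G$.
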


We then apply a Slide Attack, to find an attack which matches the bound given above.

Considering the \textbf{Group Feistel cipher}, whose encryption algorithm consists of multiple uses of the round function
\begin{align*}
\mathcal{F}_f(x,y) = (y,x\cdot f(y)),
\end{align*}
where $f$ is a pseudorandom function on $G$, we show that the \textbf{$3$-round Feistel cipher} is pseudorandom but is not super pseudorandom, regardless of the underlying group $G$. We then note that the $4$-round Feistel cipher is super pseudorandom as proven in \citep{PatelRamzanSundaram}.

Finally, we consider the \textbf{Group Even-Mansour scheme instantiated using a $4$-round Feistel cipher} over $G^2 = G\times G$, which uses the encryption algorithm
\begin{align*}
\Psi_k^{f,g}(m) = \mathcal{F}_{g,f,f,g}(m \cdot k)\cdot k,
\end{align*}
where $f$ and $g$ are modelled as random functions, $m\in G^2$ the plaintext, and $k \in G^2$ is a uniformly random key. We then show one of our main results:

\begin{thm}
\textbf{[Informal]} For any probabilistic $4$-oracle adversary $\mathcal{A}$ with at most
\begin{itemize}
\item $q_c$ queries to the $\Psi$- and inverse $\Psi$-oracles (or random oracles),
\item $q_f$ queries to the $f$-oracle, and
\item $q_g$ queries to the $g$-oracle,
\end{itemize}
we have that the success probability of $\mathcal{A}$ distinguishing between $\Psi$ and a random oracle, is bounded by
\begin{align*}
 (2q_c^2 +4q_fq_c + 4q_gq_c + 2q_c^2 - 2q_c)|G|^{-1} + 2\cdot \begin{pmatrix}
	q_c \\ 2
	\end{pmatrix}(2|G|^{-1} + |G|^{-2}).
\end{align*}
\end{thm}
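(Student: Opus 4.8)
The plan is to bound the distinguishing advantage by a transcript-based (H-coefficient-style) argument. One cannot simply invoke the earlier Even-Mansour super pseudorandomness theorem with the $4$-round Feistel in the role of the public permutation $P$, because that theorem models $P$ as a black-box random permutation, whereas here $\mathcal{A}$ queries the Feistel's internal round functions $f,g$ directly; a combined analysis is therefore needed. The crucial fact I will exploit is that the secret Even-Mansour key $k$ conceals the internal wires of each Feistel evaluation even though $\mathcal{A}$ sees $f$ and $g$. I set up a \emph{real world}, in which $\mathcal{A}$ interacts with $\Psi_k^{f,g}$, its inverse, and the oracles $f,g$, and an \emph{ideal world}, in which the $\Psi$-oracle is a uniformly random permutation on $G^2$ independent of $f,g$; I take $\mathcal{A}$ deterministic, making exactly $q_c,q_f,q_g$ queries of each type. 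Answering a $\Psi$-query on $m$ requires evaluating $g$ at two internal wires and $f$ at two internal wires, all determined by the masked value $m\cdot k$; since $k$ is uniform and secret these wires are unpredictable, so $\mathcal{A}$'s own $f$- and $g$-queries coincide with them only with small probability.

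Writing $\mathcal{F}_f(x,y)=(y,x\cdot f(y))$ and composing $g,f,f,g$, I first make the four internal round-inputs of each forward and each inverse $\Psi$-evaluation explicit. I then define the \textbf{bad events}: a \emph{hit}, where some adversarial $f$- or $g$-query equals an internal wire of some $\Psi$-query; and a \emph{collision}, where two distinct $\Psi$-queries agree on an internal wire or collide under the input/output key-masking. Conditioned on the complement of all bad events, a lazy-sampling coupling of $f$ and $g$ shows that the real and ideal transcripts are identically distributed: every round-function value consumed by a $\Psi$-answer is then fresh and uniform, so the $\Psi$-answers are uniform on $G^2$ subject only to permutivity, while the directly observed $f,g$ answers are uniform in both worlds. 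Hence the advantage is at most the total probability of the bad events.

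It remains to bound these events and attribute them to the terms of the claimed bound. Each hit of an $f$-query against one of the two $f$-wires of a $\Psi$-query (forward or inverse) has probability $|G|^{-1}$ over the randomness of $k$ and the fresh round values, contributing $4q_fq_c|G|^{-1}$, and symmetrically $4q_gq_c|G|^{-1}$ for $g$; the wire-freshness and input-masking events internal to the $\Psi$-queries themselves account for the remaining $(2q_c^2+2q_c^2-2q_c)|G|^{-1}$ terms. A pairwise collision between two $\Psi$-queries requires their key-masked inputs or outputs to agree, which by union-bounding over the two halves of $G^2$ and over the input and output sides occurs with probability at most $2|G|^{-1}+|G|^{-2}$ per pair, yielding the $2\binom{q_c}{2}(2|G|^{-1}+|G|^{-2})$ term. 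Summing gives the stated bound.

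The main obstacle is the bookkeeping forced by \emph{super} pseudorandomness: since $\mathcal{A}$ may interleave forward and inverse $\Psi$-queries and choose its $f,g$ inputs adaptively, the internal wires cannot be assumed independent and uniform a priori. I would instead argue inductively over the ordered query sequence, showing that, absent a previously triggered bad event, each newly required round-function input is fresh, so its $f$- or $g$-value is uniform and independent of $\mathcal{A}$'s current view. Establishing this per-query freshness precisely --- handling the two-sided encipher/decipher chains and the possible coincidences among the four wires of a single evaluation --- is where the exact coefficients $2,4,4,2$ and the factor $2\binom{q_c}{2}(2|G|^{-1}+|G|^{-2})$ are pinned down.
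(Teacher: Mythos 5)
Your plan is correct and is essentially the paper's own argument: the paper proves Theorem~\ref{GentryRamzanMain} by the same transcript/bad-event method, merely organized as a chain of hybrids $\Psi \rightarrow \tilde{\Psi} \rightarrow \tilde{R} \rightarrow R$ (Lemmas~\ref{Lem37}--\ref{Lem43}), where your ``hits'' are its events $\textsf{BadG}(k)$ and \textbf{B4}--\textbf{B5}, your internal-wire collisions are \textbf{B1}--\textbf{B3}, and your permutation-consistency term is the inconsistency bound $\binom{q_c}{2}|G|^{-2}$ for $\tilde{R}$. The per-query freshness bookkeeping you flag as the main obstacle is exactly what the paper's ``possible and consistent $\mathcal{A}$-transcript'' formalism and the conditional identical-distribution lemmas (Lemma~\ref{NotBadPsitoBarPsi} and Lemma~\ref{Lem43}) carry out.
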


We may also rewrite our main theorem as the following:
\begin{thm}
\textbf{[Informal]} For any $4$-oracle adversary $\mathcal{A}$, with at most $q$ total queries, we have that the success probability of $\mathcal{A}$ distinguishing between $\Psi$ and a random oracle, is bounded by
\begin{align*}
2(3q^2-2q)|G|^{-1} + (q^2-q)|G|^{-2}.
\end{align*}
\end{thm}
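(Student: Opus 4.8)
The plan is to obtain this statement as a direct corollary of the preceding (detailed) theorem by consolidating its three separate query counts $q_c$, $q_f$, $q_g$ into the single total $q = q_c + q_f + q_g$. First I would rewrite the combinatorial factor using $2\binom{q_c}{2} = q_c^2 - q_c$, so that the second summand of the detailed bound becomes $2(q_c^2-q_c)|G|^{-1} + (q_c^2-q_c)|G|^{-2}$, and then collect like powers of $|G|^{-1}$ and $|G|^{-2}$. After noting that the $q_c^2$ contributions in the first parenthesis combine as $2q_c^2 + 2q_c^2 - 2q_c = 4q_c^2 - 2q_c$, the whole $|G|^{-1}$ coefficient collapses to $6q_c^2 + 4q_fq_c + 4q_gq_c - 4q_c$, while the $|G|^{-2}$ coefficient is exactly $q_c^2 - q_c$.

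The key step is then to eliminate $q_f$ and $q_g$. Using $q_f + q_g = q - q_c$, the cross terms factor as $4q_fq_c + 4q_gq_c = 4q_c(q-q_c)$, so the $|G|^{-1}$ coefficient simplifies to $2q_c^2 + 4q_cq - 4q_c$, a polynomial in the single variable $q_c$ once $q$ is fixed. Regarding $q_c$ as ranging over $[0,q]$ (equivalently, maximizing over the feasible simplex $q_c+q_f+q_g\le q$), I would check that this expression is monotonically increasing in $q_c$ for $q\ge 1$, since its derivative $4q_c + 4q - 4$ is nonnegative there; its maximum is therefore attained at the vertex $q_c = q$, $q_f = q_g = 0$, i.e. when the adversary spends its entire budget on the construction oracle. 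The $|G|^{-2}$ coefficient $q_c^2 - q_c$ is likewise increasing and maximal at $q_c = q$.

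Evaluating both coefficients at this worst case yields $(6q^2 - 4q)|G|^{-1} + (q^2-q)|G|^{-2}$, which is exactly $2(3q^2-2q)|G|^{-1} + (q^2-q)|G|^{-2}$, as claimed. I do not expect any genuine obstacle here: the content is entirely algebraic bookkeeping, and the only points that need care are confirming that the extremum of the cross terms over the simplex sits at the corner $q_c = q$ rather than in the interior — the monotonicity check above — and fixing the convention that ``$q$ total queries'' means $q = q_c + q_f + q_g$, which is precisely the convention under which the consolidated bound is tight.
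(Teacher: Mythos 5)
Your proposal is correct and amounts to the same argument as the paper's: both proofs are a direct algebraic consolidation of the bound in Theorem~\ref{GentryRamzanMain} under $q = q_c + q_f + q_g$, with the worst case sitting at $q_c = q$, $q_f = q_g = 0$ (the paper reaches $2q^2 - q$ by adding the nonnegative quantity $2(q_f+q_g)^2 + 2(q_f+q_g)q_c - (q_f+q_g)$ to complete the square, whereas you maximize $2q_c^2 + 4q_cq - 4q_c$ explicitly over the simplex, which is the same computation in a different order). Your coefficient bookkeeping, $(6q_c^2 + 4q_fq_c + 4q_gq_c - 4q_c)|G|^{-1} + (q_c^2 - q_c)|G|^{-2} \leq 2(3q^2-2q)|G|^{-1} + (q^2-q)|G|^{-2}$, checks out.
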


We note that this main result is due to \citep{GentryRamzan}, however, we consider a one-key group version and add details to their proof sketches.

\subsection{Outline of Paper}
In Section~\ref{GenDefns}, we state the assumptions for this paper. In Section~\ref{GenDefns}, we give definitions that hold for the paper in general, leaving specialized definitions to the various sections. In Section~\ref{Even-MansourSection}, we introduce the generalized EM scheme over arbitrary groups, stating and proving some results about it. In Section~\ref{FeistelSection}, we define the generalized Feistel cipher over arbitrary groups and prove a few small results about it. In Section~\ref{ImplementEM}, we consider an implementation of the generalized EM scheme using the generalized Feistel cipher as the public permutation. In Section~\ref{ConclusionSection}, we give our concluding remarks.

\section{General Definitions}\label{GenDefns}
In the following, we work in the Random Oracle Model such that we may assume the existence of a random permutation oracle on group elements. We let $\mathcal{G}$ be the family of all finite groups, e.g. a group $G\in\mathcal{G}$ is a pair of the set $G$ and operation $\cdot$ satisfying the group axioms. We also assume that for any group $G\in \mathcal{G}$, $|G|\leq 2^{poly(n)}$ for some $n\in \mathbb{N}$ and some polynomial $poly(\cdot)$.

We will need the concept of pseudorandom, which is also called indistinguishable from random, in several forms. On notation, we write $x\in_R X$ for an element chosen uniformly at random from a set $X$. In the following, we consider the positive integer $\lambda$ to be the security parameter, specified in unary per convention. We assume that for each $\lambda$ there exists a uniquely specified group $G(\lambda) = G_\lambda \in \mathcal{G}$ with size $|G_\lambda| \geq 2^\lambda$.

\begin{defn}
Let $F_{m,n}:G_\lambda \times G_m \rightarrow G_n$, for $G_m,G_n \in \mathcal{G}$, be an efficient, keyed function. $F_{m,n}$ is a \textbf{pseudorandom function (PRF)} if for all probabilistic distinguishers $\mathcal{A}$, limited to only polynomially many queries to the function-oracle, there exists a negligible function $negl(\cdot)$, such that
\begin{align*}
\left| \underset{k \in_R G_\lambda}{Pr}\left[ \mathcal{A}^{F_{m,n}(k,\cdot)}(\lambda)=1 \right] - \underset{\pi \in_R \mathfrak{F}_{G_m\rightarrow G_n}}{Pr}\left[ \mathcal{A}^{\pi(\cdot)}(\lambda)=1 \right] \right| \leq negl(\lambda),
\end{align*}
where $\mathfrak{F}_{G_m\rightarrow G_n}$ is the set of functions from $G_m$ to $G_n$.
\end{defn}

If $F:G \times G \rightarrow G$ is a pseudorandom function, we say that it is a \textbf{pseudorandom function on $G$}.

\begin{defn}
Let $P:G_\lambda \times G \rightarrow G$ be an efficient, keyed permutation. $P$ is a \textbf{pseudorandom permutation (PRP)} if for all probabilistic distinguishers $\mathcal{A}$, limited to only polynomially many queries to the permutation-oracle, there exists a negligible function $negl(\cdot)$, such that
\begin{align*}
\left| \underset{k \in_R G_\lambda}{Pr}\left[ \mathcal{A}^{P(k,\cdot)}(\lambda)=1 \right] - \underset{\pi \in_R \mathfrak{P}_{G\rightarrow G}}{Pr}\left[ \mathcal{A}^{\pi(\cdot)}(\lambda)=1 \right] \right| \leq negl(\lambda),
\end{align*}
where $\mathfrak{P}_{G\rightarrow G}$ is the set of permutations on $G$.
\end{defn}

\begin{defn}
Let $P:G_\lambda \times G \rightarrow G$ be an efficient, keyed permutation. $P$ is said to be a \textbf{super pseudorandom permutation (SPRP)} if for all probabilistic distinguishers $\mathcal{A}$, limited to only polynomially many queries to the permutation- and inverse permutation-oracles, there exists a negligible function $negl(\cdot)$, such that
\begin{align*}
\left| \underset{k \in_R G_\lambda}{Pr}\left[ \mathcal{A}^{P(k,\cdot),P^{-1}(k,\cdot)}(\lambda)=1 \right] - \underset{\pi \in_R \mathfrak{P}_{G\rightarrow G}}{Pr}\left[ \mathcal{A}^{\pi(\cdot),\pi^{-1}(\cdot)}(\lambda)=1 \right] \right| \leq negl(\lambda),
\end{align*}
where $\mathfrak{P}_{G\rightarrow G}$ is the set of permutations on $G$.
\end{defn}

A (super) pseudorandom permutation $P:G \times G \rightarrow G$ is said to be a \textbf{(super) pseudorandom permutation on $G$}.

\newpage
\section{Even-Mansour}\label{Even-MansourSection}
We first remark that the results in this section were initially proven in a project prior to the start of the thesis but were further worked on to complement this thesis. Thus we have chosen to include parts of it, while this inclusion accounts for the brevity in certain results. We begin by defining the one-key Even-Mansour scheme over arbitrary groups, which we will refer to as the Group EM scheme.

\begin{defn}
We define the \textbf{Group Even-Mansour scheme} to be the triple of a key generation algorithm, encryption algorithm, and decryption algorithm. The key generation algorithm takes as input the security parameter $1^\lambda$, fixes and outputs a group $G\in_R \mathcal{G}$ with $|G|\geq 2^\lambda$, and outputs a key $k\in_R G$. The encryption algorithm $E_k(m)$ takes as input the key $k$ and a plaintext $m\in G$ and outputs
\begin{align*}
E_k(m) = P(m\cdot k) \cdot k,
\end{align*}
where $P$ is the public permutation. The decryption algorithm $D_k(c)$ takes as input the key $k$ and a ciphertext $c\in G$ and outputs
\begin{align*}
D_k(c) = P^{-1}(c \cdot k^{-1}) \cdot k^{-1},
\end{align*}
where $P^{-1}$ is the inverse public permutation. This definition satisfies correctness.
\end{defn}

\subsection{Two Forms of Security for the Group EM Scheme}
In this subsection, we prove classical results about our new scheme. We do so by considering Even and Mansour's two notions of security: the Existential Forgery Problem and the Cracking Problem, the Cracking Problem being the stronger of the two.

\begin{defn}
In the \textbf{Existential Forgery Problem} (EFP), we consider the following game:
\begin{enumerate}
\item A group $G\in \mathcal{G}$ and a key $k\in_R G$ are generated.
\item The adversary $\mathcal{A}$ gets the security parameter, in unary, and the group $G$.
\item $\mathcal{A}$ receives oracle access to the $E_k, D_k, P,$ and $P^{-1}$ oracles.
\item $\mathcal{A}$ eventually outputs a pair $(m,c)$.
\end{enumerate}
If $E_k(m)=c$, and $(m,c)$ has not been queried before, we say that $\mathcal{A}$ succeeds.

In the \textbf{Cracking Problem} (CP), we consider the following game:
\begin{enumerate}
\item A group $G\in \mathcal{G}$ and a key $k\in_R G$ are generated.
\item The adversary $\mathcal{A}$ gets the security parameter, in unary, and the group $G$.
\item $\mathcal{A}$ is presented with $E_k(m_0)=c_0\in_R G$.
\item $\mathcal{A}$ receives oracle access to the $E_k, D_k, P,$ and $P^{-1}$ oracles, but the decryption oracle outputs $\perp$ if $\mathcal{A}$ queries $c=c_0$.
\item $\mathcal{A}$ outputs a plaintext $m$.
\end{enumerate}
If $D_k(c_0)=m$, then we say that $\mathcal{A}$ succeeds. The \textbf{success probability} is the probability that on a uniformly random chosen encryption $c_0 = E_k(m_0)$, $\mathcal{A}$ outputs $m_0$.
\end{defn}

Even and Mansour show that polynomial-time EFP security infers poly\-nomial-time CP security. There are no limiting factors prohibiting the problems and inference result from being employed on groups. In fact, there is nothing disallowing the use of the same proof of the EFP security for the EFP security of the one-key EM scheme, as noted in \citep{DKS}, which we therefore omit. Indeed, by redefining notions in the \citep{EM} proof to take into account that we are working over a not necessarily abelian group, we are able to prove that the Group EM scheme satisfies the EFP notion of security, specifically the following.

\begin{thm}\label{MainEFP}
Assume $P\in_R\mathfrak{P}_{G\rightarrow G}$ and let the key $k\in_R G$. For any probabilistic adversary $\mathcal{A}$, the success probability of solving the EFP is bounded by
\begin{align*}
Succ(\mathcal{A}) = Pr_{k,P}\left[ EFP(\mathcal{A})=1\right] = O\left( \frac{st}{|G|} \right),
\end{align*}
where $s$ is the number of $E/D$-queries and $t$ is the number of $P/P^{-1}$-queries, i.e. the success probability is negligible.
\end{thm}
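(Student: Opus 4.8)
The plan is to recast a successful forgery as a statement purely about the unknown permutation $P$ and then to isolate the single event that lets the adversary exploit it. A valid pair $(m,c)$ means $P(m\cdot k)=c\cdot k^{-1}$, so $\mathcal{A}$ must both pin down the value of $P$ at the shifted point $m\cdot k$ and know $k$ in order to strip off the outer multiplication. The adversary gathers information about $P$ from exactly two sources: the $P/P^{-1}$-oracle, which reveals at most $t$ genuine point-value pairs $(x_j,P(x_j))$, and the $E/D$-oracle, which internally reveals at most $s$ pairs $(m_i\cdot k,\,P(m_i\cdot k))$ but exposes them to $\mathcal{A}$ only through the key-masked pairs $(m_i,E_k(m_i))$. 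Since $k$ masks both the input and the output of these indirect evaluations, the only way $\mathcal{A}$ can link its direct knowledge of $P$ to the encryption oracle, and thereby recover $k$, is to make a direct query $x_j$ that coincides with some shifted point $m_i\cdot k$.

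I would formalize this with an identical-until-bad comparison of two games. In the real game a single random $P$ and key $k$ drive all four oracles; in the ideal game the $P/P^{-1}$-oracle is answered by a random permutation while the $E/D$-oracle is answered by an \emph{independent} random permutation $Q$, and the key $k$ is still drawn but never touches the oracle answers. Let BAD be the event that some $E/D$-query $m_i$ and some $P/P^{-1}$-query $x_j$ satisfy $m_i\cdot k = x_j$, together with the analogous collision on the output side. The coupling claim is that, as long as BAD has not occurred, the two games produce identically distributed transcripts: absent an input collision the $s$ shifted evaluation points $\{m_i\cdot k\}$ are disjoint from the $t$ direct points $\{x_j\}$, so the values of $P$ there are fresh, and the masked outputs $P(m_i\cdot k)\cdot k$ are distributed exactly as the outputs of an independent permutation $Q$.

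It then remains to bound the two resulting contributions. In the ideal game the entire transcript is independent of $k$, so for each of the at most $st$ pairs $(i,j)$ the input-collision event $m_i\cdot k = x_j$ is simply $k = m_i^{-1}\cdot x_j$, which holds with probability $1/|G|$; adding the symmetric output collisions and taking a union bound gives $\Pr[\text{BAD}] \le 2st/|G|$. Still in the ideal game, a new pair $(m,c)$ can satisfy $Q(m)=c$ only when $m$ is not already constrained by one of the at most $s$ prior $E/D$-interactions, in which case $Q(m)$ is uniform over at least $|G|-s$ unused values, so the ideal forgery probability is at most $1/(|G|-s)$. By the fundamental lemma of game playing the real forgery probability exceeds the ideal one by at most $\Pr[\text{BAD}]$, yielding
\[
Succ(\mathcal{A}) \;\le\; \frac{1}{|G|-s} + \frac{2st}{|G|} \;=\; O\!\left(\frac{st}{|G|}\right).
\]

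The step I expect to require the most care is the coupling itself under adaptivity: because $x_j$ may depend on earlier oracle answers, one must argue that up to the first collision the adversary's view genuinely carries no information about $k$, which is precisely what the independence of the ideal transcript from $k$ delivers. The non-abelian setting, by contrast, costs almost nothing: since the key is always multiplied on a fixed side, the correctness identity $D_k(E_k(m))=m$ and the collision equation $k=m_i^{-1}\cdot x_j$ each have a unique solution with no appeal to commutativity, so the only real obligation is to track left versus right multiplication consistently throughout.
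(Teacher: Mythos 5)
Your proposal is correct and follows essentially the same route the paper relies on: the paper omits this proof by deferring to Even--Mansour's original argument, but its appendix proof of the SPRP bound (Theorem~\ref{PseudoEMbounded}) uses exactly your structure --- an identical-until-bad comparison against an ideal game with an independent random permutation, a \textsf{BAD} event consisting of the input collision $m_i\cdot k=x_j$ and the output collision $c_i\cdot k^{-1}=y_j$, independence of the ideal transcript from $k$ giving $\Pr[\textsf{BAD}]\leq 2st/|G|$, and a trivial forgery bound in the ideal game. The only cosmetic gap is that you should treat the final output pair $(m,c)$ as one extra encryption query so that the success event is transcript-measurable and the switching lemma applies, which changes nothing in the asymptotics.
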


By the Even and Mansour inference result, we get the corollary below.

\begin{cor}
Assume $P\in_R\mathfrak{P}_{G\rightarrow G}$ and let the key $k\in_R G$. For any probabilistic polynomial-time (PPT) adversary $\mathcal{A}$, the success probability of solving the Cracking Problem is negligible.
\end{cor}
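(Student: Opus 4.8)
The plan is to reduce the Cracking Problem to the Existential Forgery Problem, so that the bound of Theorem~\ref{MainEFP} transfers directly. Concretely, I would show that any PPT cracking adversary $\mathcal{A}$ yields a forgery adversary $\mathcal{B}$ whose success probability is at least that of $\mathcal{A}$, up to a negligible additive loss. Since Theorem~\ref{MainEFP} forces $Succ(\mathcal{B}) = O(st/|G|)$ with $s,t$ polynomial in the PPT setting and $|G|\geq 2^\lambda$, the cracking success must then be negligible as well.

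First I would construct $\mathcal{B}$. Given oracle access to $E_k, D_k, P, P^{-1}$, the forger $\mathcal{B}$ samples a uniformly random challenge $c_0 \in_R G$; because $E_k$ is a bijection, this is distributed exactly as a genuine encryption of a uniform plaintext $m_0$, matching the CP challenge distribution. $\mathcal{B}$ then runs $\mathcal{A}(1^\lambda, G, c_0)$, relaying every oracle query of $\mathcal{A}$ to its own oracles, with the single exception that a query $D_k(c_0)$ is answered with $\perp$, faithfully simulating the restricted decryption oracle of the CP game. When $\mathcal{A}$ halts with output $m$, the forger $\mathcal{B}$ outputs the pair $(m, c_0)$.

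Next I would analyse correctness. Whenever $\mathcal{A}$ cracks successfully we have $D_k(c_0) = m$, equivalently $E_k(m) = c_0$, so $(m,c_0)$ satisfies the validity condition of the EFP. It remains to verify \emph{freshness}. Since $\mathcal{B}$ intercepts and never forwards the query $D_k(c_0)$, the pair cannot have been revealed through decryption. The only remaining way the pair could fail to be new is that $\mathcal{A}$ itself forward-queried $E_k(m)$ and observed $c_0$; but a one-line computation shows $E_k(m') = c_0$ holds only for the unique preimage $m' = D_k(c_0) = m_0$, so this event amounts to $\mathcal{A}$ locating the uniform element $m_0$ by forward search, which succeeds with probability at most $O(st/|G|)$ and is absorbed as the negligible additive loss.

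The step I expect to be the main obstacle is exactly this freshness bookkeeping: separating the honest cracking success, which maps to a fresh forgery, from the degenerate case in which $\mathcal{A}$ stumbles onto $m_0$ through a forward encryption query and thereby spoils novelty. Once that case is shown to be self-bounding, the reduction yields $Succ_{CP}(\mathcal{A}) \leq Succ(\mathcal{B}) + O(st/|G|)$, and invoking Theorem~\ref{MainEFP} together with the fact that a PPT adversary makes only polynomially many queries $s$ and $t$ against a group of size $|G| \geq 2^\lambda$ gives negligibility, completing the proof.
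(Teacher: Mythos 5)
Your route---reducing the Cracking Problem to the Existential Forgery Problem so that Theorem~\ref{MainEFP} does the work---is exactly the paper's route: the paper gives no proof of its own and simply invokes ``the Even and Mansour inference result,'' which is precisely this CP-to-EFP reduction. Your simulation of the CP challenger (sample $c_0\in_R G$, exploit bijectivity of $E_k$ to match the challenge distribution, relay oracles, suppress $D_k(c_0)$) is the right construction, and your identification of freshness as the crux is accurate.

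The one genuine soft spot is the bound you assign to the bad event. You claim that $\mathcal{A}$ forward-querying $E_k$ at $m_0$ ``amounts to locating the uniform element $m_0$ by forward search, which succeeds with probability at most $O(st/|G|)$,'' but this does not follow from uniformity alone: $m_0$ is determined by $c_0$, $k$, and $P$, all of which $\mathcal{A}$ can probe through its $P/P^{-1}$ oracles, so conditioned on $\mathcal{A}$'s view $m_0$ need not remain uniform, and justifying the bound would require rerunning essentially the analysis behind Theorem~\ref{MainEFP} rather than citing it. The standard repair, which is how Even and Mansour's inference lemma is actually proved, is to have $\mathcal{B}$ guess an index $i\in\{1,\dots,s+1\}$ and output its forgery just \emph{before} forwarding $\mathcal{A}$'s $i$'th encryption query (or after $\mathcal{A}$'s final output if $i=s+1$): whichever query first touches $(m_0,c_0)$ then yields a fresh valid pair, giving $Succ_{EFP}(\mathcal{B})\geq Succ_{CP}(\mathcal{A})/(s+1)$ with no additive term to justify. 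Since $s$ is polynomial for a PPT adversary, negligibility of the EFP success still forces negligibility of the CP success, so your conclusion stands once this multiplicative-loss version replaces the unproven additive bound.
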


As Even and Mansour also note, the above results may be extended to instances where the permutation is a pseudorandom permutation by a simple reduction. Hence, we get the following two results.

\begin{thm}
Assume $P$ is a pseudorandom permutation on $G\in \mathcal{G}$ and let the key $k\in_R G$. For any probabilistic adversary $\mathcal{A}$ with only polynomially many queries to its oracles, the success probability of solving the Existential Forgery Problem is negligible.
\end{thm}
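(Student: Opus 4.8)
The plan is to prove this by a standard reduction to the pseudorandomness of $P$, using Theorem~\ref{MainEFP} as the idealized baseline. Suppose, toward a contradiction, that there is a query-bounded adversary $\mathcal{A}$ that wins the EFP with non-negligible probability $\epsilon(\lambda)$ when the public permutation is the keyed permutation $P(k',\cdot)$. I would build a distinguisher $\mathcal{D}$ that, given oracle access to an unknown permutation $\mathcal{O}$ (either $P(k',\cdot)$ for a hidden $k'\in_R G_\lambda$ or a truly random $\pi\in_R\mathfrak{P}_{G\to G}$), simulates the entire EFP game for $\mathcal{A}$ and uses $\mathcal{A}$'s forgery to decide which oracle it was given.

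Concretely, $\mathcal{D}$ first samples its own Even-Mansour key $k\in_R G$. It answers $\mathcal{A}$'s four kinds of queries as follows: a query $P(x)$ is forwarded directly to $\mathcal{O}(x)$, and $P^{-1}(y)$ to $\mathcal{O}^{-1}(y)$; an encryption query $E_k(m)$ is answered by $\mathcal{O}(m\cdot k)\cdot k$, and a decryption query $D_k(c)$ by $\mathcal{O}^{-1}(c\cdot k^{-1})\cdot k^{-1}$, exactly mirroring the scheme's definition. Throughout, $\mathcal{D}$ records every pair passing through the $E/D$ interface so that it can later enforce the freshness condition. When $\mathcal{A}$ halts with output $(m,c)$, $\mathcal{D}$ recomputes $\mathcal{O}(m\cdot k)\cdot k$, outputs $1$ if this equals $c$ and $(m,c)$ was not previously queried, and outputs $0$ otherwise. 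Since each $\mathcal{A}$-query induces at most one $\mathcal{O}$/$\mathcal{O}^{-1}$-query and the final check costs one more, $\mathcal{D}$ remains polynomially query-bounded whenever $\mathcal{A}$ is.

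The two cases then separate cleanly. If $\mathcal{O}=P(k',\cdot)$, then $\mathcal{D}$'s simulation is identical to the real EFP game against the keyed permutation, so $\mathcal{D}$ outputs $1$ with probability exactly $\epsilon(\lambda)$. If $\mathcal{O}=\pi$ is a uniformly random permutation, then $\mathcal{D}$ is running the EFP game with a truly random public permutation, and Theorem~\ref{MainEFP} bounds $\mathcal{A}$'s winning probability, and hence the probability that $\mathcal{D}$ outputs $1$, by $O(st/|G|)$, which is negligible. Subtracting, $\mathcal{D}$'s distinguishing advantage is at least $\epsilon(\lambda) - O(st/|G|)$, which is non-negligible by assumption, contradicting the (super) pseudorandomness of $P$.

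I expect the only real subtlety to be the oracle interface rather than any calculation. Because the EFP game hands $\mathcal{A}$ both $P^{-1}$ and $D_k$, the distinguisher must be able to query $\mathcal{O}^{-1}$; this is precisely why the reduction appeals to the inverse-oracle (super pseudorandom) property, matching the informal statement above, and it is the one place where plain forward-only pseudorandomness would not obviously suffice. The remaining care is bookkeeping: ensuring that the freshness condition $\mathcal{A}$ must satisfy in the simulated game coincides exactly with the freshness condition in the real game, so that the winning event is preserved verbatim in both branches of the reduction.
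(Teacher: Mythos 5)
Your reduction is exactly the ``simple reduction'' the paper invokes without writing out: simulate the EFP game with the unknown oracle standing in for $P$, bound the random-permutation branch by Theorem~\ref{MainEFP}, and conclude that a non-negligible forgery probability yields a query-bounded distinguisher. Your closing observation is also correct and worth keeping: since the EFP game grants $\mathcal{A}$ access to $P^{-1}$ and $D_k$, the distinguisher must query $\mathcal{O}^{-1}$, so the hypothesis genuinely needed is super pseudorandomness (as in the paper's informal statement of this result) rather than the plain PRP property asserted in the formal theorem.
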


\begin{cor}
Assume $P$ is a pseudorandom permutation on $G\in \mathcal{G}$ and let the key $k\in_R G$. For any probabilistic polynomial-time (PPT) adversary $\mathcal{A}$, the success probability of solving the Cracking Problem is negligible.
\end{cor}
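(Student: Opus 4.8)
The plan is to obtain this corollary by composing two ingredients already in hand: the preceding theorem, which establishes that the Existential Forgery Problem is hard for the Group EM scheme whenever $P$ is a pseudorandom permutation, and the Even and Mansour inference result, which reduces the Cracking Problem to the Existential Forgery Problem \emph{within a single instance of the scheme}. The key advantage of routing the argument through the inference result, rather than attempting a direct reduction of the CP to the PRP-distinguishing game, is that the inference reduction never changes the underlying permutation: the forging adversary we build simply relays the very same $E_k, D_k, P, P^{-1}$ oracles it is given to the cracking adversary it simulates. Consequently no inversion of the pseudorandom permutation is required of any simulator, and the bare PRP assumption (as opposed to the super-pseudorandom one) suffices.

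Concretely, I would first recast the inference reduction in the present group setting. Given a PPT cracking adversary $\mathcal{A}$, construct a forging adversary $\mathcal{A}'$ that receives the group $G$ and access to $E_k, D_k, P, P^{-1}$, samples (or receives) a uniformly random challenge ciphertext $c_0 = E_k(m_0)$ exactly as prescribed by the CP game, and runs $\mathcal{A}$ on this challenge, faithfully forwarding all four oracles while answering any query $D_k(c_0)$ with $\perp$ as the CP game requires. When $\mathcal{A}$ halts with a candidate plaintext $m$, the forger outputs the pair $(m, c_0)$. On every run in which $\mathcal{A}$ succeeds we have $D_k(c_0) = m$, equivalently $E_k(m) = c_0$, so the pair $(m, c_0)$ satisfies correctness and is therefore a valid forgery as soon as it is fresh.

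The one point requiring care is freshness, and this is where I expect the only real obstacle to lie. The pair $(m, c_0)$ can fail to be ``not queried before'' only if $\mathcal{A}$ had itself submitted an encryption query $E_k(m) = c_0$; the CP game already forbids the complementary query $D_k(c_0)$. Since $c_0$ is a uniformly random ciphertext essentially independent of the at most $s$ encryption queries $\mathcal{A}$ makes, this bad event occurs with probability at most $O(s/|G|)$, which is negligible. Excluding it, every successful run of $\mathcal{A}$ yields a fresh valid forgery, so $Succ(\mathcal{A}) \le Succ(\mathcal{A}') + O(s/|G|)$, with $\mathcal{A}'$ running in PPT and making only a constant-factor more queries than $\mathcal{A}$. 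The preceding theorem bounds $Succ(\mathcal{A}')$ by a negligible function, and adding the negligible freshness term keeps the total negligible. Hence $Succ(\mathcal{A})$ is negligible, as claimed. The remaining steps — checking that the simulated oracle answers are distributed exactly as in the genuine CP game and that the time and query budgets are preserved — are routine.
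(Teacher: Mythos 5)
Your overall route is exactly the paper's: the corollary is obtained by applying the Even--Mansour inference result (CP reduces to EFP within the same instance, with the same oracles) to the preceding theorem, which the paper does without spelling out the reduction. Your reconstruction of that reduction is essentially right, and you correctly identify that no inversion of $P$ is needed, so the plain PRP assumption suffices.

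There is, however, one step whose justification as written would not survive scrutiny: the freshness bound. You argue that the bad event --- $\mathcal{A}$ having made an encryption query $E_k(m)=c_0$ before outputting $m$ --- has probability $O(s/|G|)$ because $c_0$ is ``essentially independent'' of $\mathcal{A}$'s queries. It is not: $\mathcal{A}$ receives $c_0$ \emph{before} it makes any queries, so its encryption queries may be chosen adaptively as a function of $c_0$, precisely in an attempt to hit the preimage $m_0$. Bounding the probability of hitting $m_0$ by $s/|G|$ as if the queries were oblivious is circular --- the claim that $\mathcal{A}$ cannot locate $m_0$ by querying $E_k$ is itself a form of the cracking hardness you are trying to prove. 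The standard repair (and the one implicit in Even and Mansour's original inference lemma) is to charge this event to the EFP as well: if $\mathcal{A}$ queries $E_k$ on $m_0$ with non-negligible probability, build a second forger that guesses an index $i\in\{1,\dots,s\}$ uniformly, runs $\mathcal{A}$ up to just before its $i$'th encryption query $m_i$, and outputs $(m_i,c_0)$ \emph{without} making that query; this forgery is fresh and valid whenever the guess identifies the first query equal to $m_0$, so $\Pr[\text{bad event}] \le s\cdot Succ(\mathcal{A}'')$, which is negligible since $s$ is polynomial. With that substitution for your independence claim, the proof goes through.
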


\subsection{Pseudorandomness Property of the Group EM Scheme}
Although the above notions of security are strong, we are more interested in any pseudorandomness property the Group EM scheme offers us. Kilian and Rogaway \citep{Kilr} show that the one-key EM scheme satisfies the pseudorandom permutation property, i.e. with only an encryption oracle and the permutation oracles, the EM scheme is indistinguishable from random to any adversary with only polynomially many queries to its oracles. We note that they only show the pseudorandomness property, but state in their discussion section that their proof may be adapted to include a decryption oracle, i.e. that the one-key EM scheme satisfies the super pseudorandom permutation property. Having done the analysis with the decryption oracle, over an arbitrary group, we concur. However, we were also able to generalize the \citep{Kilr} proof to a one-key construction. This not entirely remarkable as the key $k$ will usually be different from its group inverse, hence we were able to use the same proof, but with adjustments to the games and their analysis. The proof is given in the appendix for posterity. For completeness, we present the result as the following theorem.

\begin{thm}
Assume $P\in_R\mathfrak{P}_{G\rightarrow G}$ and let the key $k\in_R G$. For any probabilistic adversary $\mathcal{A}$, limited to polynomially many $E/D$- and $P/P^{-1}$-oracle queries, the adversarial advantage of $\mathcal{A}$ is bounded by
\begin{align*}
\text{Adv}(\mathcal{A}) \defeq \left| Pr\left[ \mathcal{A}_{E_k,D_k}^{P,P^{-1}} = 1 \right] - Pr\left[\mathcal{A}_{\pi,\pi^{-1}}^{P,P^{-1}} = 1 \right]\right| = \mathcal{O}\left(\frac{st}{|G|}\right).
\end{align*}
where $s$ is the number of $E/D$-queries and $t$ is the number of $P/P^{-1}$-queries, i.e. the success probability is negligible.
\end{thm}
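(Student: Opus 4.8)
The plan is to follow the game-playing methodology of Bellare and Rogaway, setting up the real and ideal experiments as games that are \emph{identical until bad} and then bounding the probability of the flag $\mathbf{bad}$ being set. In the real game, $\mathcal{A}$'s $E/D$-queries are answered through the public permutation as $E_k(m) = P(m\cdot k)\cdot k$ and $D_k(c)=P^{-1}(c\cdot k^{-1})\cdot k^{-1}$, while its $P/P^{-1}$-queries are answered directly from the \emph{same} permutation $P$; in the ideal game the $E/D$-queries are instead answered by an independent uniform permutation $\pi$ (and $\pi^{-1}$), with $P/P^{-1}$ still answered by $P$. I would introduce an intermediate game in which both $P$ and the permutation underlying $E/D$ are sampled \emph{lazily} via partial tables, so that every not-yet-constrained output is drawn uniformly from the unused range. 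The point of lazy sampling is to expose exactly the moment at which the real and ideal behaviours can diverge.

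First I would identify, for each oracle call, the domain- and range-points of $P$ that it \emph{touches}: an $E(m)$ query touches the domain point $m\cdot k$, a $D(c)$ query touches the range point $c\cdot k^{-1}$, a $P(x)$ query touches the domain point $x$, and a $P^{-1}(y)$ query touches the range point $y$. Writing $S_{ED}$ for the points touched by the $E/D$-queries and $S_{PP}$ for those touched by the $P/P^{-1}$-queries, I would declare $\mathbf{bad}$ to occur precisely when $S_{ED}\cap S_{PP}\neq\varnothing$. The crucial claim is that so long as $\mathbf{bad}$ has not fired, each fresh $E/D$-response is a uniform draw from the unused range and hence is distributed exactly as in the ideal game; in particular the adversary's transcript is independent of the key $k$ up to this point. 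By the fundamental lemma of game playing this reduces the advantage to $\mathrm{Adv}(\mathcal{A})\le \Pr[\mathbf{bad}]$.

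To bound $\Pr[\mathbf{bad}]$ I would use that, conditioned on $\mathbf{bad}$ not having fired, $k$ is still uniform and independent of all queried values, so I may defer the sampling of $k$ to the end and union bound over all touching pairs. A collision of the form $m_i\cdot k = x_j$ (an $E$-input against a $P$-input) forces $k=m_i^{-1}x_j$, which holds with probability $1/|G|$; the inverse-side collision $c_i\cdot k^{-1}=y_j$ forces $k=y_j^{-1}c_i$, again probability $1/|G|$; and the mixed collisions against a freshly sampled point such as $P^{-1}(y_j)$ likewise occur with probability at most $1/|G|$ over the joint choice of $P$ and $k$. Since $|S_{ED}|=O(s)$ and $|S_{PP}|=O(t)$, there are $O(st)$ such pairs, yielding $\Pr[\mathbf{bad}]=O(st/|G|)$ and hence the claimed bound.

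The main obstacle I anticipate is the bookkeeping needed to make the \emph{identical until bad} argument airtight in the non-abelian setting while simultaneously tracking the forward direction ($E$ versus $P$) and the inverse direction ($D$ versus $P^{-1}$), together with the mixed cross-collisions (e.g.\ an $E$-input landing on the preimage $P^{-1}(y_j)$ implicitly fixed by a $P^{-1}$-query). Each case must be shown to contribute only $O(1/|G|)$ per pair, and, more delicately, one must verify that the adversary cannot adaptively steer its queries so as to raise any single collision probability above $1/|G|$ — this is exactly what the independence of the transcript from $k$ (valid only before $\mathbf{bad}$) secures, and getting that conditioning right is the technical heart of the proof.
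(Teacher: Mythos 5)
Your proposal is correct and follows essentially the same route as the paper: the paper's appendix proof is exactly a lazy-sampled, identical-until-bad game argument (Games X and R), with the bad event being a collision between the $E/D$-touched points $m_i\cdot k$, $c_i\cdot k^{-1}$ and the $P/P^{-1}$-touched points $x_j$, $y_j$, followed by deferring the choice of $k$ to the end (Game R$'$) and a union bound giving $2st/|G|$. The bookkeeping you flag as the anticipated obstacle is precisely what the paper spends its auxiliary appendices on (the well-definedness of the merged partial permutation $\widehat{P}$ and the equivalence of the flag events across games).
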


Stated simply,

\begin{thm}
For any probabilistic adversary $\mathcal{A}$, limited to polynomially many $E/D$- and $P/P^{-1}$-oracle queries, the Group EM scheme over a group $G$ is a super pseudorandom permutation.
\end{thm}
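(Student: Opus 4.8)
The plan is to derive this statement as an immediate consequence of the preceding theorem, whose concrete advantage bound already does all of the substantive work. First I would recall the definition of a super pseudorandom permutation: the distinguisher $\mathcal{A}$, operating in the Random Oracle Model with access to the public permutation oracles $P$ and $P^{-1}$, must separate the pair $(E_k, D_k)$ from a uniformly random pair $(\pi,\pi^{-1})$, and for the construction to qualify as an SPRP this separation advantage must be negligible in $\lambda$. The quantity to be controlled is therefore exactly
\[
\text{Adv}(\mathcal{A}) = \left| Pr\left[\mathcal{A}_{E_k,D_k}^{P,P^{-1}} = 1\right] - Pr\left[\mathcal{A}_{\pi,\pi^{-1}}^{P,P^{-1}} = 1\right]\right|,
\]
which the previous theorem bounds by $\mathcal{O}(st/|G|)$, with $s$ the number of $E/D$-queries and $t$ the number of $P/P^{-1}$-queries.

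Second, I would invoke the hypothesis that $\mathcal{A}$ makes only polynomially many oracle queries. This forces $s = poly(\lambda)$ and $t = poly(\lambda)$, so that the product $st$ is itself bounded by some polynomial in $\lambda$. Combining this with the standing assumption that $|G| \geq 2^\lambda$ yields
\[
\text{Adv}(\mathcal{A}) = \mathcal{O}\left(\frac{st}{|G|}\right) \leq \mathcal{O}\left(\frac{poly(\lambda)}{2^\lambda}\right),
\]
and the right-hand side is a negligible function of $\lambda$. Since this holds for every admissible $\mathcal{A}$, the Group EM scheme meets the SPRP definition with $E_k$ playing the role of the keyed permutation.

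The only point requiring care --- and hence the place where I would be most deliberate --- is not a calculation but a matching of conventions: the abstract SPRP definition stated earlier is phrased for a bare keyed permutation $P(k,\cdot)$, whereas here the object under scrutiny is the composite $E_k$ while $P, P^{-1}$ remain available to the distinguisher as public oracles. I would therefore make explicit that the relevant indistinguishability game is precisely the one captured by the advantage expression above, and that the additional access to the public permutation oracles only strengthens the adversary, so that a negligible bound in this richer model certifies super pseudorandomness in the intended sense. Everything else is the routine observation that a polynomial divided by an exponential is negligible, so I do not anticipate any genuine obstacle beyond this bookkeeping.
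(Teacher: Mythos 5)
Your proposal is correct and matches the paper's treatment: the paper presents this theorem as an immediate restatement ("Stated simply") of the preceding quantitative bound $\text{Adv}(\mathcal{A}) = \mathcal{O}(st/|G|)$, which becomes negligible exactly as you argue, since $s,t$ are polynomial in $\lambda$ and $|G| \geq 2^\lambda$. Your extra remark about reconciling the abstract SPRP definition with the richer model in which $P,P^{-1}$ are public oracles is a sensible clarification that the paper leaves implicit.
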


By removing the decryption oracle, we get the following corollary:

\begin{cor}
For any probabilistic adversary $\mathcal{A}$, limited to polynomially many $E$- and $P/P^{-1}$-oracle queries, the Group EM scheme over a group $G$ is a pseudorandom permutation.
\end{cor}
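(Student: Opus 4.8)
The plan is to derive this corollary as an immediate consequence of the preceding super pseudorandomness theorem, via the elementary observation that a distinguisher in the PRP game is nothing but a distinguisher in the SPRP game that declines to use its inverse (decryption) oracle. First I would fix an arbitrary PRP adversary $\mathcal{A}$ that makes $s$ queries to the encryption oracle and $t$ queries to the $P/P^{-1}$ oracles, whose advantage I wish to bound, namely
\[
\text{Adv}_{\text{PRP}}(\mathcal{A}) = \left| Pr\left[ \mathcal{A}_{E_k}^{P,P^{-1}} = 1\right] - Pr\left[\mathcal{A}_{\pi}^{P,P^{-1}} = 1\right]\right|.
\]

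Next I would build an SPRP adversary $\mathcal{A}'$ that runs $\mathcal{A}$ internally, relaying each encryption query and each $P/P^{-1}$ query to its own corresponding oracles and returning the answers verbatim, while never issuing a single query to the decryption oracle it is granted. Because $\mathcal{A}'$ forwards answers faithfully and leaves its extra oracle untouched, the view it presents to $\mathcal{A}$ in the real world (the keyed $E_k$ together with $P,P^{-1}$) is distributed exactly as $\mathcal{A}$'s view in the real PRP game, and likewise in the ideal world (a truly random permutation $\pi$ together with $P,P^{-1}$). Consequently $\mathcal{A}'$ outputs $1$ with precisely the same probability as $\mathcal{A}$ in each world, so that $\text{Adv}_{\text{SPRP}}(\mathcal{A}') = \text{Adv}_{\text{PRP}}(\mathcal{A})$; moreover $\mathcal{A}'$ makes a total of $s$ queries to the encryption and decryption oracles (all of which are encryption queries) and $t$ queries to the permutation oracles.

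Finally I would invoke the previous theorem, which bounds the SPRP advantage of \emph{any} such adversary by $\mathcal{O}(st/|G|)$. This yields $\text{Adv}_{\text{PRP}}(\mathcal{A}) = \mathcal{O}(st/|G|)$; since $\mathcal{A}$ is limited to polynomially many queries we have $s, t = \text{poly}(\lambda)$, and since the key-generation algorithm guarantees $|G| \geq 2^\lambda$, the quantity $\mathcal{O}(st/|G|)$ is negligible in $\lambda$. As $\mathcal{A}$ was arbitrary, the Group EM scheme over $G$ is a pseudorandom permutation. I do not anticipate a genuine obstacle here: the entire content is the embedding of the PRP game into the SPRP game, and the only point needing a line of justification is that suppressing the decryption oracle leaves the joint distribution of the adversary's transcript unchanged, which holds because the forward oracle and the public permutation oracles shared by the two games are sampled identically in both.
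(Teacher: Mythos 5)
Your proposal is correct and matches the paper's intent exactly: the paper derives this corollary in one line (``by removing the decryption oracle'') from the preceding SPRP theorem, which is precisely the embedding of the PRP distinguisher into the SPRP game that you spell out. Your additional justification that suppressing the unused inverse oracle leaves the transcript distribution unchanged is the right (and only) point needing care, and you handle it correctly.
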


\begin{rem}
We see that in the group $((\mathbb{Z}/2\mathbb{Z})^n,\oplus)$, our Group EM scheme reduces to the one-key EM scheme given in \citep{DKS}. The proof given in \citep{DKS} proves the security of the scheme, and the proof given in \citep{Kilr} proves the pseudorandomness, equivalently to our claims.
\end{rem}

It can be proven that a multiple round Group EM scheme is an SPRP because the security only depends on the last round, which is also an SPRP.

\subsection{Slide Attack}
We would like to show that the security bound that we have found above is optimal, so we slightly alter the simple optimal attack on the Single-Key Even-Mansour cipher as constructed in \citep{DKS}. The original version works for abelian groups with few adjustments and \citep{DKS} also present another slide attack against a modular addition DESX construction.

Consider the one-key Group Even-Mansour cipher
\begin{align*}
E(x) = P(x \cdot k) \cdot k,
\end{align*}
over a group $G$ with binary operation $\cdot$, where $P$ is a publicly available permutation oracle, $x\in G$, and $k\in_R G$. Define the following values:
\begin{align*}
x=x, \hspace*{5pt} y=x \cdot k, \hspace*{5pt} z= P(y), \hspace*{5pt} w=E(x)=P(x \cdot k) \cdot k.
\end{align*}
We hereby have that $w\cdot y^{-1} = z \cdot x^{-1} $. Consider the attack which follows.
\begin{enumerate}
\item For $d = \sqrt{|G|}$ arbitrary values $x_i\in G$, $i=1,\ldots, d$, and $d$ arbitrary values $y_i\in G$, $i=1,\ldots, d$, query the $E$-oracle on the $x_i$'s and the $P$-oracle on the $y_i$'s. Store the values in a hash table as
\begin{align*}
(E(x_i)\cdot y_i^{-1}, P(y_i)\cdot x_i^{-1}, i),
\end{align*}
sorted by the first coordinate.
\item If there exists a match in the above step, i.e. $E(x_i)\cdot y_i^{-1} = P(y_i)\cdot x_i^{-1}$ for some $i$, check the guess that $k = x_i^{-1}\cdot y_i$.
\end{enumerate}
It can be seen by the Birthday Problem\footnote{Considering the approximation $p(n)\approx \tfrac{n^2}{2m}$, where $p(n)$ is the probability of there being a Birthday Problem collision from $n$ randomly chosen elements from the set of $m$ elements, then $p(\sqrt{|G|})\approx \tfrac{\sqrt{|G|}^2}{2|G|}=1/2$.}, that with non-negligible probability, there must exist a slid pair $(x_i,y_i)$ satisfying the above property, i.e. there exists $1\leq i \leq d$ such that $k = x_i^{-1} \cdot y_i$. For a random pair $(x,y)\in G^2$ it holds that $E(x) = P(y) \cdot x^{-1} \cdot y$ with probability $|G|^{-1}$, so we expect few, if any, collisions in the hash table, including the collision by the slid pair where the correct key $k$ is found. The data complexity of the attack is $d$ $E$-oracle queries and $d$ $P$-oracle queries. Hence the attack bound $d^2 = |G|$, which matches the lower bound given in Theorem~\ref{MainEFP} and Theorem~\ref{PseudoEMbounded}. We have therefore found that our scheme is optimal.

\newpage
\section{Feistel}\label{FeistelSection}
We now consider the Feistel cipher over arbitrary groups, which we will call the Group Feistel cipher. The following is a complement to \citep{PatelRamzanSundaram} who treat the Group Feistel cipher construction with great detail. Our main accomplishment in this section is the settling of an open problem posed by them.

\subsection{Definitions}
We define a Feistel cipher over a group $(G,\cdot)$ as a series of round functions on elements of $G\times G=G^2$.

\begin{defn}
Given an efficiently computable but not necessarily invertible function $f: G \rightarrow G$, called a \textbf{round function}, we define the \textbf{1-round Group Feistel cipher} $\mathcal{F}_{f}$ to be
\begin{align*}
\mathcal{F}_{f}: 	G \times G &\longrightarrow G \times G,\\
				(x,y) &\longmapsto (y, x \cdot f(y)).
\end{align*}
In the case where we have multiple rounds, we index the round functions as $f_i$, and denote the \textbf{$r$-round Group Feistel cipher} by $\mathcal{F}_{f_1,\ldots,f_r}$. We concurrently denote the input to the $i$'th round by $(L_{i-1},R_{i-1})$ and having the output $(L_i,R_i) = (R_{i-1}, L_{i-1}\cdot f_i(R_{i-1}))$, where $L_i$ and $R_i$ respectively denote the left and right parts of the $i$'th output.
\end{defn}

Note that if $(L_i,R_i)$ is the $i$'th round output, we may invert the $i$'th round by setting $R_{i-1}:=L_i$ and then computing $L_{i-1}:= R_i \cdot (f_i(R_{i-1}))^{-1}$ to get $(L_{i-1},R_{i-1})$. As this holds for all rounds, regardless of the invertibility of the round functions, we get that an $r$-round Feistel cipher is invertible for all $r$.

Let $F:G_\lambda \times G \rightarrow G$ be a pseudorandom function. We define the keyed permutation $F^{(r)}$ as
\begin{align*}
F^{(r)}_{k_1,\ldots,k_r}(x,y) \defeq \mathcal{F}_{F_{k_1},\ldots, F_{k_r}}(x,y).
\end{align*}
We sometimes index the keys as $1,2,\ldots, r$, or omit the key index entirely.

\subsection{Results}
For completeness, we show some of the preliminary results for Group Feistel ciphers, not considered in \citep{PatelRamzanSundaram}.

We first note that $F^{(1)}$ is \textit{not} a pseudorandom permutation as 
\begin{align*}
F^{(1)}_{k_1}(L_0,R_0) = (L_1,R_1) = (R_0,L_0\cdot F_{k_1}(R_0)),
\end{align*}
such that any distinguisher $\mathcal{A}$ need only compare $R_0$ to $L_1$.

Also $F^{(2)}$ is \textit{not} a pseudorandom permutation: Consider a pseudorandom function $F$ on $G$. Pick $k_1,k_2\in_R G_\lambda$. Distinguisher $\mathcal{A}$ sets $(L_0,R_0)=(1,g)$ for some $g\in G$, where $1$ is the identity element of $G$, then queries $(L_0,R_0)$ to its oracle and receives,
\begin{center}
$L_2 = L_0 \cdot F_{k_1}(R_0) = F_{k_1}(g)$ and $R_2 = R_0 \cdot F_{k_2}(L_0\cdot F_{k_1}(R_0)) = g \cdot F_{k_2}(F_{k_1}(g))$.
\end{center}
On its second query, the distinguisher $\mathcal{A}$ lets $L_0 \in G \setminus \lbrace 1\rbrace$ but $R_0=g$, such that it receives
\begin{center}
$L_2 = L_0 \cdot F_{k_1}(R_0) = L_0 \cdot F_{k_1}(g)$ and $R_2 = g \cdot F_{k_2}(L_0 \cdot F_{k_1}(g))$.
\end{center}
As $\mathcal{A}$ may find the inverse to elements in $G$, $\mathcal{A}$ acquires $(F_{k_1}(g))^{-1}$, and by so doing, may compute $L_2 \cdot (F_{k_1}(g))^{-1} = L_0$. If $F^{(2)}$ were random, this would only occur negligibly many times, while $\mathcal{A}$ may query its permutation-oracle polynomially many times such that if $L_0$ is retrieved non-negligibly many times out of the queries, $\mathcal{A}$ is able to distinguish between a random permutation and $F^{(2)}$ with non-negligible probability.

As one would expect, the $3$-round Group Feistel cipher (see Figure~\ref{3roundFeistel}) is indeed a pseudorandom permutation.

\begin{figure}
\centering
\begin{subfigure}[b]{0.49\textwidth}
\centering
\begin{tikzpicture}

    \tikzstyle{dot} = [
		fill,
		shape=circle,
		minimum size=4pt,
		inner sep=0pt,
	]

    \foreach \z in {1, 2,...,3} {
        \node[draw,thick,minimum width=1cm] (g\z) at ($\z*(0,-1.5cm)$)  {$f_\z$};
        \node (prik\z) [dot, left of = g\z, node distance = 2cm, scale=0.8] {};
        \draw[thick,-latex] (g\z) -- (prik\z);
    
    }
    
    \foreach \z in {1, 2} {
   	 	\draw[thick,latex-latex] (g\z.east) -| +(1.5cm,-0.5cm) -- ($(prik\z) - (0,1cm)$) -- ($(prik\z.north) - (0,1.5cm)$);
   	 	\draw[thick] (prik\z.south) -- ($(prik\z)+(0,-0.5cm)$) -- ($(g\z.east) + (1.5cm,-1cm)$) -- +(0,-0.5cm);
    }

    \node (p0) [above of = g1, minimum width=5cm,minimum height=0.5cm,node distance=1cm] {}; 
    \node (l0) [above of = prik1,node distance=1cm] {$L_0$};
    \node (r0) [right of = l0, node distance = 4cm] {$R_0$};
    \draw[thick,-latex] (l0 |- p0.south) -- (prik1.north);
    \draw[thick] ($(g1.east)+(1.5cm,0)$) -- +(0,0.75cm);

    \node (p3) [below of = g3, minimum width=5cm,minimum height=0.5cm,node distance=1.75cm] {}; 
    \node (l3) [below of = prik3,node distance=1.75cm] {$L_3$};
    \node (r3) [right of = l3, node distance = 4cm] {$R_3$};
    \draw[thick,latex-latex] (g3.east) -| +(1.5cm,-0.5cm) -- ($(prik3) - (0,1cm)$) -- (prik3 |- p3.north);
    \draw[thick,-latex] (prik3.south) -- ($(prik3)+(0,-0.5cm)$) -- ($(g3.east) + (1.5cm,-1cm)$) -- +(0,-0.5cm);

\end{tikzpicture}
\captionof{figure}[$3$-round Group Feistel cipher.]{$3$-round Group Feistel cipher.}
\label{3roundFeistel}
\end{subfigure}
\begin{subfigure}[b]{0.49\textwidth}
\centering
\begin{tikzpicture}

    \tikzstyle{dot} = [
		fill,
		shape=circle,
		minimum size=4pt,
		inner sep=0pt,
	]

    
    \node[draw,thick,minimum width=1cm] (f1) at ($1*(0,-1.5cm)$)  {$g$};
    \node (xor1) [dot, left of = f1, node distance = 2cm, scale=0.8] {};
    \draw[thick,-latex] (f1) -- (xor1);   
    \node[draw,thick,minimum width=1cm] (f2) at ($2*(0,-1.5cm)$)  {$f$};
    \node (xor2) [dot, left of = f2, node distance = 2cm, scale=0.8] {};
    \draw[thick,-latex] (f2) -- (xor2);   
    \node[draw,thick,minimum width=1cm] (f3) at ($3*(0,-1.5cm)$)  {$f$};
    \node (xor3) [dot, left of = f3, node distance = 2cm, scale=0.8] {};
    \draw[thick,-latex] (f3) -- (xor3);   
    \node[draw,thick,minimum width=1cm] (f4) at ($4*(0,-1.5cm)$)  {$g$};
    \node (xor4) [dot, left of = f4, node distance = 2cm, scale=0.8] {};
    \draw[thick,-latex] (f4) -- (xor4);

    \foreach \z in {1, 2,...,3} {
   	 	\draw[thick,latex-latex] (f\z.east) -| +(1.5cm,-0.5cm) -- ($(xor\z) - (0,1cm)$) -- ($(xor\z.north) - (0,1.5cm)$);
   	 	\draw[thick] (xor\z.south) -- ($(xor\z)+(0,-0.5cm)$) -- ($(f\z.east) + (1.5cm,-1cm)$) -- +(0,-0.5cm);
    }

    \node (p0) [above of = f1, minimum width=5cm,minimum height=0.5cm,node distance=1cm] {}; 
    \node (l0) [above of = xor1,node distance=1cm] {$x^L$};
    \node (mid0) [above of = xor1,node distance = .5cm] {};
    \node (kl0) [right of = mid0,node distance=.3cm] {$\cdot k^L$};
    \node (r0) [right of = l0, node distance = 4cm] {$x^R$};
    \node (kr0) [right of = kl0,node distance= 4.05cm] {$\cdot k^R$};
    \draw[thick,-latex] (l0 |- p0.south) -- (xor1.north);
    \draw[thick] ($(f1.east)+(1.5cm,0)$) -- +(0,0.75cm);

    \node (p4) [below of = f4, minimum width=5cm,minimum height=0.5cm,node distance=1.75cm] {}; 
    \node (l4) [below of = xor4,node distance=1.75cm] {$y^L$};
    \node (kl4) [below of = kl0,node distance=6.15cm] {$\cdot k^L$};
    \node (r4) [right of = l4, node distance = 4cm] {$y^R$};
    \node (kr4) [right of = kl4,node distance= 4.05cm] {$\cdot k^R$};
    \draw[thick,latex-latex] (f4.east) -| +(1.5cm,-0.5cm) -- ($(xor4) - (0,1cm)$) -- (xor4 |- p4.north);
    \draw[thick,-latex] (xor4.south) -- ($(xor4)+(0,-0.5cm)$) -- ($(f4.east) + (1.5cm,-1cm)$) -- +(0,-0.5cm);

\end{tikzpicture}
\captionof{figure}[Group EM scheme with Feistel.]{Group EM scheme with Feistel.\footnotemark}
\label{GendGenRamPic}
\end{subfigure}
\captionof{figure}{Encryption schemes.}
\end{figure}
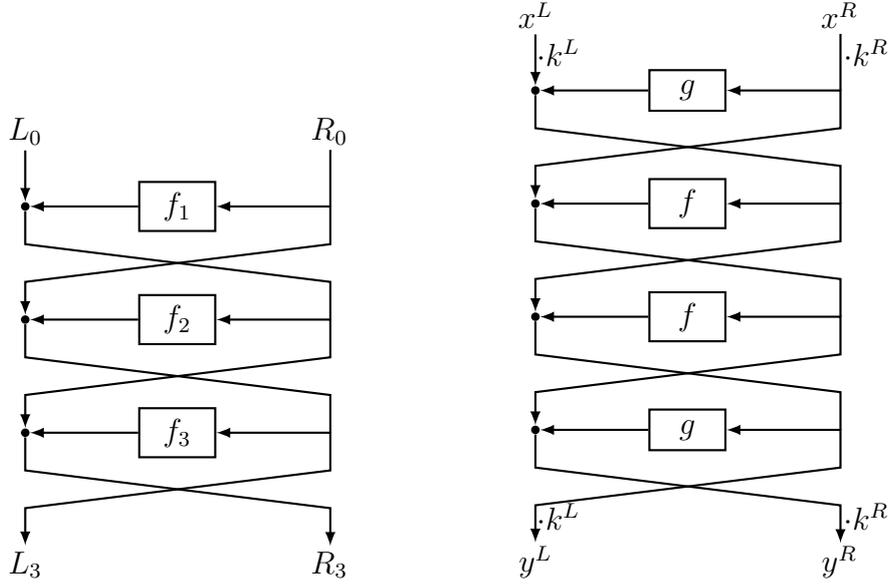

\footnotetext{TikZ figure adapted from \citep{TikZhelp}.}

\begin{thm}
If $F$ is a pseudorandom function on $G$, then $F^{(3)}$ is a pseudorandom permutation on $G$.
\end{thm}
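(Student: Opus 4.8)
The plan is to follow the classical Luby--Rackoff strategy, adapted to an arbitrary group $G$, in two stages: first a computational hybrid argument that replaces the pseudorandom round functions by truly random ones, and then a purely information-theoretic analysis of the three-round Feistel network built from random functions.

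First I would reduce to the information-theoretic setting. Let $F^{(3)}$ use keys $k_1,k_2,k_3$, so the round functions are $F_{k_1},F_{k_2},F_{k_3}$. By a standard three-step hybrid I replace each $F_{k_i}$ one at a time by a truly random function $f_i\in_R\mathfrak{F}_{G\to G}$; any distinguisher whose advantage changes non-negligibly across one of these swaps immediately yields a distinguisher against the PRF $F$, contradicting the hypothesis. Hence it suffices to bound the advantage of a $q$-query distinguisher between $\mathcal{F}_{f_1,f_2,f_3}$, with $f_1,f_2,f_3$ independent uniform functions, and a uniformly random permutation on $G^2$, paying only an additive $3\cdot\mathrm{negl}(\lambda)$ from the hybrid.

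Next I would carry out the core analysis. For a query $(L_0,R_0)$ the round values are
\[
R_1=L_0\cdot f_1(R_0),\qquad R_2=R_0\cdot f_2(R_1),\qquad (L_3,R_3)=\bigl(R_2,\;R_1\cdot f_3(R_2)\bigr),
\]
so $R_1$ is the argument fed to $f_2$ and $R_2$ is the argument fed to $f_3$. Given distinct queries $i\neq j$, I define the bad events that these intermediate arguments collide: $R_1^{(i)}=R_1^{(j)}$ or $R_2^{(i)}=R_2^{(j)}$. The decisive point is that in \emph{any} group, left translation $g\mapsto a\cdot g$ is a bijection, so for fixed $L_0$ the value $L_0\cdot f_1(R_0)$ is uniform whenever $f_1(R_0)$ is; consequently, whenever $R_0^{(i)}\neq R_0^{(j)}$ the images $f_1(R_0^{(i)}),f_1(R_0^{(j)})$ are independent and uniform and $R_1^{(i)}=R_1^{(j)}$ with probability $|G|^{-1}$, while if $R_0^{(i)}=R_0^{(j)}$ (but the queries differ) the two $R_1$ values are automatically distinct. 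Summing over pairs bounds the first bad event by $\binom{q}{2}|G|^{-1}$. Conditioning on its absence makes the arguments to $f_2$ distinct, so the $f_2$-images are fresh and uniform, which in turn makes the $R_2$ values independent and uniform and bounds the second bad event by another $\binom{q}{2}|G|^{-1}$. Conditioned on both good events, $f_2$ and $f_3$ are evaluated on $q$ distinct points each, so the outputs $(L_3^{(i)},R_3^{(i)})=(R_0^{(i)}\cdot f_2(R_1^{(i)}),\,R_1^{(i)}\cdot f_3(R_2^{(i)}))$ are uniform over $G^2$ and mutually independent, exactly the distribution produced by a random function on $G^2$. Thus the statistical distance from a random function is $O(q^2/|G|)$, and a final application of the function/permutation switching bound (an additional $\binom{q}{2}|G|^{-2}$) converts this into indistinguishability from a random permutation.

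The main obstacle is not the counting but confirming that the Luby--Rackoff collision bookkeeping survives the loss of commutativity. The argument above indicates that it does: every place where the classical proof uses that $x\oplus k$ is uniform, I instead invoke that one-sided translation by a group element is a bijection, so products of the form $L_0\cdot f_1(R_0)$, $R_0\cdot f_2(R_1)$, and $R_1\cdot f_3(R_2)$ remain uniform whenever the inner function value is fresh. I would take care to write the products in the exact order dictated by the round definition $\mathcal{F}_f(x,y)=(y,x\cdot f(y))$ and to verify the independence claims in that order. Once this is done, collecting the two birthday terms and the hybrid loss gives an advantage of $O(q^2/|G|)+3\cdot\mathrm{negl}(\lambda)$, which is negligible since $q=\mathrm{poly}(\lambda)$ and $|G|\geq 2^\lambda$, establishing that $F^{(3)}$ is a pseudorandom permutation on $G$.
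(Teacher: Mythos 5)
Your proposal is correct and is precisely the argument the paper has in mind: the paper omits the proof, stating only that it ``can be generalized from the proof given in Katz and Lindell\ldots with no difficulties,'' and your sketch carries out exactly that generalization --- a three-step hybrid to truly random round functions followed by the standard birthday bound on collisions among the inputs $R_1$ to $f_2$ and $R_2$ to $f_3$, with the only group-theoretic adjustment being that one-sided translation is a bijection. The round-value formulas you derive from $\mathcal{F}_f(x,y)=(y,x\cdot f(y))$ and the resulting $O(q^2/|G|)$ bound are consistent with the paper's conventions.
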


The proof of this proposition can be generalized from the proof given in Katz and Lindell \citep{KL} of the analogous result over bit-strings with XOR, with no difficulties. We therefore omit it here.

Among the considerations in \citep{PatelRamzanSundaram}, they showed that the $3$-round Feistel cipher over abelian groups was not super pseudorandom, but left as an open problem a proof over non-abelian groups. We present such a proof now.

\begin{prop}
The $3$-round Group Feistel cipher is not super pseudorandom.
\end{prop}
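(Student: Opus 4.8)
The plan is to exhibit an explicit, efficient distinguisher that uses both the forward ($F^{(3)}$) and the inverse ($(F^{(3)})^{-1}$) oracle and achieves advantage close to $1$ against any instantiation of the round functions; since the relation it checks holds structurally (independently of whether the $f_i$ are truly random or merely pseudorandom), this shows $F^{(3)}$ cannot be an SPRP over any $G \in \mathcal{G}$, in particular over non-abelian $G$. The point of departure is the per-round inversion recorded after the definition, $R_{i-1}=L_i$ and $L_{i-1}=R_i\cdot f_i(R_{i-1})^{-1}$, which makes both directions available in closed form. Writing out three rounds, a plaintext $(L_0,R_0)$ with $a:=f_1(R_0)$ encrypts to
\begin{align*}
(L_3,R_3)=\bigl(R_0\cdot f_2(L_0\cdot a),\ L_0\cdot a\cdot f_3(L_3)\bigr).
\end{align*}

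First I would make two forward queries that share the right half. Fix an arbitrary $\gamma\in G$ and two distinct $\alpha\ne\beta$, and query $(\alpha,\gamma)\mapsto(S_1,T_1)$ and $(\beta,\gamma)\mapsto(S_2,T_2)$. Because the right input is common, both second-round inputs carry the same $a=f_1(\gamma)$, and reading off the output equations gives $f_2(\alpha a)=\gamma^{-1}S_1$, $f_2(\beta a)=\gamma^{-1}S_2$, and $f_3(S_2)=a^{-1}\beta^{-1}T_2$. The key step is then a single inverse query engineered so that these stored quantities recombine into a predictable value: I would decrypt the ciphertext $(S_2,\ \alpha\beta^{-1}T_2)$. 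Tracing the three inverse rounds, the third-round inversion yields $L_2'=\alpha\beta^{-1}T_2\cdot f_3(S_2)^{-1}=\alpha a$ (so it collides with the middle value of the first forward query), the second-round inversion then gives $L_1'=S_2\cdot f_2(\alpha a)^{-1}=S_2S_1^{-1}\gamma$, and hence the recovered right half is $R_0'=S_2S_1^{-1}\gamma$. The distinguisher outputs $1$ exactly when the right half of the decryption equals $S_2S_1^{-1}\gamma$.

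To finish I would compare the two worlds. In the real world the displayed identity holds with probability $1$, since every cancellation ($\alpha\beta^{-1}\cdot\beta=\alpha$, $T_2\cdot T_2^{-1}=1$, $f_2(\alpha a)\cdot f_2(\alpha a)^{-1}=1$) is pure associativity and never invokes commutativity. In the ideal world the inverse-query point $(S_2,\alpha\beta^{-1}T_2)$ differs from both forward answers ($\alpha\ne\beta$ guarantees it is not $(S_2,T_2)$, and it equals $(S_1,T_1)$ only with probability $O(1/|G|)$), so its preimage under a uniformly random permutation is essentially uniform and its right half hits the fixed target $S_2S_1^{-1}\gamma$ with probability at most $|G|/(|G|^2-2)$. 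The advantage is therefore $1-O(1/|G|)$, which is non-negligible, using only two forward and one inverse query.

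The main obstacle — and the entire content of the open problem left by Patel, Ramzan, and Sundaram — is that the classical abelian attack cancels the $f_1$ and $f_3$ contributions using commutativity, which is unavailable here. The resolution is precisely the asymmetric construction of the inverse-query ciphertext: sharing the right half $\gamma$ across the two forward queries forces a common $a=f_1(\gamma)$, and left-multiplying the stored second coordinate $T_2$ by $\alpha\beta^{-1}$ lets the $f_3(S_2)$ factor cancel from the left while the $a$ factor cancels from the right, so a clean predictable relation survives in any (possibly non-abelian) group. The only remaining verification is that the degenerate coincidences among round-function values (such as $S_1=S_2$) occur with negligible probability and, in any case, do not affect the real-world identity, which they do not.
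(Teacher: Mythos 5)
Your attack is exactly the one in the paper: two forward queries sharing the right half ($(\alpha,\gamma)$ and $(\beta,\gamma)$ correspond to the paper's $(L_0,R_0)$ and $(L'_0,R_0)$), followed by the single inverse query on $(S_2,\alpha\beta^{-1}T_2)$ and the check that the recovered right half equals $S_2S_1^{-1}\gamma$, i.e.\ the paper's test $R''_0=L'_3\cdot(L_3)^{-1}\cdot R_0$. The proposal is correct and simply carries out the round-by-round verification and ideal-world probability bound that the paper leaves implicit.
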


\begin{proof}
The proof is a counter-example using the following procedure:
\begin{enumerate}
\item Choose two oracle-query pairs in $G\times G$: $(L_0,R_0)$ and $(L'_0,R_0)$ where $L_0\neq L'_0$.
\item Query the encryption oracle to get $(L_3,R_3)$ and $(L'_3,R'_3)$.
\item Query $(L''_3,R''_3)= (L'_3,L_0\cdot (L'_0)^{-1} \cdot R'_3)$ to the decryption oracle.
\item If $R''_0=L'_3\cdot (L_3)^{-1} \cdot R_0$, guess that the oracle is $F^{(3)}$, else guess random.
\end{enumerate}
For $F^{(3)}$, this algorithm succeeds with probability $1$. For a random permutation, this algorithm succeeds negligibly often.
\end{proof}

For super pseudorandomness of the $4$-round Group Feistel cipher, we refer the reader to \citep{PatelRamzanSundaram}. In the paper, they show a strong result using certain hash functions as round functions, from which the following is a corollary.

\begin{cor}
Let $G$ be a group, with characteristic other than $2$, and let $f,g: G_\lambda \times G \rightarrow G$ be pseudorandom functions. Then, for any adversary $\mathcal{A}$ with polynomially many queries to its $E/D$-oracles, the family $\mathcal{P}$ of permutations on $G\times G$ consisting of permutations of the form $F^{(4)}=\mathcal{F}_{g,f,f,g}$ are indistinguishable from random, i.e. super pseudorandom permutations (SPRPs).
\end{cor}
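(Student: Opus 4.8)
The plan is to prove the corollary in the classical Luby--Rackoff two-step fashion, matching the route taken in \citep{PatelRamzanSundaram}: first replace the pseudorandom round functions by truly random functions at the cost of a negligible term, and then bound the information-theoretic distinguishing advantage of the resulting idealized construction. The second step is exactly what \citep{PatelRamzanSundaram} establish (with an explicit $O(q^2/|G|)$ bound) for round functions drawn from a sufficiently strong hash family, and a uniformly random function is the strongest such special case; the corollary then follows by reading off their bound. Throughout, the relevant game is the plain SPRP game of the SPRP definition, i.e. $\mathcal{A}$ has access only to the construction and its inverse, not to separate $f$- or $g$-oracles.

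First I would carry out the hybrid reduction. Let $\mathcal{A}$ make $q=\mathrm{poly}(\lambda)$ queries to the $E/D$-oracles. Introduce two hybrids: in the first, replace the outer function $g$ by a uniformly random $\hat g:G\to G$; in the second, additionally replace the inner function $f$ by a uniformly random $\hat f$. A distinguisher separating two consecutive hybrids with non-negligible advantage would, by simulating the remaining (still pseudorandom or already random) rounds itself, yield a polynomial-query distinguisher against the PRF security of $f$ or of $g$, contradicting the hypothesis. Hence it suffices to bound the advantage of $\mathcal{A}$ against the idealized cipher $\mathcal{F}_{\hat g,\hat f,\hat f,\hat g}$, with $\hat f,\hat g$ uniformly random, relative to a uniformly random permutation on $G\times G$.

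For the information-theoretic core I would use Patarin's H-coefficient technique. Fix a deterministic $\mathcal{A}$ and record the transcript $\tau$ of its $q$ forward/backward query-answer pairs. For each query the arguments fed to the two random functions are the round inputs $R_0,R_3$ (both into $\hat g$) and $R_1,R_2$ (both into $\hat f$). Call $\tau$ bad if two queries collide on one of these internal arguments in a way not already forced by the inputs, e.g.\ $R_1^{(i)}=R_1^{(j)}$ or $R_2^{(i)}=R_2^{(j)}$ for $i\neq j$, or a within-query coincidence $R_0^{(i)}=R_3^{(i)}$ or $R_1^{(i)}=R_2^{(i)}$. Each internal argument beyond the first round is shifted by a fresh, adversarially uncontrolled output of $\hat f$ or $\hat g$, so each such collision has probability $O(1/|G|)$ and a union bound gives $\Pr[\tau\text{ bad}]=O(q^2/|G|)$. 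For good transcripts one counts the pairs $(\hat f,\hat g)$ consistent with $\tau$ and checks that the induced real-world probability is at least $(1-O(q^2/|G|))$ times the ideal permutation probability; the H-coefficient lemma then yields total advantage $O(q^2/|G|)$, which is negligible since $|G|\geq 2^\lambda$.

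The main obstacle is the collision bookkeeping created by the palindromic reuse of the round functions: because rounds $1$ and $4$ share $\hat g$ and rounds $2$ and $3$ share $\hat f$, a forward query and a backward query can interact through a single function table, so an internal argument is not always fed by a truly fresh output. One must therefore rule out a \emph{systematic} coincidence between the two $\hat g$-evaluations (and the two $\hat f$-evaluations) of a query or query pair --- precisely the kind of forced collision that powers the $3$-round distinguisher of the preceding Proposition. This is exactly where the hypothesis that $G$ has characteristic other than $2$ enters: it guarantees that the equations an adversary would need to force (for instance $R_0=R_3$, which unwinds to a relation whose satisfiability hinges on $2$ being invertible) hold with probability $O(1/|G|)$ rather than identically, so no analogue of the $3$-round attack survives into four rounds. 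Verifying that every bad event genuinely carries a factor $1/|G|$ under the characteristic-$\neq 2$ assumption is the crux; once that is in hand, the H-coefficient bound and the hybrid step combine to give the claimed super pseudorandomness.
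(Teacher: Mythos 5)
The paper does not actually prove this corollary: it is quoted directly as a consequence of the more general theorem of \citep{PatelRamzanSundaram}, who prove super pseudorandomness of $\mathcal{F}_{h_2,f,f,h_1}$ for round functions drawn from suitable hash families, with truly random (hence pseudorandom, after a hybrid step) functions as a special case. Your proposal therefore supplies an argument where the paper supplies only a citation, and the skeleton you give --- first a PRF-to-random-function hybrid, then an information-theoretic bad-event analysis of $\mathcal{F}_{\hat g,\hat f,\hat f,\hat g}$ --- is exactly the route the cited work takes, so at the level of strategy it is the ``same'' proof, just reconstructed rather than imported.

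That said, one point deserves a concrete correction, because you have located the role of the characteristic-$\neq 2$ hypothesis in the wrong place. The danger in characteristic $2$ is not a bad event whose probability fails to be $O(1/|G|)$; it is an \emph{exact} functional identity. Over a group in which every element is its own inverse, the inverse of a Feistel network is the reversed network conjugated by the coordinate swap $\sigma$, so for the palindrome $g,f,f,g$ one gets $(\mathcal{F}_{g,f,f,g})^{-1}=\sigma\circ\mathcal{F}_{g,f,f,g}\circ\sigma$ identically. An adversary who makes one forward query and one mirrored backward query then distinguishes with probability $1$, no matter how the round functions are chosen. In your H-coefficient setup this manifests as a forced collision between the internal $f$- and $g$-inputs of a \emph{forward} query and those of a \emph{backward} query, a class of cross-collisions absent from your bad-event list (which only records within-query coincidences and same-direction cross-query coincidences such as $R_1^{(i)}=R_1^{(j)}$). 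The content of the characteristic hypothesis is precisely that these forward/backward cross-collisions revert to probability $O(1/|G|)$, since forcing them requires round-function outputs to equal their own group inverses. Since you explicitly defer ``verifying that every bad event genuinely carries a factor $1/|G|$'' and your example ($R_0=R_3$ within a single query) does not isolate the actual obstruction, the crux of the proof is still open in your write-up; the rest of the outline is sound.
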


\newpage
\section{Implementing the Group Even-Mansour Scheme}\label{ImplementEM}
Now that we have shown that both the Even-Mansour scheme and the Feistel cipher are generalizable to arbitrary groups, we might consider how to implement one given the other. Gentry and Ramzan \citep{GentryRamzan} considered exactly this for the two-key EM scheme over $(\mathbb{Z}/2\mathbb{Z})^n$. However, their paper only had sketches of proofs and refer to another edition of the paper for full details. As we are unable to find a copy in the place that they specify it to exist, and as we generalize their result non-trivially, we have decided to fill in the details while generalizing their proof.

In this section, we consider a generalized version of the Gentry and Ramzan \citep{GentryRamzan} construction, namely, the Group Even-Mansour scheme on $G^2$ instantiated with a $4$-round Group Feistel cipher as the public permutation:
\begin{align*}
\Psi_{k}^{f,g}(x)=\mathcal{F}_{g,f,f,g}(x\cdot k)\cdot k,
\end{align*}
where $k=(k^L,k^R)\in G^2$ is a key consisting of two subkeys, chosen independently and uniformly at random, and $f$ and $g$ are round functions on $G$, modelled as random function oracles, available to all parties, including the adversary. We consider the operation $x\cdot k$ for $x=(x^L,x^R)\in G^2$, to be the coordinate-wise group operation, but do not otherwise discern between it and the group operation $\cdot$ on elements of $G$. In the following, we shall follow the proof in \citep{GentryRamzan} closely. However, we make quite a few modifications, mostly due to the nature of our generalization. Note that we consider a one-key scheme, as opposed to the two-key version in \citep{GentryRamzan} (see Figure~\ref{GendGenRamPic}.) Our main theorem for this section is the following.

\begin{thm}\label{GentryRamzanMain}
Let $f,g$ be modelled as random oracles and let the subkeys of $k=(k^L,k^R)\in G^2$ be chosen independently and uniformly at random. Let $\Psi_k^{f,g}(x)=\mathcal{F}_{g,f,f,g}(x\cdot k)\cdot k$, and let $R\in_R \mathfrak{P}_{G^2 \rightarrow G^2}$. Then, for any probabilistic $4$-oracle adversary $\mathcal{A}$ with at most
\begin{itemize}
\item $q_c$ queries to $\Psi$ and $\Psi^{-1}$ (or $R$ and $R^{-1}$),
\item $q_f$ queries to $f$, and
\item $q_g$ queries to $g$,
\end{itemize}
we have
\begin{align*}
&\left| Pr\left[ \mathcal{A}^{\Psi,\Psi^{-1},f,g} = 1\right] - Pr\left[ \mathcal{A}^{R,R^{-1},f,g} = 1\right] \right| \\
	&\hspace*{20pt}\leq (2q_c^2 +4q_fq_c + 4q_gq_c + 2q_c^2 - 2q_c)|G|^{-1} + 2\cdot \begin{pmatrix}
	q_c \\ 2
	\end{pmatrix}(2|G|^{-1} + |G|^{-2}).
\end{align*}
\end{thm}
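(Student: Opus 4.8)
The plan is to run a game-playing argument (the ``identical-until-bad'' methodology) interpolating between the real world, where $\mathcal{A}$ interacts with $(\Psi,\Psi^{-1},f,g)$, and the ideal world, where it interacts with $(R,R^{-1},f,g)$ for a truly random $R \in_R \mathfrak{P}_{G^2 \to G^2}$ and independently sampled random oracles $f,g$. I would lazily sample $f$ and $g$, maintaining consistency tables, and simultaneously answer the $\Psi/\Psi^{-1}$ (resp. $R/R^{-1}$) queries, recording for each chaining query $x_i$ the four internal round-function inputs induced inside $\mathcal{F}_{g,f,f,g}(x_i\cdot k)$: the two $g$-inputs (rounds $1$ and $4$) and the two $f$-inputs (rounds $2$ and $3$). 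The essential feature to exploit is that, although $\mathcal{A}$ can evaluate $f$ and $g$ directly, the EM masking by the uniform key $k=(k^L,k^R)$ hides the actual Feistel inputs $x_i\cdot k$, so the internal round-function inputs are unpredictable to $\mathcal{A}$.

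Next I would define the bad events that let $\mathcal{A}$ detect non-randomness. These fall into two groups. First, \emph{direct-query hits}: one of $\mathcal{A}$'s explicit $g$-queries (resp.\ $f$-queries) coincides with an internal $g$-input (resp.\ $f$-input) arising from a chaining query. Because $k^R$ (for the round-$1$ $g$-input $x_i^R\cdot k^R$) and more generally the masked values are uniform and independent of $\mathcal{A}$'s view until a hit occurs, each such coincidence fixes a key-component or an internal value to a specific target and occurs with probability $|G|^{-1}$ per pair; a union bound over the $q_g q_c$ (resp.\ $q_f q_c$) pairs, over the two $g$-rounds and two $f$-rounds, in both the forward and inverse directions, yields the $4q_g q_c|G|^{-1}$ and $4q_f q_c|G|^{-1}$ contributions. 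Second, \emph{internal collisions}: two distinct chaining queries produce the same round-function input at some round, or the reconstructed outputs collide in a way a random permutation would avoid. These are birthday-type events bounded by $\binom{q_c}{2}$ times a per-pair probability; collisions confined to one coordinate cost $|G|^{-1}$ while collisions requiring agreement in both coordinates cost $|G|^{-2}$, which is the origin of the $2\binom{q_c}{2}(2|G|^{-1}+|G|^{-2})$ term and of the remaining $2q_c^2$ and $-2q_c$ corrections.

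I would then prove the core coupling step: conditioned on none of the bad events occurring, the joint distribution of $\mathcal{A}$'s transcript in the real world is identical to that in the ideal world. The argument is that, with no internal round-function input repeated and none hit by a direct query, the round functions $g$ (rounds $1,4$) and $f$ (rounds $2,3$) are evaluated at fresh points, so their outputs are fresh uniform elements of $G$; feeding these through the outer EM masking by the uniform $k$ makes each $\Psi$-response a uniform element of $G^2$ consistent with an injective map, exactly as for $R$. By the fundamental lemma of game-playing, the distinguishing advantage is then at most the total probability of the bad events, and summing the contributions above gives the claimed bound.

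The hardest part will be the coupling step together with the exact bookkeeping of constants. Because $\mathcal{A}$ is an adaptive SPRP distinguisher that freely interleaves forward and inverse chaining queries with direct $f,g$ queries, I must verify that conditioning on the absence of bad events really does leave every internal round value uniform and independent in $\mathcal{A}$'s view — this is where the one-key subtlety enters, since $k$ and $k^{-1}$ appear on both sides and one must check that the masking still decouples the rounds exactly as in the two-key construction of \citep{GentryRamzan}. Getting the multiplicities right (the factors of $4$ for the paired $f$- and $g$-rounds, and distinguishing the single-coordinate $|G|^{-1}$ collisions from the double-coordinate $|G|^{-2}$ ones) is the delicate accounting that produces the precise stated bound rather than a loose order estimate.
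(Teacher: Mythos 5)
Your plan is in essence the paper's proof: both are bad-event analyses in the style of Gentry--Ramzan and Naor--Reingold, with the same two families of bad events (direct $f$/$g$-queries hitting the key-masked internal round inputs, and birthday collisions among internal inputs across cipher queries) and the same coupling claim that, absent these events, every transcript is equally likely in both worlds. The organizational difference is that the paper does not run a single identical-until-bad game but a chain of four processes $\Psi \to \tilde{\Psi} \to \tilde{R} \to R$: the hybrid $\tilde{\Psi}$ replaces the adversary's $g$-oracle by an independent random function $h$ while the cipher keeps using $g$, which is precisely the device that resolves the subtlety you flag as the hardest part (the adversary shares the oracle $g$ with rounds $1$ and $4$, so one must first decouple the adversary's $g$-view from the cipher's internal $g$ before one can argue that the round-$2$ and round-$3$ inputs to $f$ are fresh); and the hybrid $\tilde{R}$ answers cipher queries by a lazily-sampled random \emph{function}, isolating the permutation-versus-function inconsistency into the two $\binom{q_c}{2}|G|^{-2}$ terms. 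Your coupling step would also need to handle, as the paper's Lemma~\ref{NotBadPsitoBarPsi} does explicitly, the non-bad self-collision $y^L\cdot(k^L)^{-1} = x^R\cdot k^R$ within a single query, where rounds $1$ and $4$ legitimately reuse the same $g$-value in both worlds.

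One piece of your accounting is misattributed and would not survive a careful write-up: the factors of $4$ in $4q_fq_c$ and $4q_gq_c$ do \emph{not} arise from counting two rounds times two query directions. Each transcript pair $\langle x_i,y_i\rangle$ determines exactly two internal $g$-inputs ($x_i^R\cdot k^R$ and $y_i^L\cdot(k^L)^{-1}$) and two internal $f$-inputs, irrespective of whether it was produced by a forward or an inverse query, so the per-lemma union bounds are $2q_gq_c|G|^{-1}$ and $2q_fq_c|G|^{-1}$ (the paper's Lemmas~\ref{Lem39} and~\ref{Lem42}); the remaining doubling comes from the triangle-inequality manipulation in the final combination of hybrids, where each bad-event probability is paid twice. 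Counting forward and inverse directions as separate events would double-count and, combined with the hybrid doubling, would overshoot the stated constant.
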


\subsection{Definitions}
Before we can begin the proof, we will need several definitions all of which are identical to the \citep{GentryRamzan} definitions, up to rewording.

\begin{defn}
Let $P$ denote the permutation oracle (either $\Psi$ or $R$), $\mathcal{O}^f$ and $\mathcal{O}^g$ the $f$ and $g$ oracles, respectively. We get the transcripts: $T_P$, the set of all $P$ queries, $T_f$, the set of all $f$ queries, and $T_g$, the set of all $g$ queries, i.e. the sets
\begin{align*}
T_P &= \lbrace \langle x_1,y_1 \rangle, \langle x_2,y_2 \rangle, \cdots, \langle x_{q_c},y_{q_c} \rangle \rbrace_P, \\
T_f &= \lbrace \langle x'_1,y'_1 \rangle, \langle x'_2,y'_2 \rangle, \cdots, \langle x'_{q_f},y'_{q_f} \rangle \rbrace_f, \\
T_g &= \lbrace \langle x''_1,y''_1 \rangle, \langle x''_2,y''_2 \rangle, \cdots, \langle x''_{q_g},y''_{q_g} \rangle \rbrace_g.
\end{align*}
We discern between two types of oracle queries: Cipher queries $(+,x)=P(x)$ and $(-,y)=P^{-1}(y)$; Oracle queries $(\mathcal{O}^f,x')$ and $(\mathcal{O}^g,x'')$, respectively $f$- and $g$-oracle queries.
\end{defn}

As we have no bounds on the computational complexity of the adversary $\mathcal{A}$, we may assume that $\mathcal{A}$ is deterministic, as we did in the proof of Theorem~\ref{PseudoEMbounded}. Hence, we may consider an algorithm $C_\mathcal{A}$ which, given a set of $\mathcal{A}$'s queries, can determine $\mathcal{A}$'s next query.

\begin{defn}
For $0\leq i \leq q_c$, $0\leq j \leq q_f$, and $0\leq k \leq q_g$, the $i+j+k+1$'st query by $\mathcal{A}$ is
\begin{align*}
C_\mathcal{A}\left[ \lbrace \langle x_1,y_1 \rangle, \ldots, \langle x_{i},y_{i} \rangle \rbrace_P, \lbrace \langle x'_1,y'_1 \rangle, \ldots, \langle x'_{j},y'_{j} \rangle \rbrace_f, \lbrace \langle x''_1,y''_1 \rangle, \ldots, \langle x''_{k},y''_{k} \rangle \rbrace_g \right]
\end{align*}
where the upper equality case on the indexes is defined to be $\mathcal{A}$'s final output.
\end{defn}

\begin{defn}
Let $\sigma = (T_P,T_f,T_g)$ be a tuple of transcripts with length $q_c,q_f,q_g$, respectively. We say that $\sigma$ is a \textbf{possible $\mathcal{A}$-transcript} if for every $1\leq i \leq q_c, 1 \leq j \leq q_f$, and $1\leq k \leq q_g$,
\begin{align*}
C_\mathcal{A}\left[ \lbrace \langle x_1,y_1 \rangle, \ldots, \langle x_{i},y_{i} \rangle \rbrace_P, \lbrace \langle x'_1,y'_1 \rangle, \ldots, \langle x'_{j},y'_{j} \rangle \rbrace_f, \lbrace \langle x''_1,y''_1 \rangle, \ldots, \langle x''_{k},y''_{k} \rangle \rbrace_g \right] \\ \hspace*{70pt}\in \lbrace (+,x_{i+1}), (-,y_{i+1}), (\mathcal{O}^f, x'_{j+1}), (\mathcal{O}^g,x''_{k+1})\rbrace .
\end{align*}
\end{defn}

Let us define two useful ways in which we may answer $\mathcal{A}$'s queries other than what we have already defined.

\begin{defn}
Let $\tilde{\Psi}$ be the process where the $\Psi$- and $\Psi^{-1}$ cipher query oracles use $f$ and $g$, and $\mathcal{O}^f$ uses $f$, but $\mathcal{O}^g$ is replaced by $\mathcal{O}^h$ for another, independent, random function $h$.
\end{defn}

\begin{defn}
Let $\tilde{R}$ denote the process which answers all \underline{oracle} queries using $f$ and $g$, but answers the $i$'th \underline{cipher} query as follows.
\begin{enumerate}
\item If $\mathcal{A}$ queries $(+,x_i)$ and there exists $1 \leq j < i$, such that the $j$'th query-answer pair has $x_j=x_i$, return $y_i:=y_j$.
\item If $\mathcal{A}$ queries $(-,y_i)$ and there exists $1 \leq j < i$, such that the $j$'th query-answer pair has $y_j=y_i$, return $x_i:=x_j$.
\item Otherwise, return uniformly chosen element in $G^2$.
\end{enumerate}
\end{defn}

The latter definition may not be consistent with any function or permutation, so we formalize exactly this event.

\begin{defn}\label{inconsistent}
Let $T_P$ be a possible $\mathcal{A}$-cipher-transcript. $T_P$ is \textbf{inconsistent} if for some $1\leq i < j \leq q_c$ there exist cipher-pairs such that either
\begin{itemize}
\item $x_i = x_j$ but $y_i \neq y_j$, or
\item $x_i \neq x_j$ but $y_i=y_j$.
\end{itemize}
Any $\sigma$ containing such a transcript $T_P$ is called \textbf{inconsistent}.
\end{defn}

\begin{note}
Assume from now on that $\mathcal{A}$ never repeats any part of a query if the answer can be determined from previous queries, i.e. every possible $\mathcal{A}$-transcript $\sigma$ is consistent such that if $i\neq j$, then $x_i \neq x_j$, $y_i\neq y_j$, $x'_i \neq x'_j$, and $x''_i \neq x''_j$.
\end{note}

\begin{note}
Let $T_\Psi, T_{\tilde{\Psi}}, T_{\tilde{R}}, T_R$ denote the transcripts seen by $\mathcal{A}$ when its cipher queries are answered by $\Psi, \tilde{\Psi}, \tilde{R}, R$, respectively, and oracle queries by $\mathcal{O}^f$ and $\mathcal{O}^g$ (noting that in the case of $\tilde{\Psi}$, the function in the $\mathcal{O}^g$ has been replaced by another random function, $h$.) We also note that using this notation, we have that $\mathcal{A}^{\Psi,\Psi^{-1},f,g} = C_\mathcal{A}(T_\Psi)$ (and likewise for $\tilde{\Psi}, \tilde{R}$, and $R$.)
\end{note}

\subsection{Lemmas}
Now, let us begin finding results that will aid us in proving our main theorem. First, we will compare the distributions of $\tilde{R}$ and $R$, using a result by Naor-Reingold\footnote{The proof of the proposition follows the argument of Proposition 3.3 in \citep{NaorReingold}.}. Afterwards, we shall consider when the distributions of $\Psi$ and $\tilde{\Psi}$ are equal. Lastly, we shall consider when the distributions of $\tilde{\Psi}$ and $\tilde{R}$ are equal. Combining these results will allow us to prove our main theorem.

We remark that whenever we write $k= (k^L,k^R)\in_R G^2$, we mean that the subkeys are chosen independently and uniformly at random.

\begin{lem}\label{Lem37}
$\left| \underset{\tilde{R}}{Pr}\left[ C_\mathcal{A}(T_{\tilde{R}})=1 \right] - \underset{R}{Pr}\left[ C_\mathcal{A}(T_{R})=1 \right] \right| \leq \begin{pmatrix}
q_c \\ 2
\end{pmatrix}\cdot |G|^{-2}$.
\end{lem}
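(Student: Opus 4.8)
The plan is to bound the statistical distance between the transcript distributions produced by $\tilde{R}$ and by the true random permutation $R$, and then observe that any distinguishing advantage of $C_\mathcal{A}$ is at most this statistical distance. The two processes answer oracle ($f$- and $g$-) queries identically, so the only difference lies in how cipher queries are answered. The process $R$ is a genuine random permutation on $G^2$, whereas $\tilde{R}$ answers each fresh cipher query with a uniformly random element of $G^2$ (reusing previous answers only when forced to by a repeated input or output). First I would note that $\tilde{R}$ and $R$ induce \emph{identical} distributions on transcripts, \emph{conditioned} on the event that $\tilde{R}$ never produces an inconsistent transcript in the sense of Definition~\ref{inconsistent}: as long as every fresh forward query returns a value distinct from all previous outputs and every fresh backward query returns a value distinct from all previous inputs, the sequence of answers $\tilde{R}$ gives is exactly a uniformly random injective partial map, which is precisely what a lazily-sampled random permutation looks like.

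The core of the argument is therefore to bound the probability that $\tilde{R}$ ever becomes inconsistent. Following the Naor--Reingold style argument referenced in the footnote, I would couple the two processes so that they agree step-by-step until the first moment $\tilde{R}$ returns a collision. Since $\tilde{R}$ draws each fresh answer uniformly from all of $G^2$ (rather than from the remaining unused elements as $R$ effectively does), the only way the two diverge is when $\tilde{R}$'s uniform draw happens to collide with a previously seen value — forcing the inconsistent event. So I would set
\begin{align*}
\left| \underset{\tilde{R}}{Pr}\left[ C_\mathcal{A}(T_{\tilde{R}})=1 \right] - \underset{R}{Pr}\left[ C_\mathcal{A}(T_{R})=1 \right] \right| \leq \underset{\tilde{R}}{Pr}\left[ T_P \text{ is inconsistent} \right],
\end{align*}
the standard inequality that distinguishing advantage is bounded by the probability of the ``bad'' event on which the two distributions differ.

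To bound this probability, I would union-bound over pairs. For the $i$'th cipher query that is answered by a fresh uniform draw (case 3 of $\tilde{R}$), the returned value lands in $G^2$ uniformly, so the chance it collides (in the left coordinate used to detect inconsistency) with any one of the at most $i-1$ previously fixed query-answer pairs is governed by $|G|^{-2}$ per pair, since $|G^2| = |G|^2$. Summing over all unordered pairs $1 \le i < j \le q_c$ gives exactly $\binom{q_c}{2}$ pairs, each contributing a collision probability of $|G|^{-2}$, yielding the claimed bound $\binom{q_c}{2}|G|^{-2}$. The main obstacle I anticipate is the bookkeeping around the bidirectional queries: one must verify that the coupling remains valid regardless of whether each query is a forward $(+,x)$ or backward $(-,y)$ query, and that the inconsistency event of Definition~\ref{inconsistent} (a repeated input with differing output, or a repeated output with differing input) is exactly the event on which a fresh uniform draw from $\tilde{R}$ fails to be extendable to a permutation. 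Care is needed to confirm that, conditioned on consistency, the induced measures coincide rather than merely being close — this is what makes the final bound depend only on $\binom{q_c}{2}|G|^{-2}$ and not on the function-oracle query counts.
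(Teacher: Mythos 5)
Your proposal is correct and follows essentially the same route as the paper: condition on the transcript of $\tilde{R}$ being consistent (in which case the induced distribution coincides with that of a lazily-sampled random permutation $R$), bound the distinguishing advantage by the probability of inconsistency, and union-bound that probability over the $\binom{q_c}{2}$ pairs of cipher queries, each colliding in $G^2$ with probability at most $|G|^{-2}$. The bookkeeping for forward versus backward queries that you flag as a concern is handled in the paper exactly as you anticipate, so no further work is needed.
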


\begin{proof}
Let $\sigma$ be a possible and consistent $\mathcal{A}$-transcript, then
\begin{align*}
\underset{R}{Pr}\left[ T_R = \sigma \right] = \begin{pmatrix}
|G|^{2} \\ q_c \end{pmatrix} = \underset{\tilde{R}}{Pr} \left[ T_{\tilde{R}} = \sigma \mid T_{\tilde{R}} \textit{ is consistent} \right],
\end{align*}
simply because the only difference between $T_R$ and $T_{\tilde{R}}$ is in the cipher queries, and when $T_{\tilde{R}}$ is consistent, we have no overlap on the query-answer pairs, hence we need only consider how to choose $q_c$ elements from $|G|^2$ many possible elements, without replacement. Let us now consider the probability of $T_{\tilde{R}}$ being inconsistent. If $T_{\tilde{R}}$ is inconsistent for some $1\leq i < j \leq q_c$ then either $x_i=x_j$ and $y_i \neq y_j$, or $x_i \neq x_j$ and $y_i = y_j$. For any given $i,j$, this happens with at most probability $|G|^{-2}$, because if $x_i=x_j$ is queried, then the $\tilde{R}$-oracle would return the corresponding $y_i=y_j$, but if $x_i \neq x_j$ is queried, then the $\tilde{R}$-oracle would return a uniformly random element (and likewise if $y_i=y_j$ or $y_i\neq y_j$ were queried to the inverse $\tilde{R}$-oracle.) Hence,
\begin{align*}
\underset{\tilde{R}}{Pr} \left[ T_{\tilde{R}} \textit{ is inconsistent} \right] \leq \begin{pmatrix}
q_c \\ 2 \end{pmatrix} \cdot |G|^{-2}.
\end{align*}
We thereby get that,
\begin{footnotesize}
\begin{align*}
&\left| \underset{\tilde{R}}{Pr}\left[ C_\mathcal{A}(T_{\tilde{R}})=1 \right] - \underset{R}{Pr}\left[ C_\mathcal{A}(T_{R})=1 \right] \right| \\
&\leq \left| \underset{\tilde{R}}{Pr}\left[ C_\mathcal{A}(T_{\tilde{R}})=1 | T_{\tilde{R}} \textit{ is consistent} \right] - \underset{R}{Pr}\left[ C_\mathcal{A}(T_{R})=1 \right]\right| \cdot \underset{\tilde{R}}{Pr}\left[ T_{\tilde{R}} \textit{ is consistent} \right] \\
	&\hspace*{10pt} + \left| \underset{\tilde{R}}{Pr}\left[ C_\mathcal{A}(T_{\tilde{R}})=1 | T_{\tilde{R}} \textit{ is inconsistent} \right] - \underset{R}{Pr}\left[ C_\mathcal{A}(T_{R})=1 \right]\right| \cdot \underset{\tilde{R}}{Pr}\left[ T_{\tilde{R}} \textit{ is inconsistent} \right] \\
&\leq \underset{\tilde{R}}{Pr}\left[ T_{\tilde{R}} \textit{ is inconsistent} \right] \\
&\leq \begin{pmatrix} q_c \\ 2 \end{pmatrix} \cdot |G|^{-2},
\end{align*}
\end{footnotesize}
as the distribution over $R$ is independent of the (in)consistency of $T_{\tilde{R}}$.
\end{proof}

Let us now focus on the distributions of $T_\Psi$ and $T_{\tilde{\Psi}}$, to show that they are identical unless the input to $g$ in the cipher query to $\Psi$ is equal to the oracle input to $h$ in $\mathcal{O}^h$. In order to do so, we first define the event $\textsf{BadG}(k)$.

\begin{defn}
For every specific key $k=(k^L,k^R)\in_R G^2$, we define $\textsf{BadG}(k)$ to be the set of all possible and consistent $\mathcal{A}$-transcripts $\sigma$, satisfying at least one of the following:
\begin{itemize}
\item[\textbf{BG1}:] $\exists i,j, 1\leq i \leq q_c, 1 \leq j \leq q_g$, such that $x_i^R \cdot k^R = x''_j$, or
\item[\textbf{BG2}:] $\exists i,j, 1\leq i \leq q_c, 1 \leq j \leq q_g$, such that $y_i^L \cdot (k^L)^{-1} = x''_j$.
\end{itemize}
\end{defn}

\begin{lem}\label{Lem39}
Let $k=(k^L,k^R)\in_R G^2$. For any possible and consistent $\mathcal{A}$-transcript $\sigma=(T_P,T_f,T_g)$, we have
\begin{align*}
\underset{k}{Pr}\left[ \sigma \in \textsf{BadG}(k) \right] \leq \frac{2q_gq_c}{|G|}.
\end{align*}
\end{lem}

\begin{proof}
We know that $\sigma \in \textsf{BadG}(k)$ if one of \textbf{BG1} or \textbf{BG2} occur, hence, using the union bound,
\begin{align*}
\underset{k}{Pr}\left[ \sigma \in \textsf{BadG}(k) \right] &= \underset{k}{Pr}\left[ \textbf{BG1} \text{ occurs } \vee \textbf{BG2} \text{ occurs } | \sigma \right] \\
	&\leq  \underset{k}{Pr}\left[ \textbf{BG1} \text{ occurs } | \sigma \right] +  \underset{k}{Pr}\left[ \textbf{BG2} \text{ occurs } | \sigma \right] \\
	&\leq q_gq_c\cdot |G|^{-1} + q_gq_c\cdot |G|^{-1} \\
	&= 2q_gq_c\cdot |G|^{-1}.
\end{align*}
\end{proof}

\begin{lem}\label{NotBadPsitoBarPsi}
Let $\sigma$ be a possible and consistent $\mathcal{A}$-transcript, then
\begin{align*}
\underset{\Psi}{Pr}\left[ T_{\Psi} = \sigma | \sigma \not\in \textsf{BadG}(k) \right] = \underset{\tilde{\Psi}}{Pr}\left[ T_{\tilde{\Psi}} = \sigma \right].
\end{align*}
\end{lem}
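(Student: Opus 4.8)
The plan is to fix the key $k$ (so that, for the given transcript $\sigma$, the event $\sigma\not\in\textsf{BadG}(k)$ is a definite condition) and to compare the two processes only over the randomness of the round functions. The first step is structural: I would trace the four rounds $\mathcal{F}_{g,f,f,g}$ on the input $x\cdot k=(x^L\cdot k^L,\,x^R\cdot k^R)$ to locate precisely where $g$ is evaluated. Writing $(L_0,R_0)=x\cdot k$ and $(L_i,R_i)=(R_{i-1},L_{i-1}\cdot f_i(R_{i-1}))$, the two applications of $g$ occur in round $1$, at argument $R_0=x^R\cdot k^R$, and in round $4$, at argument $R_3$. Since $y=(L_4,R_4)\cdot k$ and $L_4=R_3$, we have $R_3=y^L\cdot (k^L)^{-1}$. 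Hence the set of \emph{internal} $g$-arguments produced by the $q_c$ cipher queries is exactly $\{x_i^R\cdot k^R,\; y_i^L\cdot(k^L)^{-1}\}_{i}$, and conditions \textbf{BG1}, \textbf{BG2} say precisely that some adversarial $g$-oracle argument $x''_j$ hits one of these points. Thus $\sigma\not\in\textsf{BadG}(k)$ is exactly the statement that the internal $g$-arguments are disjoint from the oracle arguments $\{x''_j\}_j$.

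The core of the argument is then a factorization of the transcript probability. The processes $\Psi$ and $\tilde{\Psi}$ differ in one respect only: $\Psi$ answers $g$-oracle queries with $g$, whereas $\tilde{\Psi}$ answers them with an independent random function $h$; the cipher answers and the $f$-oracle answers are computed identically (same $f$, and the \emph{same} $g$ used internally) in both. I would therefore write $\Pr[T=\sigma]$ as the probability that the cipher answers and $f$-oracle answers of $\sigma$ are realized, times the conditional probability that the $g$-oracle answers $y''_1,\dots,y''_{q_g}$ are realized. In $\tilde{\Psi}$ the latter factor is $|G|^{-q_g}$ and is independent of everything else, since $h$ is an independent random function evaluated at the distinct points $x''_j$ (distinct by the consistency Note). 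In $\Psi$, conditioned on $\sigma\not\in\textsf{BadG}(k)$, the arguments $x''_j$ are distinct from one another and disjoint from all internal $g$-arguments; because $g$ is a uniformly random function, its values on this disjoint, distinct set are independent and uniform, so this factor is again $|G|^{-q_g}$ and independent of the cipher/$f$ part. Since the cipher/$f$ factor is literally the same expression in both processes, the two transcript probabilities coincide.

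The step I expect to be the main obstacle is justifying the independence in the $\Psi$ case, which is exactly where the hypothesis $\sigma\not\in\textsf{BadG}(k)$ is indispensable: if an internal $g$-argument coincided with some oracle argument $x''_j$, then the answer $g(x''_j)$ would be pinned down by, and correlated with, the cipher computation, the factorization above would fail, and the $g$-oracle would leak information about the internal Feistel wiring. I would also take care to note two collisions that do \emph{not} matter, so as to keep the comparison clean: collisions among the internal $g$-arguments across distinct cipher queries, and collisions between internal $g$-arguments and internal $f$-arguments, together with the fact that $f$ is shared between the cipher and the $f$-oracle. None of these affects the identity, because each occurs identically in $\Psi$ and in $\tilde{\Psi}$ (both run the cipher with the same $f,g$) and hence contributes the same amount to the cipher/$f$ factor on the two sides, cancelling in the comparison. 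Finally, I would record that fixing $k$ with $\sigma\not\in\textsf{BadG}(k)$ makes the displayed conditional identity hold verbatim over the randomness of $f,g,h$, which is all that is needed.
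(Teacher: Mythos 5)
Your proposal is correct and follows essentially the same route as the paper's proof: fix $k$, observe that the only internal evaluations of $g$ in $\mathcal{F}_{g,f,f,g}(x\cdot k)\cdot k$ occur at $x^R\cdot k^R$ (round 1) and $y^L\cdot(k^L)^{-1}$ (round 4), so that $\sigma\not\in\textsf{BadG}(k)$ says exactly that the adversary's $\mathcal{O}^g$-queries never hit these points, whence the $g$-oracle answers are fresh uniform values just as $h$'s are, while the cipher and $f$-oracle answers are generated identically in both games. The only difference is presentational --- you factor the transcript probability into a cipher/$f$ part and a $g$-oracle part, whereas the paper compares the answer distributions round by round and query by query --- and your handling of the harmless collisions (round-1 versus round-4 inputs within and across cipher queries) matches the paper's.
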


\begin{proof}
We want to show that the query answers in the subtranscripts of the games $\Psi$ and $\tilde{\Psi}$ are equally distributed, under the condition that neither of the events \textbf{BG1} nor \textbf{BG2} occur in game $\Psi$. Fix the key $k=(k^L,k^R)\in_R G^2$. Recall that the adversary does not query an oracle if it can determine the answer from previous queries.

In both games, for any $\mathcal{O}^f$-oracle query $x' \in G$, the query answer will be equally distributed in both games as the underlying random function $f$ is the same in both games.

In game $\Psi$, an $\mathcal{O}^g$-oracle query, $x''\in G$, will have a uniformly random answer as $g$ is a random function. Likewise, in game $\tilde{\Psi}$, an $\mathcal{O}^g$-oracle query, $x'' \in G$, will have a uniformly random answer as $h$ is a random function.

Consider now the permutation oracle $P= \mathcal{F}_{g,f,f,g}(x\cdot k)\cdot k$. We consider a query-answer pair $\langle x, y \rangle \in T_P$ for $x,y\in G^2$.

In both games, $x^R\cdot k^R$ will be the input to the first round function, which is $g$. In game $\tilde{\Psi}$ the output is always a uniformly random element, newly selected by $g$. In game $\Psi$, if $x^R\cdot k^R$ has already been queried to the $\mathcal{O}^g$-oracle, the output of the round function is the corresponding oracle answer, else it is a uniformly random element, newly selected by $g$.  As the former event in game $\Psi$ never occurs because the event \textbf{BG1} never occurs, the distributions are equal.

As both games have access to the same random function $f$, the second and third round function outputs will have equal distributions.

In both games, $y^L\cdot (k^L)^{-1}$ will be the input to the fourth round function, which is again $g$. In game $\tilde{\Psi}$ the output is always a uniformly random element, newly selected by $g$, unless $y^L\cdot (k^L)^{-1} = x^R\cdot k^R$, in which case the output is equal to the output of the first round function. In game $\Psi$, if $x^R\cdot k^R$ has already been queried to the $\mathcal{O}^g$-oracle, but not as input to the first round function, the output of the round function is the corresponding oracle answer. If $y^L\cdot (k^L)^{-1} = x^R\cdot k^R$, then the output is equal to the output of the first round function, else it is a uniformly random element newly selected by $g$. As the former event in game $\Psi$ never occurs because the event \textbf{BG2} never occurs, the distributions are equal.

As $\mathcal{A}$ does not ask a query if it can determine the answer based on previous queries, we see that the inverse permutation oracle, using $P^{-1}$, yields analogous distributions. Thus, the distributions for the two games must be equal.
\end{proof}

Let us show that the distributions of $T_{\tilde{\Psi}}$ and $T_{\tilde{R}}$ are identical, unless the same value is input to $f$ on two separate occasions. Here we also define when a key is "bad" as we did above, but altered such that it pertains to our current oracles.

\begin{defn}
For every specific key $k=(k^L,k^R)\in_R G^2$ and function $g\in_R \mathfrak{F}_{G\rightarrow G}$, define $\textsf{Bad}(k,g)$ to be the set of all possible and consistent $\mathcal{A}$-transcripts $\sigma$ satisfying at least one of the following events:
\begin{itemize}
\item[\textbf{B1}:] $\exists 1\leq i < j \leq q_c$, such that
\begin{align*}
x_i^L\cdot k^L \cdot g(x_i^R\cdot k^R) = x_j^L \cdot k^L \cdot g(x_j^R \cdot k^R)
\end{align*}
\item[\textbf{B2}:] $\exists 1\leq i < j \leq q_c$, such that
\begin{align*}
y_i^R\cdot (k^R)^{-1} \cdot \left(g(y_i^L\cdot (k^L)^{-1})\right)^{-1} = y_j^R\cdot (k^R)^{-1} \cdot \left(g(y_j^L\cdot (k^L)^{-1})\right)^{-1}
\end{align*}
\item[\textbf{B3}:] $\exists 1\leq i , j \leq q_c$, such that
\begin{align*}
x_i^L\cdot k^L \cdot g(x_i^R\cdot k^R) = y_j^R\cdot (k^R)^{-1} \cdot \left(g(y_j^L\cdot (k^L)^{-1})\right)^{-1}
\end{align*}
\item[\textbf{B4}:] $\exists 1\leq i \leq q_c, 1\leq j \leq q_f$, such that
\begin{align*}
x_i^L\cdot k^L \cdot g(x_i^R\cdot k^R) = x'_j
\end{align*}
\item[\textbf{B5}:] $\exists 1\leq i \leq q_c, 1\leq j \leq q_f$, such that
\begin{align*}
y_i^R\cdot (k^R)^{-1} \cdot \left(g(y_i^L\cdot (k^L)^{-1})\right)^{-1} = x'_j
\end{align*}
\end{itemize}
\end{defn}

\begin{lem}\label{Lem42}
Let $k=(k^L,k^R)\in_R G^2$. For any possible and consistent $\mathcal{A}$-transcript $\sigma$, we have that
\begin{align*}
\underset{k,g}{Pr}\left[ \sigma \in \textsf{Bad}(k,g)\right] \leq \left( q_c^2+2q_fq_c + 2\cdot \begin{pmatrix} q_c \\ 2 \end{pmatrix} \right)\cdot |G|^{-1}.
\end{align*}
\end{lem}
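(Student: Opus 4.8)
The plan is to bound $\Pr_{k,g}[\sigma\in\textsf{Bad}(k,g)]$ by a union bound, first over the five constituent events and then, inside each event, over the finitely many index tuples it quantifies over. Since $\sigma$ is fixed, every $x_i,y_i,x'_j$ is a constant and each summand is a single collision probability over the random key $k=(k^L,k^R)$ and the random function $g$. It helps to name the two quantities that recur throughout,
\[
X_i \defeq x_i^L\cdot k^L\cdot g(x_i^R\cdot k^R),\qquad
Y_j \defeq y_j^R\cdot (k^R)^{-1}\cdot\bigl(g(y_j^L\cdot (k^L)^{-1})\bigr)^{-1},
\]
which are exactly the input to the second round function (forward, query $i$) and the input to the third round function (backward, query $j$). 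With this notation \textbf{B1} is a collision $X_i=X_j$, \textbf{B2} a collision $Y_i=Y_j$, \textbf{B3} a collision $X_i=Y_j$, and \textbf{B4}, \textbf{B5} are collisions of an $X_i$ (resp.\ $Y_j$) with a fixed $f$-query input $x'_j$. The lemma therefore reduces to showing that each individual collision occurs with probability at most $|G|^{-1}$, after which the index counts $\binom{q_c}{2},\binom{q_c}{2},q_c^2,q_fq_c,q_fq_c$ assemble to the stated bound $(q_c^2+2q_fq_c+2\binom{q_c}{2})|G|^{-1}$.

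First I would dispatch the routine events \textbf{B1}, \textbf{B2}, \textbf{B4}, \textbf{B5}. For \textbf{B1} fix $i<j$; consistency gives $(x_i^L,x_i^R)\neq(x_j^L,x_j^R)$. If $x_i^R\neq x_j^R$ then right-multiplication by $k^R$ keeps the arguments distinct, so $g$ is read at two different points, its outputs are independent and uniform, and $X_i=X_j$ pins one output to a fixed function of the other, which happens with probability $|G|^{-1}$. If instead $x_i^R=x_j^R$ (forcing $x_i^L\neq x_j^L$) the two $g$-outputs cancel and we are left with the impossible equation $x_i^L=x_j^L$, contributing $0$. Event \textbf{B2} is identical with the backward quantities, and \textbf{B4}, \textbf{B5} are simpler still: $X_i$ (resp.\ $Y_j$) is a fixed element multiplied onto a uniform $g$-output, hence uniform and independent of the constant $x'_j$, so each collision has probability $|G|^{-1}$. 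These four events together contribute $(2\binom{q_c}{2}+2q_fq_c)|G|^{-1}$.

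The hard part will be \textbf{B3}. Here $X_i$ reads $g$ at $u\defeq x_i^R\cdot k^R$ while $Y_j$ reads $g$ at $v\defeq y_j^L\cdot (k^L)^{-1}$; I would condition on $k$. When $u\neq v$ the two $g$-values are independent and uniform, so exactly as before $\Pr[X_i=Y_j\mid k]=|G|^{-1}$. The genuine obstacle is the degenerate case $u=v$: then the \emph{same} random value $g(u)=g(v)$ feeds both sides (this is forced because both outer rounds use the one function $g$), and $X_i=Y_j$ collapses to an equation of the form $(g(u))^2=c(k)$ for a fixed $c(k)$, whose number of solutions is the number of square roots of $c(k)$. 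Unlike in the bit-string setting, this count is sensitive to the $2$-torsion of $G$, and it is precisely the point where the non-abelian generalization must be handled with care. I would control it by observing that $u=v$ already forces the relation $x_i^R\cdot k^R\cdot k^L=y_j^L$ between the two independent subkeys, an event of probability $|G|^{-1}$ over $k$, and then argue that the coincident branch cannot contribute beyond the generic term; the cleanest way to make this rigorous is to work in the regime where squaring is injective (e.g.\ $|G|$ odd), so the square-root count is exactly one, matching the characteristic-$\neq 2$ hypothesis appearing elsewhere for the $4$-round Feistel. This yields $\Pr[X_i=Y_j]\le|G|^{-1}$ for every pair, hence $q_c^2|G|^{-1}$ for \textbf{B3}, and a final union bound over the five events gives the claimed estimate.
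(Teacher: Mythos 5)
Your overall route is exactly the paper's: a union bound over \textbf{B1}--\textbf{B5} and over the index tuples inside each, with each individual collision bounded by $|G|^{-1}$, and the counts $\binom{q_c}{2},\binom{q_c}{2},q_c^2,q_fq_c,q_fq_c$ assembling the stated constant. Your treatment of \textbf{B1}, \textbf{B2}, \textbf{B4}, \textbf{B5} is correct and in fact more scrupulous than the paper's one-line justification (``the probability that two elements chosen from $G$ are equal is $|G|^{-1}$''): the case split on whether the two $g$-arguments coincide, and the observation that the coincident branch of \textbf{B1}/\textbf{B2} is killed by consistency, is precisely what makes those bounds legitimate.

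The gap is in \textbf{B3}, and you have correctly located it but not closed it. Writing $u=x_i^R k^R$ and $v=y_j^L(k^L)^{-1}$, your two proposed resolutions both fail for the lemma as stated. First, the claim that ``the coincident branch cannot contribute beyond the generic term'' is false in general: conditioned on $u=v$ the equation becomes $g(u)^2=a^{-1}b$ with $a=x_i^Lk^L$, $b=y_j^R(k^R)^{-1}$, and in $G=(\mathbb{Z}/2\mathbb{Z})^n$ with a transcript satisfying $x_i^L\cdot x_i^R= y_j^L\cdot y_j^R$ the right-hand side is the identity and \emph{every} value of $g(u)$ is a solution, so $\Pr_{k,g}[X_i=Y_j]=(1-|G|^{-1})|G|^{-1}+|G|^{-1}\cdot 1=2|G|^{-1}-|G|^{-2}$, strictly exceeding $|G|^{-1}$. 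Second, retreating to groups with injective squaring (odd order) proves a genuinely weaker statement: the lemma is asserted for every $G\in\mathcal{G}$, and the motivating instance $(\mathbb{Z}/2\mathbb{Z})^n$ is excluded by that hypothesis; note also that the characteristic-$\neq 2$ condition you cite appears in the paper only in the corollary quoted from Patel--Ramzan--Sundaram, not in the hypotheses of this lemma or of Theorem~\ref{GentryRamzanMain}. The honest repair is the one your own decomposition suggests: bound each \textbf{B3} pair by $\Pr[u=v]+\Pr[X_i=Y_j\mid u\neq v]\leq 2|G|^{-1}$, which changes the lemma's constant to $\bigl(2q_c^2+2q_fq_c+2\binom{q_c}{2}\bigr)|G|^{-1}$ and propagates harmlessly (as a constant factor) into Theorem~\ref{GentryRamzanMain}. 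You should be aware that the paper's own proof does not address this coincidence at all, so your analysis, while incomplete, is probing the one genuinely delicate point in the lemma.
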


\begin{proof}
We have that $\sigma \in \textsf{Bad}(k,g)$ if it satisfies a $\textit{\textbf{Bi}}$ for some $\textbf{i}=\lbrace1,\ldots,5\rbrace$. Using that $k^L,k^R$ are uniform and independently chosen, and $g\in_R \mathfrak{F}_{G\rightarrow G}$, we may achieve an upper bound on the individual event probabilities, and then use the union bound.

There are $\begin{pmatrix} q_c \\ 2 \end{pmatrix}$ many ways of picking $i,j$ such that $1\leq i< j \leq q_c$, also, $q_fq_c$ many ways of picking $i,j$ such that $1\leq i \leq q_c, 1 \leq j \leq q_f$, and $q_c^2$ many ways of picking $i,j$ such that $1 \leq i,j\leq q_c$. The probability that two elements chosen from $G$ are equal is $|G|^{-1}$, so we may bound each event accordingly and achieve, using the union bound, that
\begin{footnotesize}
\begin{align*}
\underset{k,g}{Pr}\left[ \sigma \in \textsf{Bad}(k,g)\right] &= \underset{k,g}{Pr}\left[ \bigvee_{i=1}^5 \textit{\textbf{Bi}} \text{ occurs } | \sigma \right] \\
&\leq \sum_{i=1}^5 \underset{k,g}{Pr}\left[ \textit{\textbf{Bi}} \text{ occurs } | \sigma \right] \\
&\leq \begin{pmatrix} q_c \\ 2 \end{pmatrix}\cdot |G|^{-1} + \begin{pmatrix} q_c \\ 2 \end{pmatrix}\cdot |G|^{-1} + q_c^2 \cdot |G|^{-1} + q_fq_c \cdot |G|^{-1} + q_fq_c \cdot |G|^{-1} \\
&= \left( q_c^2 + 2q_fq_c + 2\begin{pmatrix} q_c \\ 2 \end{pmatrix} \right) \cdot |G|^{-1}.
\end{align*}
\end{footnotesize}
\end{proof}

\begin{lem}\label{Lem43}
Let $\sigma$ be a possible and consistent $\mathcal{A}$-transcript, then
\begin{align*}
\underset{\tilde{\Psi}}{Pr}\left[ T_{\tilde{\Psi}} = \sigma | \sigma \not\in \textsf{Bad}(k,g)\right] = \underset{\tilde{R}}{Pr}\left[ T_{\tilde{R}} = \sigma \right].
\end{align*}
\end{lem}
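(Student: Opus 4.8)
The plan is to compute both sides explicitly and observe that they coincide, the right-hand side being immediate and the conditioned left-hand side carrying all the content. Because $\sigma$ is consistent, the standing Note guarantees that the cipher inputs, the $f$-oracle inputs $x'_j$, and the $g$-oracle inputs $x''_j$ are pairwise distinct. Hence in the process $\tilde{R}$ every cipher answer is a freshly drawn uniform element of $G^2$, every $f$-oracle answer a fresh uniform element of $G$, and every $g$-oracle answer (taken from $g$) a fresh uniform element of $G$, so that
\[
\underset{\tilde{R}}{\Pr}[T_{\tilde{R}} = \sigma] = |G|^{-2q_c}\cdot |G|^{-q_f}\cdot |G|^{-q_g}.
\]
I would then establish exactly this value for the conditioned $\tilde{\Psi}$-probability.

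First I would dispose of the oracle part. In $\tilde{\Psi}$ the $f$-oracle is answered by the same random $f$ and the $g$-oracle by the independent random $h$; both are genuine uniform random functions evaluated at distinct points, so they contribute $|G|^{-q_f}$ and $|G|^{-q_g}$, matching $\tilde{R}$. The entire difficulty therefore sits in the cipher queries, and the goal is to show that, conditioned on $\sigma\notin\textsf{Bad}(k,g)$, the $q_c$ cipher answers behave like $2q_c$ independent uniform draws from $G$, yielding the factor $|G|^{-2q_c}$.

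The central step is a bookkeeping of where the internal round function $f$ is evaluated. Fix any $k,g$ with $\sigma\notin\textsf{Bad}(k,g)$. Tracing $\mathcal{F}_{g,f,f,g}(x\cdot k)\cdot k$, for each query $\langle x_i,y_i\rangle$ the input to the first $f$-round is $A_i := x_i^L\cdot k^L\cdot g(x_i^R\cdot k^R)$ and the input to the second $f$-round is $B_i := y_i^R\cdot(k^R)^{-1}\cdot (g(y_i^L\cdot(k^L)^{-1}))^{-1}$; both are determined by $\sigma,k,g$ alone, since the round-$1$ and round-$4$ values of $g$ are read off $x_i$ and $y_i$ and this does not depend on the query direction. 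A short computation shows that producing exactly $\langle x_i,y_i\rangle$ is equivalent to $f$ taking the two prescribed values $f(A_i)=(x_i^R\cdot k^R)^{-1}\cdot B_i$ and $f(B_i)=A_i^{-1}\cdot y_i^L\cdot(k^L)^{-1}$, the round-$1$ and round-$4$ applications of $g$ being already absorbed into $A_i$ and $B_i$ and imposing no further constraint. The negations of the five conditions defining $\textsf{Bad}(k,g)$ say precisely that the $2q_c$ points $A_1,\dots,A_{q_c},B_1,\dots,B_{q_c}$ are pairwise distinct (from \textbf{B1}, \textbf{B2}, \textbf{B3}) and disjoint from the $f$-oracle inputs $x'_1,\dots,x'_{q_f}$ (from \textbf{B4}, \textbf{B5}). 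Thus these $2q_c$ constraints on $f$ lie at distinct fresh points, are mutually independent, and are independent of the $f$-oracle answers, so a uniform random $f$ satisfies all of them with probability exactly $|G|^{-2q_c}$.

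Multiplying the three independent contributions gives $|G|^{-2q_c}\cdot|G|^{-q_f}\cdot|G|^{-q_g}$ for \emph{every} good pair $(k,g)$; since this conditional probability (over $f$ and $h$) is the same constant for all such $(k,g)$, averaging over $(k,g)$ conditioned on $\sigma\notin\textsf{Bad}(k,g)$ leaves it unchanged, and it equals $\underset{\tilde{R}}{\Pr}[T_{\tilde{R}}=\sigma]$, which is the claim. I expect the main obstacle to be the third paragraph: verifying rigorously that matching the transcript is genuinely equivalent to $f$ hitting two prescribed values per query, that the forward and inverse directions pin down the same pair $(A_i,B_i)$ and the same two $f$-constraints, and that $\textsf{Bad}$ is exactly the collision event ensuring these $2q_c$ values are fresh and independent.
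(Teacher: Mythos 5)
Your proof is correct and follows essentially the same route as the paper's: you compute $\Pr_{\tilde{R}}[T_{\tilde{R}}=\sigma]=|G|^{-(2q_c+q_f+q_g)}$ directly, introduce the same round-two and round-three $f$-inputs (your $A_i,B_i$ are the paper's $X_i,Y_i$), derive the same two equivalent constraints $f(A_i)=(x_i^R k^R)^{-1}B_i$ and $f(B_i)=A_i^{-1}y_i^L(k^L)^{-1}$, and invoke \textbf{B1}--\textbf{B5} to see that all $2q_c+q_f$ inputs to $f$ are distinct, giving the matching probability for every good $(k,g)$. Your explicit remark that the conditional probability is the same constant for all good $(k,g)$, so averaging preserves it, is a point the paper leaves implicit.
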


The following proof is based on the proof in \citep{GentryRamzan} which refers to \citep{NaorReingold} for the first part of their argument. We need the generalization of this argument and so also include it.

\begin{proof}
Since $\sigma$ is a possible $\mathcal{A}$-transcript, we have for all $1\leq i \leq q_c, 1\leq j \leq q_f, 1\leq k \leq q_g$:
\begin{align*}
C_\mathcal{A}\left[ \lbrace \langle x_1,y_1 \rangle, \ldots, \langle x_{i},y_{i} \rangle \rbrace_P, \lbrace \langle x'_1,y'_1 \rangle, \ldots, \langle x'_{j},y'_{j} \rangle \rbrace_f, \lbrace \langle x''_1,y''_1 \rangle, \ldots, \langle x''_{k},y''_{k} \rangle \rbrace_g \right] \\ \hspace*{70pt}\in \lbrace (+,x_{i+1}), (-,y_{i+1}), (\mathcal{O}^f, x'_{j+1}), (\mathcal{O}^g,x''_{k+1})\rbrace .
\end{align*}
Therefore, $T_{\tilde{R}}=\sigma$ if and only if $\forall 1\leq i \leq q_c, \forall 1\leq j \leq q_f$, and $\forall 1\leq k \leq q_g$, the $i,j,k$'th respective answers $\tilde{R}$ gives are $y_i$ or $x_i$, and $x'_j$ and $x''_k$, respectively. As $\mathcal{A}$ never repeats any part of a query, we have, by the definition of $\tilde{R}$, that the $i$'th cipher-query answer is an independent and uniform element of $G^2$, and as $f$ and $g$ were modelled as random function oracles, so too will their oracle outputs be independent and uniform elements of $G$. Hence,
\begin{align*}
\underset{\tilde{R}}{Pr}\left[ T_{\tilde{R}} = \sigma \right] = |G|^{-(2q_c+q_f+q_g)}.
\end{align*}

For the second part of this proof, we fix $k,g$ such that $\sigma \not\in \textsf{Bad}(k,g)$ and seek to compute $\underset{f,h}{Pr}\left[ T_{\tilde{\Psi}} = \sigma \right]$.
Since $\sigma$ is a possible $\mathcal{A}$-transcript, we have that $T_{\tilde{\Psi}}= \sigma$ if and only if
\begin{itemize}
\item $y_i = \mathcal{F}_{g,f,f,g}(x_i\cdot k)\cdot k$ for all $1\leq i \leq q_c$,
\item $y'_j = f(x'_j)$ for all $1\leq j \leq q_f$, and
\item $y''_k = g(x''_k)$ for all $1\leq k \leq q_g$ (note that $g=h$ here.)
\end{itemize}
If we define
\begin{align*}
X_i &:= x_i^L\cdot k^L \cdot g(x_i^R\cdot k^R) \\
Y_i &:= y_i^R \cdot (k^R)^{-1} \cdot \left(g(y_i^L\cdot (k^L)^{-1})\right)^{-1},
\end{align*}
then $(y_i^L, y_i^R) = \tilde{\Psi}(x_i^L,x_i^R)$  if and only if
\begin{center}
$k^R\cdot f(X_i) = (x_i^R)^{-1} \cdot Y_i$ \hspace*{5pt}  and \hspace*{5pt}  $X_i\cdot f(Y_i) = y_i^L \cdot (k^L)^{-1}$,
\end{center}
where the second equality of the latter is equivalent to $(k^L)^{-1}\cdot (f(Y_i))^{-1} = (y_i^L)^{-1}\cdot X_i$. Observe that, for all $1 \leq i < j \leq q_c$, $X_i \neq X_j$ (by \textit{\textbf{B1}}) and $Y_i \neq Y_j$ (by \textit{\textbf{B2}}.) Similarly, $1 \leq i < j \leq q_c$, $X_i \neq Y_j$ (by \textit{\textbf{B3}}.) Also, for all $1 \leq i \leq q_c$ and for all $1 \leq j \leq q_f$, $x'_j \neq X_i$ (by \textit{\textbf{B4}}) and $x'_j \neq Y_i$ (by \textit{\textbf{B5}}.) Hence, $\sigma \not\in \textsf{Bad}(k,g)$ implies that all inputs to $f$ are distinct. This then implies that $Pr_{f,h}\left[ T_{\tilde{\Psi}} = \sigma \right] = |G|^{-(2q_c+q_f+q_g)}$ as $h$ was also modelled as a random function, independent from $g$.
Thus, as we assumed that $k$ and $g$ were chosen such that $\sigma \not\in \textsf{Bad}(k,g)$,
\begin{align*}
\underset{\tilde{\Psi}}{Pr}\left[ T_{\tilde{\Psi}} = \sigma | \sigma \not\in \textsf{Bad}(k,g)\right] &= |G|^{-(2q_c+q_f+q_g)} =  \underset{\tilde{R}}{Pr}\left[ T_{\tilde{R}} = \sigma \right].
\end{align*}
\end{proof}

\subsection{Proof of Theorem~\ref{GentryRamzanMain}}
To complete the proof of Theorem~\ref{GentryRamzanMain}, we combine the above lemmas into the following probability estimation.

\begin{proof}[Proof of Theorem~\ref{GentryRamzanMain}]
Let $\Gamma$ be the set of all possible and consistent $\mathcal{A}$-transcripts $\sigma$ such that $\mathcal{A}(\sigma)=1$. In the following, we ease notation, for the sake of the reader. We let $\textsf{BadG}(k)$ be denoted by $BadG$ and $\textsf{Bad}(k,g)$ by $Bad$. Furthermore, we abbreviate inconsistency as $incon.$. Let us consider the cases between $\Psi,\tilde{\Psi}$ and $\tilde{R}$.

\begin{footnotesize}
\begin{align*}
&\left| Pr_{\Psi}\left[ C_\mathcal{A}(T_\Psi)=1\right] - Pr_{\tilde{\Psi}}\left[ C_\mathcal{A}(T_{\tilde{\Psi}})=1\right] \right| \\
&\leq \left| \sum_{\sigma\in\Gamma} \left( Pr_{\Psi}\left[ T_\Psi = \sigma \right] - Pr_{\tilde{\Psi}}\left[ T_{\tilde{\Psi}} = \sigma \right]\right) \right| + Pr_{\tilde{\Psi}}\left[ T_{\tilde{\Psi}} \hspace*{4pt} incon.\right] \\
&\leq \sum_{\sigma\in \Gamma} \left| Pr_\Psi \left[ T_{\Psi} = \sigma \mid \sigma \not\in BadG \right] - Pr_{\tilde{\Psi}}\left[ T_{\tilde{\Psi}} = \sigma \right] \right| \cdot Pr_k \left[ \sigma \not\in BadG \right] \\
	&\hspace*{15pt}+ \left| \sum_{\sigma \in \Gamma} \left( Pr_{\Psi} \left[ T_\Psi = \sigma \mid \sigma \in BadG \right] - Pr_{\tilde{\Psi}} \left[ T_{\tilde{\Psi}} = \sigma \right]\right) \cdot Pr_k \left[ \sigma \in BadG \right]\right| \\
		&\hspace*{30pt}+ Pr_{\tilde{\Psi}}\left[ T_{\tilde{\Psi}} \hspace*{4pt} incon. \right] \\
&\leq \left| \sum_{\sigma \in \Gamma} \left( Pr_{\Psi} \left[ T_\Psi = \sigma \mid \sigma \in BadG \right] - Pr_{\tilde{\Psi}} \left[ T_{\tilde{\Psi}} = \sigma \right]\right) \cdot Pr_k \left[ \sigma \in BadG \right]\right| + q_c(q_c-1)|G|^{-1},
\end{align*}
\end{footnotesize}
where we in the last estimate used Lemma~\ref{NotBadPsitoBarPsi} and a consideration of the maximal amount of possible inconsistent pairs.

At the same time,
\begin{footnotesize}
\begin{align*}
&\left| Pr_{\tilde{\Psi}}\left[ C_\mathcal{A}(T_{\tilde{\Psi}})=1\right] - Pr_{\tilde{R}}\left[ C_\mathcal{A}(T_{\tilde{R}})=1\right] \right| \\
&\leq \left| \sum_{\sigma\in\Gamma} \left( Pr_{\tilde{\Psi}}\left[ T_{\tilde{\Psi}} = \sigma \right] - Pr_{\tilde{R}}\left[ T_{\tilde{R}} = \sigma \right]\right) \right| + Pr_{\tilde{R}}\left[ T_{\tilde{R}} \hspace*{4pt} incon.\right] + Pr_{\tilde{\Psi}}\left[ T_{\tilde{\Psi}} incon. \right] \\
&\leq \sum_{\sigma\in \Gamma} \left| Pr_{\tilde{\Psi}} \left[ T_{\tilde{\Psi}} = \sigma \mid \sigma \not\in Bad \right] - Pr_{\tilde{R}}\left[ T_{\tilde{R}} = \sigma \right] \right| \cdot Pr_k \left[ \sigma \not\in Bad \right] \\
	&\hspace*{15pt} + \left| \sum_{\sigma \in \Gamma} \left( Pr_{\tilde{\Psi}} \left[ T_{\tilde{\Psi}} = \sigma \mid \sigma \in Bad \right] - Pr_{\tilde{R}} \left[ T_{\tilde{R}} = \sigma \right]\right) \cdot Pr_k \left[ \sigma \in Bad \right]\right| \\
		&\hspace*{30pt}+ Pr_{\tilde{R}}\left[ T_{\tilde{R}} \hspace*{4pt} incon. \right] + Pr_{\tilde{\Psi}}\left[ T_{\tilde{\Psi}} incon. \right] \\
&\leq \left| \sum_{\sigma \in \Gamma} \left( Pr_{\tilde{\Psi}} \left[ T_{\tilde{\Psi}} = \sigma \mid \sigma \in Bad \right] - Pr_{\tilde{R}} \left[ T_{\tilde{R}} = \sigma \right]\right) \cdot Pr_k \left[ \sigma \in Bad \right]\right| + \begin{pmatrix} q_c \\ 2 \end{pmatrix}|G|^{-2} + 2\begin{pmatrix} q_c \\ 2 \end{pmatrix}|G|^{-1},
\end{align*}
\end{footnotesize}
where we in the last estimate used Lemma~\ref{Lem43} and the proof of Lemma~\ref{Lem37}.

Let us use the above in a temporary estimate,
\begin{footnotesize}
\begin{align}
&\left| Pr_\Psi\left[ C_\mathcal{A}(T_\Psi)=1\right] - Pr_R\left[C_\mathcal{A}(T_R)=1\right] \right| \nonumber \\
&= \left| Pr_\Psi\left[ C_\mathcal{A}(T_\Psi)=1\right] - Pr_{\tilde{\Psi}}\left[C_\mathcal{A}(T_{\tilde{\Psi}})=1\right] \right| \nonumber \\
	&\hspace*{40pt}+ \left| Pr_{\tilde{\Psi}}\left[ C_\mathcal{A}(T_{\tilde{\Psi}})=1\right] - Pr_{\tilde{R}}\left[C_\mathcal{A}(T_{\tilde{R}})=1\right] \right| \nonumber \\
		&\hspace*{80pt}+ \left| Pr_{\tilde{R}}\left[ C_\mathcal{A}(T_{\tilde{R}})=1\right] - Pr_R\left[C_\mathcal{A}(T_R)=1\right] \right| \nonumber \\
&\leq \left| \sum_{\sigma \in \Gamma} \left( Pr_{\Psi} \left[ T_\Psi = \sigma \mid \sigma \in BadG \right] - Pr_{\tilde{\Psi}} \left[ T_{\tilde{\Psi}} = \sigma \right]\right) \cdot Pr_k \left[ \sigma \in BadG \right]\right| + q_c(q_c-1)|G|^{-1} \nonumber\\
	&\hspace*{30pt}+ \left| \sum_{\sigma \in \Gamma} \left( Pr_{\tilde{\Psi}} \left[ T_{\tilde{\Psi}} = \sigma \mid \sigma \in Bad \right] - Pr_{\tilde{R}} \left[ T_{\tilde{R}} = \sigma \right]\right) \cdot Pr_k \left[ \sigma \in Bad \right]\right| + \begin{pmatrix} q_c \\ 2 \end{pmatrix}|G|^{-2} + 2\begin{pmatrix} q_c \\ 2 \end{pmatrix}|G|^{-1} \nonumber\\
		&\hspace*{60pt}+ \begin{pmatrix} q_c \\ 2 \end{pmatrix}|G|^{-2} \label{GenRamMainEstimate},
\end{align}
\end{footnotesize}
where we in the last estimate also used Lemma~\ref{Lem37}.

We may assume WLOG that
\begin{footnotesize}
\begin{align*}
\sum_{\sigma \in \Gamma} Pr_\Psi\left[ T_\Psi = \sigma \mid \sigma \in BadG \right] \cdot Pr_k\left[ \sigma \in BadG \right] \leq \sum_{\sigma \in \Gamma} Pr_{\tilde{\Psi}}\left[ T_{\tilde{\Psi}} = \sigma\right] \cdot Pr_k\left[ \sigma \in BadG \right]
\end{align*}
\end{footnotesize}
and likewise,
\begin{footnotesize}
\begin{align*}
\sum_{\sigma \in \Gamma} Pr_{\tilde{\Psi}}\left[ T_{\tilde{\Psi}} = \sigma \mid \sigma \in BadG \right] \cdot Pr_k\left[ \sigma \in BadG \right] \leq \sum_{\sigma \in \Gamma} Pr_{\tilde{R}}\left[ T_{\tilde{R}} = \sigma\right] \cdot Pr_k\left[ \sigma \in BadG \right],
\end{align*}
\end{footnotesize}
such that by Lemma~\ref{Lem39}, respectively Lemma~\ref{Lem42}, we get the following continued estimate from (\ref{GenRamMainEstimate}), using the triangle inequality and that $|\Gamma|\leq |G|^{2q_c+q_f+q_g}$ (every combination of query elements).
\begin{footnotesize}
\begin{align*}
&\left| Pr_\Psi\left[ C_\mathcal{A}(T_\Psi)=1\right] - Pr_R\left[C_\mathcal{A}(T_R)=1\right] \right| \\
&\leq 2 \sum_{\sigma \in \Gamma} Pr_{\tilde{\Psi}} \left[T_{\tilde{\Psi}} = \sigma \right]\cdot Pr_k \left[ \sigma \in BadG \right] + 2q_c(q_c-1)|G|^{-1} \\
	&\hspace*{30pt}+ 2\sum_{\sigma \in \Gamma} Pr_{\tilde{R}}\left[ T_{\tilde{R}}= \sigma\right] \cdot Pr_k \left[\sigma \in Bad \right] \\
		&\hspace*{60pt}+ 2\begin{pmatrix} q_c \\ 2 \end{pmatrix}|G|^{-2} \\
&\leq 2|\Gamma|\cdot |G|^{-(2q_c+q_f+q_g)}\cdot \max_{\sigma \in \Gamma} Pr_k \left[ \sigma \in BadG \right] + 2q_c(q_c-1)|G|^{-1} \\
	&\hspace*{30pt}+ 2|\Gamma|\cdot |G|^{-(2q_c+q_f+q_g)}\cdot\max_{\sigma\in\Gamma} Pr_k \left[ \sigma \in Bad \right] \\
		&\hspace*{60pt}+ 2\begin{pmatrix} q_c \\ 2 \end{pmatrix}|G|^{-2} \\
&\leq 4q_gq_c\cdot |G|^{-1} + 2q_c(q_c-1)|G|^{-1} + 2\left(q_c^2 + 2q_fq_c + 2\begin{pmatrix} q_c \\ 2 \end{pmatrix}\right)|G|^{-1} + 2\begin{pmatrix} q_c \\ 2 \end{pmatrix}|G|^{-2} \\
&= (2q_c^2+4q_gq_c + 4q_fq_c + 2q_c^2-2q_c)|G|^{-1} + 2\begin{pmatrix} q_c \\ 2 \end{pmatrix}\left(2|G|^{-1} + |G|^{-2}\right).
\end{align*}
\end{footnotesize}
\end{proof}

If we denote the total amount of queries as $q=q_c+q_f+q_g$, then we may quickly estimate and reword the main theorem as:

\begin{thm}
Let $f,g$ be modelled as random oracles, let $k=(k^L,k^R)\in_R G^2$, let $\Psi_k^{f,g}(x)=\mathcal{F}_{g,f,f,g}(x\cdot k)\cdot k$, and let $R\in_R \mathfrak{P}_{G^2 \rightarrow G^2}$. Then, for any $4$-oracle adversary $\mathcal{A}$, with at most $q$ total queries, we have
\begin{align*}
\left| Pr\left[ \mathcal{A}^{\Psi,\Psi^{-1},f,g} = 1\right] - Pr\left[ \mathcal{A}^{R,R^{-1},f,g} = 1\right] \right| \leq 2(3q^2-2q)|G|^{-1} + (q^2-q)|G|^{-2}.
\end{align*}
\end{thm}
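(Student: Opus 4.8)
The plan is to obtain this simplified statement directly from Theorem~\ref{GentryRamzanMain} by a purely arithmetic estimate that exploits the constraint $q = q_c + q_f + q_g$; no new probabilistic argument is needed. First I would recall the bound already proved in Theorem~\ref{GentryRamzanMain}, namely
\begin{align*}
(2q_c^2 +4q_fq_c + 4q_gq_c + 2q_c^2 - 2q_c)|G|^{-1} + 2\binom{q_c}{2}(2|G|^{-1} + |G|^{-2}),
\end{align*}
and collect the coefficients of $|G|^{-1}$ and of $|G|^{-2}$ separately. Using $2\binom{q_c}{2} = q_c^2 - q_c$, the coefficient of $|G|^{-2}$ is exactly $q_c^2 - q_c$, while the coefficient of $|G|^{-1}$ simplifies to $6q_c^2 + 4q_fq_c + 4q_gq_c - 4q_c$.

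The crucial observation, and the step I expect to carry the argument, is that one cannot simply replace each of $q_c, q_f, q_g$ by $q$: doing so already overshoots the target coefficient $6q^2 - 4q$ (it would give roughly $14q^2$). Instead I would use the identity $q_f + q_g = q - q_c$ to rewrite the mixed terms as $4q_fq_c + 4q_gq_c = 4q_c(q - q_c)$, so that the coefficient of $|G|^{-1}$ becomes $2q_c^2 + 4qq_c - 4q_c$. The partial cancellation of the quadratic-in-$q_c$ terms is precisely what makes the estimate tight, so identifying and applying the linear constraint is the heart of the calculation.

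It then remains to bound the two coefficients by their counterparts in $q$. For the $|G|^{-2}$ term, $q_c^2 - q_c \leq q^2 - q$ follows from $q_c \leq q$ together with monotonicity of $x \mapsto x^2 - x$ on $[1,\infty)$ (the case $q_c = 0$ being trivial). For the $|G|^{-1}$ term, I would show $2q_c^2 + 4qq_c - 4q_c \leq 6q^2 - 4q$; viewing the left-hand side as a function of $q_c$ for fixed $q$, its derivative $4(q_c + q - 1)$ is nonnegative on $0 \leq q_c \leq q$, so the function is increasing and attains its maximum at $q_c = q$, where it equals $6q^2 - 4q$. Assembling the two pieces gives the coefficient bounds $6q^2 - 4q = 2(3q^2 - 2q)$ and $q^2 - q$, and hence exactly $2(3q^2 - 2q)|G|^{-1} + (q^2 - q)|G|^{-2}$, completing the proof. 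Everything apart from the substitution in the second paragraph is routine monotonicity bookkeeping.
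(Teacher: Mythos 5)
Your proposal is correct and is in substance the same as the paper's proof: both obtain the statement from Theorem~\ref{GentryRamzanMain} by pure arithmetic exploiting $q=q_c+q_f+q_g$ (and $q_f,q_g\geq 0$), with no new probabilistic content. The only difference is the algebraic path --- the paper bounds $q_c^2+2q_fq_c+2q_gq_c+q_c^2-q_c$ by $2q^2-q$ by adding the nonnegative quantity $2(q_f+q_g)^2+2(q_fq_c+q_gq_c)-q_f-q_g$ so as to complete the square $2(q_c+q_f+q_g)^2-(q_c+q_f+q_g)$, and treats the $2\binom{q_c}{2}$ term separately, whereas you collect the $|G|^{-1}$ and $|G|^{-2}$ coefficients first, substitute $q_f+q_g=q-q_c$, and maximize over $q_c\in[0,q]$ --- and both routes yield exactly the coefficients $2(3q^2-2q)$ and $q^2-q$.
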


\begin{proof}
Given Theorem~\ref{GentryRamzanMain}, we get, by using that $q_f,q_g\geq 0$,
\begin{align*}
&q_c^2 +2q_fq_c + 2q_gq_c + q_c^2 - q_c \\ 
	&= 2(q_c^2 + q_fq_c + q_gq_c) - q_c \\
	&\leq 2(q_c^2 + q_fq_c + q_gq_c) + (2(q_f+q_g)^2 + 2(q_fq_c+q_gq_c) -q_f-q_g)-q_c \\
	&= 2(q_c^2 + 2q_fq_c  + q_f^2 + 2q_fq_g + 2q_gq_c+ q_g^2) - (q_c+q_f+q_g) \\
	&= 2(q_c+q_f+q_g)^2 - (q_c+q_f+q_g) \\
	&= 2q^2-q.
\end{align*}
As $2\cdot\begin{pmatrix} q_c \\ 2 \end{pmatrix} =q_c^2-q_c \leq q^2-q$, we get the final estimate by some reordering.
\end{proof}

\section{Conclusion}\label{ConclusionSection}
We generalized the Even and Mansour scheme as well as the Feistel cipher to work over arbitrary groups and proved that classical results pertain to the group versions. Based on the work in \citep{Gorjan}, we hope that this opens avenues to proving that classical schemes may be made quantum secure by generalizing them to certain groups. For further work, we suggest generalizing other classical schemes and using the underlying group structures to do Hidden Shift reductions.

The author would like to thank his thesis advisor Gorjan Alagic for the topic, enlightening questions and answers, as well as the encouragements along the way.
The author would also like to thank the Department of Mathematical Sciences, at the University of Copenhagen, for lending their facilities during the writing process.

\pagebreak
\bibliographystyle{alpha}
\bibliography{Articlebib}

\begin{thebibliography}{KLLNP16}

\bibitem[AR17]{Gorjan}
Gorjan Alagic and Alexander Russell.
\newblock Quantum-secure symmetric-key cryptography based on hidden shifts.
\newblock {\em EUROCRYPT 2017}, 2017.

\bibitem[BR02]{BlackRogaway}
John Black and Phillip Rogaway.
\newblock {\em Ciphers with Arbitrary Finite Domains}, pages 114--130.
\newblock Springer Berlin Heidelberg, 2002.

\bibitem[DKS12]{DKS}
Orr Dunkelman, Nathan Keller, and Adi Shamir.
\newblock Minimalism in cryptography: The {Even-Mansour} scheme revisited.
\newblock {\em EUROCRYPT}, 2012.

\bibitem[DR02]{DingDong}
Yan~Zong Ding and Michael~O. Rabin.
\newblock Hyper-encryption and everlasting security.
\newblock In {\em STACS 2002, 19th Annual Symposium on Theoretical Aspects of
  Computer Science, Antibes - Juan les Pins, France, March 14-16, 2002,
  Proceedings}, pages 1--26, 2002.

\bibitem[EM97]{EM}
Shimon Even and Yishay Mansour.
\newblock A construction of a cipher from a single pseudorandom permutation.
\newblock {\em Cryptology}, 1997.

\bibitem[GR04]{GentryRamzan}
Craig Gentry and Zulfikar Ramzan.
\newblock Eliminating random permutation oracles in the {Even-Mansour} cipher.
\newblock {\em ASIACRYPT}, 2004.

\bibitem[Jea16]{TikZhelp}
Jeremy Jean.
\newblock \textup{Ti\textit{k}{Z}} for cryptographers.
\newblock \url{http://www.iacr.org/authors/tikz/}, 2016.

\bibitem[KL15]{KL}
Jonathan Katz and Yehuda Lindell.
\newblock {\em Introduction to Modern Cryptography}.
\newblock CRC Press, 2 edition, 2015.

\bibitem[KLLNP16]{Kaplan}
Marc Kaplan, Gaetan Leurent, Anthony Leverrier, and Maria Naya-Plasencia.
\newblock Breaking symmetric cryptosystems using quantum period finding.
\newblock {\em ArXiv}, 2016.

\bibitem[KM10]{BrokenFeistel}
Hidenori Kuwakado and Masakatu Morii.
\newblock Quantum distinguisher between the 3-round {Feistel} cipher and the
  random permutation.
\newblock In {\em ISIT}, pages 2682--2685. IEEE, 2010.

\bibitem[KM12]{BrokenEM}
Hidenori Kuwakado and Masakatu Morii.
\newblock Security on the quantum-type {Even-Mansour} cipher.
\newblock In {\em ISITA}, pages 312--316. IEEE, 2012.

\bibitem[KR01]{Kilr}
Joe Kilian and Phillip Rogaway.
\newblock How to protect {DES} against exhaustive key search (an analysis of
  {DESX}).
\newblock {\em J. Cryptology}, 14(1):17--35, 2001.

\bibitem[LR88]{LubyR}
Michael Luby and Charles Rackoff.
\newblock How to construct pseudorandom permutations from pseudorandom
  functions.
\newblock {\em SIAM J. Comput.}, 17(2):373--386, 1988.

\bibitem[NR99]{NaorReingold}
Moni Naor and Omer Reingold.
\newblock On the construction of pseudo-random permutations: {Luby-Rackoff}
  revisited.
\newblock {\em Journal of Cryptology}, 12:29--66, 1999.
\newblock Preliminary version in: \textit{Proc. STOC 97}.

\bibitem[PRS02]{PatelRamzanSundaram}
Sarvar Patel, Zulfikar Ramzan, and Ganapathy~S. Sundaram.
\newblock Luby-rackoff ciphers: Why {XOR} is not so exclusive.
\newblock In {\em Selected Areas in Cryptography, 9th Annual International
  Workshop, SAC 2002, St. John's, Newfoundland, Canada, August 15-16, 2002.
  Revised Papers}, pages 271--290, 2002.

\bibitem[Vau98]{Vaudenay}
Serge Vaudenay.
\newblock {\em Provable security for block ciphers by decorrelation}, pages
  249--275.
\newblock Springer Berlin Heidelberg, Berlin, Heidelberg, 1998.

\bibitem[Zha16]{Zhandry}
Mark Zhandry.
\newblock A note on quantum-secure {PRPs}.
\newblock {\em CoRR}, abs/1611.05564, 2016.

\end{thebibliography}

\clearpage
\appendix
\newpage
\section{Super Pseudorandomness of the Group EM Scheme}
In the following, we assume that the adversary $\mathcal{A}$ is unbounded computationally, but may only make polynomially many queries to the $E/D$- and $P/P^{-1}$-oracles, where all oracles act as black boxes and $P$ is a truly random permutation. We intend to play the "pseudorandom or random permutation game": $\mathcal{A}$ is given an encryption oracle $E$ (with related decryption oracle $D$) which is randomly chosen with equal probability from the following two options:
\begin{enumerate}
\item A random key $k\in_R G$ is chosen uniformly and used to encrypt as $E(m)=E_k(m)=P(m\cdot k )\cdot k$, or
\item A random permutation $\pi\in_R \mathfrak{P}_{G \rightarrow G}$ is chosen and used to encrypt as $E(m)=\pi(m)$.
\end{enumerate}
The adversary wins the game if it can distinguish how $E$ was chosen, with probability significantly better than $1/2$. More explicitly, we wish to prove the following for the group Even-Mansour scheme.

\begin{thm}\label{PseudoEMbounded}
Assume $P\in_R\mathfrak{P}_{G\rightarrow G}$ and let the key $k\in_R G$. For any probabilistic adversary $\mathcal{A}$, limited to polynomially many $E/D$- and $P/P^{-1}$-oracle queries, the adversarial advantage of $\mathcal{A}$ is bounded by
\begin{align}\label{AdvA}
\text{Adv}(\mathcal{A}) \defeq \left| Pr\left[ \mathcal{A}_{E_k,D_k}^{P,P^{-1}} = 1 \right] - Pr\left[\mathcal{A}_{\pi,\pi^{-1}}^{P,P^{-1}} = 1 \right]\right| = \mathcal{O}\left(\frac{st}{|G|}\right).
\end{align}
where $s$ is the total number of $E/D$-queries and $t$ is the total number of $P/P^{-1}$-queries, i.e. the success probability is negligible.
\end{thm}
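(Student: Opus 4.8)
The plan is to bound the advantage by a game-hopping argument in the style of Kilian and Rogaway \citep{Kilr}, isolating a single collision (``link'') event between the $E/D$-oracles and the $P/P^{-1}$-oracles. Since $\mathcal{A}$ is computationally unbounded and constrained only in its query count, I would first fix its coins to a best choice and treat $\mathcal{A}$ as \emph{deterministic}, so that its next query is a function of the transcript of answers seen so far. I then compare the \textbf{real} world, in which one uniform $k\in_R G$ is drawn and the cipher oracles answer via $E_k(m)=P(m\cdot k)\cdot k$ and $D_k(c)=P^{-1}(c\cdot k^{-1})\cdot k^{-1}$, with the \textbf{ideal} world, in which the cipher oracle is an independent uniform permutation $\pi$ while $P/P^{-1}$ is still the same truly random $P$.

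The key device is to defer the sampling of $k$. I would run the ideal experiment to completion, answering the $\le s$ cipher queries with $\pi$ and the $\le t$ public queries with $P$, producing a transcript of cipher-pairs $\langle m_i,c_i\rangle$ and public-pairs $\langle u_j,v_j\rangle$ that, crucially, is generated \emph{independently of $k$}. Only afterwards do I draw $k\in_R G$ and declare the event $\mathsf{Bad}$ to occur if some cipher-pair \emph{links} to some public-pair, i.e. if $m_i\cdot k=u_j$ (an input link) or $c_i\cdot k^{-1}=v_j$ (an output link) for some $i,j$. I would then argue, by an identical-until-bad comparison, that conditioned on $\neg\mathsf{Bad}$ the real and ideal transcripts are distributed \emph{identically}: the mask $m\mapsto m\cdot k$ is a bijection, so distinct cipher queries force evaluations of $P$ at distinct points, and in the absence of a link these points are fresh relative to the public queries, whence $P(m_i\cdot k)\cdot k$ is uniform over the unused range and matches exactly the output of the independent permutation $\pi$; the public queries are answered by the same $P$ in both worlds and so contribute identically.

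Finally I would bound $\Pr[\mathsf{Bad}]$ by a union bound over the at most $s\cdot t$ cipher/public pairs. Because the transcript was fixed \emph{before} $k$ was sampled, each link is a single equation in the one unknown $k$: the input link $m_i\cdot k=u_j$ holds iff $k=m_i^{-1}\cdot u_j$, and the output link $c_i\cdot k^{-1}=v_j$ holds iff $k=v_j^{-1}\cdot c_i$, each a fixed element, so over uniform $k$ the probability is at most $|G|^{-1}$ per equation. Summing gives $\Pr[\mathsf{Bad}]=\mathcal{O}(st/|G|)$, and the identical-until-bad lemma yields $\text{Adv}(\mathcal{A})\le\Pr[\mathsf{Bad}]=\mathcal{O}(st/|G|)$. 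The principal obstacle is the middle step: establishing \emph{exact} equality of the adaptively interleaved transcripts across all four oracles and both query directions once a link is excluded. The non-abelian setting is the source of the extra care, since inputs are masked on the right by $k$ while the backward direction is masked by $k^{-1}$; one must verify that right-multiplication by $k$ and by $k^{-1}$ each remains a bijection and that the two masks do not conspire to create correlated collisions, after which the classical XOR analysis transfers with only notational changes.
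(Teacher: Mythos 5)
Your proposal follows essentially the same route as the paper's proof: a Kilian--Rogaway-style identical-until-bad game argument with a deterministic adversary, a deferred sampling of $k$ after the transcript is fixed (the paper's \textbf{Game R'}), the same ``link'' event $m_i\cdot k=x_j$ or $c_i\cdot k^{-1}=y_j$ defining the bad keys, and the same union bound of $2st/|G|$. The middle step you flag as the principal obstacle is exactly what the paper spends its appendices on (the Game X/X' equivalence and the case analysis matching Game R to Game R'), but your outline of it is the right one.
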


\begin{proof}
We may assume that $\mathcal{A}$ is deterministic (in essence, being unbounded computationally affords $\mathcal{A}$ the possibility of derandomizing its strategy by searching all its possible random choices and picking the most effective choices after having computed the effectiveness of each choice. For an example, see \citep{DingDong}.) We may also assume that $\mathcal{A}$ never queries a pair in $S_s$ or $T_t$ more than once, where $S_i$ and $T_i$ are the sets of $i$ $E/D$- and $P/P^{-1}$-queries, respectively. Let us define two main games, that $\mathcal{A}$ could play, through oracle interactions (see next page for the explicit game descriptions.)

Note that the steps in italics have no impact on the response to $\mathcal{A}$'s queries, we simply continue to answer the queries and only note if the key turns bad, i.e. we say that a key $k$ is \textbf{bad w.r.t. the sets $S_s$ and $T_t$} if there exist $i,j$ such that either $m_i \cdot k = x_j$ or $c_i \cdot k^{-1} = y_j$, and $k$ is \textbf{good} otherwise. There are at most $\frac{2st}{|G|}$ bad keys.

\textbf{Game R}: We consider the random game which corresponds to the latter probability in (\ref{AdvA}), i.e.
\begin{align*}
P_R := Pr\left[ \mathcal{A}_{\pi,\pi^{-1}}^{P,P^{-1}}=1\right].
\end{align*}

 From the definition of \textbf{Game R}, we see that, letting $Pr_R$ denote the probability when playing \textbf{Game R},
\begin{align}\label{PR}
Pr_R \left[ \mathcal{A}_{E,D}^{P,P^{-1}}=1\right] = P_R,
\end{align}
as we are simply giving uniformly random answers to each of $\mathcal{A}$'s queries.

\newpage
\begin{footnotesize}\label{GamesXandR}
\noindent\textbf{Notation:} We let $S^1_i = \lbrace m | (m,c)\in S_{i} \rbrace, \hspace*{5pt} S^2_i = \lbrace c | (m,c)\in S_{i} \rbrace, T^1_i = \lbrace x | (x,y)\in T_{i} \rbrace,$ and $\hspace*{5pt} T^2_i = \lbrace y | (x,y)\in T_{i} \rbrace.$
\end{footnotesize}
\begin{small}
\hrule\noindent
\begin{minipage}[t]{0.48\textwidth}
\vspace*{0.01\textheight}\textbf{GAME R:} Initially, let $S_0$ and $T_0$ be empty and flag unset. Choose $k\in_R G$, then answer the $i+1$'st query as follows: \linebreak
\vspace*{0.0005\textheight}

\textbf{$E$-oracle query with $m_{i+1}$:} \\
 \textbf{1.} Choose $c_{i+1}\in_R G\setminus S^2_i$. \\
 \textbf{2.} \textit{If $P(m_{i+1}\cdot k)\in T^2_i$, or $P^{-1}(c_{i+1}\cdot k^{-1})\in T^1_i$, then set flag to \textbf{bad}.} \\
 \textbf{3.} Define $E(m_{i+1})=c_{i+1}$ (and thereby also $D(c_{i+1})=m_{i+1}$) and return $c_{i+1}$. \\

\vspace*{0.023\textheight}
\textbf{$D$-oracle query with $c_{i+1}$:} \\
 \textbf{1.} Choose $m_{i+1}\in_R G\setminus S^1_i$. \\
 \textbf{2.} \textit{If $P^{-1}(c_{i+1}\cdot k^{-1})\in T^1_i$, or $P(m_{i+1}\cdot k)\in T^2_i$, then set flag to \textbf{bad}.} \\
 \textbf{3.} Define $D(c_{i+1}) = m_{i+1}$ (and thereby also $E(m_{i+1})=c_{i+1}$) and return $m_{i+1}$. \\

\vspace*{0.045\textheight}
\textbf{$P$-oracle query with $x_{i+1}$:} \\
 \textbf{1.} Choose $y_{i+1}\in_R G\setminus T^2_i$. \\
 \textbf{2.} \textit{If $E(x_{i+1}\cdot k^{-1})\in S^2_i$, or $D(y_{i+1}\cdot k)\in S^1_i$, then set flag to \textbf{bad}.} \\
 \textbf{3.} Define $P(x_{i+1}) = y_{i+1}$ (and thereby also $P^{-1}(y_{i+1})=x_{i+1}$) and return $y_{i+1}$. \\

\vspace*{0.023\textheight}
\textbf{$P^{-1}$-oracle query with $y_{i+1}$:} \\
 \textbf{1.} Choose $x_{i+1}\in_R G\setminus T^1_i$. \\
 \textbf{2.} \textit{If $D(y_{i+1}\cdot k)\in S^1_i$, or $E(x_{i+1}\cdot k^{-1})\in S^2_i$, then set flag to \textbf{bad}.} \\
 \textbf{3.} Define $P^{-1}(y_{i+1}) = x_{i+1}$ (and thereby also $P(x_{i+1})=y_{i+1}$) and return $x_{i+1}$.
\end{minipage}\hspace*{0.01\textwidth} \vrule \hspace*{0.01\textwidth}
\begin{minipage}[t]{0.48\textwidth}
\vspace*{0.01\textheight}\textbf{GAME X:} Initially, let $S_0$ and $T_0$ be empty and flag unset. Choose $k\in_R G$, then answer the $i+1$'st query as follows: \linebreak
\vspace*{0.0005\textheight}

\textbf{$E$-oracle query with $m_{i+1}$:} \\
 \textbf{1.} Choose $c_{i+1}\in_R G\setminus S^2_i$. \\
 \textbf{2.} If $P(m_{i+1}\cdot k)\in T^2_i$ then redefine $c_{i+1} := P(m_{i+1}\cdot k)\cdot k$ \textit{and set flag to \textbf{bad}}. Else if $P^{-1}(c_{i+1}\cdot k^{-1})\in T^1_i$, \textit{then set flag to \textbf{bad} and} goto Step 1. \\
 \textbf{3.} Define $E(m_{i+1})=c_{i+1}$ (and thereby also $D(c_{i+1})=m_{i+1}$) and return $c_{i+1}$. \\

\textbf{$D$-oracle query with $c_{i+1}$:} \\
 \textbf{1.} Choose $m_{i+1}\in_R G\setminus S^1_i$. \\
 \textbf{2.} If $P^{-1}(c_{i+1}\cdot k^{-1})\in T^1_i$ then redefine $m_{i+1} := P^{-1}(c_{i+1}\cdot k^{-1})\cdot k^{-1}$ \textit{and set flag to \textbf{bad}}. Else if $P(m_{i+1}\cdot k)\in T^2_i$, \textit{then set flag to \textbf{bad} and} goto Step 1. \\
 \textbf{3.} Define $D(c_{i+1}) = m_{i+1}$ (and thereby also $E(m_{i+1})=c_{i+1}$) and return $m_{i+1}$. \\

\textbf{$P$-oracle query with $x_{i+1}$:} \\
 \textbf{1.} Choose $y_{i+1}\in_R G\setminus T^2_i$. \\
 \textbf{2.} If $E(x_{i+1}\cdot k^{-1})\in S^2_i$ then redefine $y_{i+1} := E(x_{i+1}\cdot k^{-1})\cdot k^{-1}$ \textit{and set flag to \textbf{bad}}. Else if $D(y_{i+1}\cdot k)\in S^1_i$, \textit{then set flag to \textbf{bad} and} goto Step 1. \\
 \textbf{3.} Define $P(x_{i+1}) = y_{i+1}$ (and thereby also $P^{-1}(y_{i+1})=x_{i+1}$) and return $y_{i+1}$. \\

\textbf{$P^{-1}$-oracle query with $y_{i+1}$:} \\
 \textbf{1.} Choose $x_{i+1}\in_R G\setminus T^1_i$. \\
 \textbf{2.} If $D(y_{i+1}\cdot k)\in S^1_i$ then redefine $x_{i+1} := D(y_{i+1}\cdot k)\cdot k$ \textit{and set flag to \textbf{bad}}. Else if $E(x_{i+1}\cdot k^{-1})\in S^2_i$, \textit{then set flag to \textbf{bad} and} goto Step 1. \\
 \textbf{3.} Define $P^{-1}(y_{i+1}) = x_{i+1}$ (and thereby also $P(x_{i+1})=y_{i+1}$) and return $x_{i+1}$.
\vspace*{0.02\textheight}
\end{minipage}
\end{small}

\newpage

\textbf{Game X}: Consider the experiment which corresponds to the game played in the prior probability in (\ref{AdvA}) and define this probability as
\begin{align*}
P_X := Pr\left[ \mathcal{A}_{E_k,D_k}^{P,P^{-1}}=1\right].
\end{align*}
We define \textbf{Game X}, as outlined above. Note that again the parts in italics have no impact on the response to $\mathcal{A}$'s queries, however, this time, when a key becomes \emph{bad}, we choose a new random value repeatedly for the response until the key is no longer \emph{bad}, and then reply with this value. Intuitively, \textbf{Game X} behaves like \textbf{Game R} except that \textbf{Game X} checks for consistency as it does not want $\mathcal{A}$ to win on some collision. It is non-trivial to see that, letting $Pr_X$ denote the probability when playing \textbf{Game X},
\begin{align}\label{PX}
Pr_X \left[ \mathcal{A}_{E,D}^{P,P^{-1}}=1\right] =  P_X.
\end{align}
The proof is given in Appendix~\ref{ExplainX}.

We have defined both games in such a way that their outcomes differ only in the event that a key turns \textit{bad}. Thus, any circumstance which causes a difference in the instructions carried out by the games, will also cause both games to set the flag to \emph{bad}. Let $BAD$ denote the event that the flag gets set to \emph{bad} and the case that the flag is not set to \emph{bad} by $\neg BAD$, then the two following lemmas follow from the previous statement.

\begin{lem}\label{BadisBad}
$Pr_R\left[ BAD \right] = Pr_X\left[ BAD \right]$ and $Pr_R\left[ \neg BAD \right] = Pr_X\left[ \neg BAD \right]$.
\end{lem}

\begin{lem}\label{Notbadisnotbad}
$Pr_R\left[ \mathcal{A}_{E,D}^{P,P^{-1}}=1| \neg BAD \right] = Pr_X\left[ \mathcal{A}_{E,D}^{P,P^{-1}}=1| \neg BAD \right]$.
\end{lem}

Using these two lemmas we are able to prove the lemma:

\begin{lem}
$\text{Adv}(\mathcal{A}) \leq Pr_R\left[ BAD \right]$.
\end{lem}

This is because, using (\ref{PR}), (\ref{PX}), and lemmas \ref{BadisBad} and \ref{Notbadisnotbad}, 
\begin{align*}
\text{Adv}(\mathcal{A}) &= |P_X - P_R| \\
						&= \left| Pr_X\left[ \mathcal{A}_{E,D}^{P,P^{-1}}=1 \right] - Pr_R\left[ \mathcal{A}_{E,D}^{P,P^{-1}}=1 \right] \right| \\
						&= | Pr_X\left[ \mathcal{A}_{E,D}^{P,P^{-1}}=1 | \neg BAD \right]\cdot Pr_X\left[\neg BAD \right] \\
							&\hspace*{25pt}+ Pr_X\left[ \mathcal{A}_{E,D}^{P,P^{-1}}=1 | BAD \right]\cdot Pr_X\left[ BAD \right] \\
							&\hspace*{45pt} - Pr_R\left[ \mathcal{A}_{E,D}^{P,P^{-1}}=1 | \neg BAD \right]\cdot Pr_R\left[\neg BAD \right] \\
							&\hspace*{65pt}- Pr_R\left[ \mathcal{A}_{E,D}^{P,P^{-1}}=1 | BAD \right]\cdot Pr_R\left[ BAD \right] | \\
						&= \left| Pr_R\left[ BAD \right]\cdot \left( Pr_X\left[ \mathcal{A}_{E,D}^{P,P^{-1}}=1 | BAD \right] - Pr_R\left[ \mathcal{A}_{E,D}^{P,P^{-1}}=1 | BAD \right]\right) \right| \\
						&\leq Pr_R\left[ BAD \right].
\end{align*}

Let us now define yet another game, \textbf{Game R'}.
\vspace*{0.015\textheight}
\hrule
\begin{footnotesize}
\begin{minipage}[t]{0.9\textwidth}
\vspace*{0.005\textheight} \textbf{GAME R':} Initially, let $S_0$ and $T_0$ be empty and flag unset. Answer the $i+1$'st query as follows: \linebreak
\textbf{$E$-oracle query with $m_{i+1}$:} \\
 \textbf{1.} Choose $c_{i+1}\in_R G\setminus S^2_i$. \\
 \textbf{2.} Define $E(m_{i+1}):=c_{i+1}$ (and thereby also $D(c_{i+1}):=m_{i+1}$) and return $c_{i+1}$. \\

\textbf{$D$-oracle query with $c_{i+1}$:} \\
 \textbf{1.} Choose $m_{i+1}\in_R G\setminus S^1_i$. \\
 \textbf{2.} Define $D(c_{i+1}) := m_{i+1}$ (and thereby also $E(m_{i+1}):=c_{i+1}$) and return $m_{i+1}$. \\

\textbf{$P$-oracle query with $x_{i+1}$:} \\
 \textbf{1.} Choose $y_{i+1}\in_R G\setminus T^2_i$. \\
 \textbf{2.} Define $P(x_{i+1}) := y_{i+1}$ (and thereby also $P^{-1}(y_{i+1}):=x_{i+1}$) and return $y_{i+1}$. \\

\textbf{$P^{-1}$-oracle query with $y_{i+1}$:} \\
 \textbf{1.} Choose $x_{i+1}\in_R G\setminus T^1_i$. \\
 \textbf{2.} Define $P^{-1}(y_{i+1}) := x_{i+1}$ (and thereby also $P(x_{i+1}):=y_{i+1}$) and return $x_{i+1}$.\\
 
\textit{After all queries have been answered, choose $k\in_R G$. If there exists $(m,c)\in S_{s}$ and $(x,y)\in T_{t}$ such that $k$ becomes bad then set flag to \textbf{bad}.}
\vspace*{0.005\textheight}
\end{minipage}
\end{footnotesize}
\hrule
\vspace*{0.015\textheight}
This game runs as \textbf{Game R} except that it does not choose a key until all of the queries have been answered and then checks for badness of the flag (by checking whether or not the key has become bad). It can be shown that the flag is set to \textbf{\textit{bad}} in \textbf{Game R} if and only if the flag is set to \textbf{\textit{bad}} in \textbf{Game R'} (by a consideration of cases (see Appendix~\ref{App:ReqRR}.)) Hence, we get the following lemma.

\begin{lem}
$Pr_R\left[ BAD \right] = Pr_{R'} \left[ BAD \right]$.
\end{lem}

Using the above lemma, we now only have to bound $Pr_{R'} \left[ BAD \right]$ in order to bound $\text{Adv}(\mathcal{A})$, but as the adversary queries at most $s$ elements to the $E/D$-oracles and at most $t$ elements to the $P/P^{-1}$-oracles, and the key $k$ is chosen uniformly at random from $G$, we have that the probability of choosing a bad key is at most $2st/|G|$, i.e.
\begin{align*}
\text{Adv}(\mathcal{A}) \leq Pr_{R'} \left[ BAD \right] = \mathcal{O}\left( \frac{st}{|G|} \right).
\end{align*}

\end{proof}

Restating the theorem, we get:

\begin{thm}
For any probabilistic adversary $\mathcal{A}$, limited to polynomially many $E/D$- and $P/P^{-1}$-oracle queries, the generalized EM scheme over a group $G$ is a super pseudorandom permutation.
\end{thm}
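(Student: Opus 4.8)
The plan is to deduce this qualitative statement directly from the quantitative bound already established in Theorem~\ref{PseudoEMbounded}, so that almost all of the substantive work is already in hand. Recall that a keyed permutation is a super pseudorandom permutation precisely when, for every probabilistic distinguisher making only polynomially many queries to both the forward and inverse oracles, the distinguishing advantage is bounded by a negligible function of the security parameter $\lambda$. Thus the task reduces to showing that the $\mathcal{O}(st/|G|)$ bound is negligible under the stated query restriction.

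First I would invoke Theorem~\ref{PseudoEMbounded}, which asserts that for $P \in_R \mathfrak{P}_{G \rightarrow G}$ and $k \in_R G$, any adversary $\mathcal{A}$ making $s$ many $E/D$-queries and $t$ many $P/P^{-1}$-queries satisfies
\begin{align*}
\text{Adv}(\mathcal{A}) = \left| Pr\left[ \mathcal{A}_{E_k,D_k}^{P,P^{-1}} = 1 \right] - Pr\left[\mathcal{A}_{\pi,\pi^{-1}}^{P,P^{-1}} = 1 \right]\right| = \mathcal{O}\left(\frac{st}{|G|}\right).
\end{align*}
This is exactly the advantage appearing in the SPRP definition, with the ideal object being a uniformly random permutation $\pi$ on $G$ together with its inverse $\pi^{-1}$, so no further reformulation of the distinguishing experiment is needed.

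Next I would use the query restriction together with the size assumption on $G$. Since $\mathcal{A}$ is limited to polynomially many queries, there is a polynomial $q(\cdot)$ with $s \leq q(\lambda)$ and $t \leq q(\lambda)$, so the numerator $st$ is at most $q(\lambda)^2$, again a polynomial in $\lambda$. By the standing assumption $|G_\lambda| \geq 2^\lambda$ recorded in Section~\ref{GenDefns}, we obtain
\begin{align*}
\text{Adv}(\mathcal{A}) = \mathcal{O}\left(\frac{st}{|G|}\right) \leq \mathcal{O}\left(\frac{q(\lambda)^2}{2^\lambda}\right),
\end{align*}
and the right-hand side is negligible in $\lambda$, since any fixed polynomial divided by $2^\lambda$ decays faster than the inverse of every polynomial. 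Hence $\text{Adv}(\mathcal{A})$ is negligible for every admissible $\mathcal{A}$, which is precisely the defining condition for a super pseudorandom permutation, and the conclusion follows.

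The expected main obstacle is essentially nonexistent at this point: all of the genuine difficulty lives in the proof of Theorem~\ref{PseudoEMbounded}, namely the game-hopping argument through \textbf{Game R}, \textbf{Game X}, and \textbf{Game R'} and the final bound on the probability of a bad key. The only point requiring mild care is making the implicit quantification over $\lambda$ explicit, so that a per-instance bound of the form $\mathcal{O}(st/|G|)$ is correctly read as a negligible function once $s,t$ are taken polynomial and $|G| \geq 2^\lambda$; this is routine and requires no further combinatorial input.
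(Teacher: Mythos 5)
Your proposal is correct and matches the paper's own treatment: the paper presents this theorem as a direct restatement of the quantitative bound in Theorem~\ref{PseudoEMbounded}, with all the substantive work living in the game-hopping argument there. Your explicit observation that polynomially bounded $s,t$ together with $|G_\lambda|\geq 2^\lambda$ turns $\mathcal{O}(st/|G|)$ into a negligible function is exactly the (implicit) step the paper takes, so nothing further is needed.
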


\newpage
\section{Proof of probability of Game X}\label{ExplainX}
Recall the definition of $S^1_i, S^2_i, T^1_i$ and $T^2_i$ (see p.~\pageref{GamesXandR}.) We write $S_s$ and $T_t$ to denote the final transcripts. We drop the index $i$ if it is understood. We begin by defining \textbf{Game X'}.

\vspace*{0.015\textheight}
\hrule
\begin{footnotesize}
\begin{minipage}[t]{0.9\textwidth}
\vspace*{0.005\textheight}\textbf{GAME X':} Initially, let $S_0$ and $T_0$ be empty. Choose $k\in_R G$, then answer the $i+1$'st query as follows:

\textbf{$E$-oracle query with $m_{i+1}$:} \\
 \textbf{1.} If $P(m_{i+1}\cdot k)\in T^2_i$ return $P(m_{i+1}\cdot k)\cdot k$ \\
 \textbf{2.} Else choose $y_{i+1}\in_R G\setminus T^2_i$, define $P(m_{i+1}\cdot k) = y_{i+1}$, and return $y_{i+1}\cdot k$. \\

\textbf{$D$-oracle query with $c_{i+1}$:} \\
 \textbf{1.} If $P^{-1}(c_{i+1}\cdot k^{-1}) \in T^1_i$, return $P^{-1}(c_{i+1}\cdot k^{-1})\cdot k^{-1}$. \\
 \textbf{2.} Else choose $x_{i+1}\in_R G\setminus T^1_i$, define $P^{-1}(c_{i+1}\cdot k^{-1})=x_{i+1}$, and return $x_{i+1}\cdot k^{-1}$. \\

\textbf{$P$-oracle query with $x_{i+1}$:} \\
 \textbf{1.} If $P(x_{i+1})\in T^2_i$, return $P(x_{i+1})$. \\
 \textbf{2.} Else choose $y_{i+1}\in_R G\setminus T^2_i$, define $P(x_{i+1})=y_{i+1}$, and return $y_{i+1}$. \\

\textbf{$P^{-1}$-oracle query with $y_{i+1}$:} \\
 \textbf{1.} If $P^{-1}(y_{i+1})\in T^1_i$, return $P^{-1}(y_{i+1})$. \\
 \textbf{2.} Else choose $x_{i+1}\in_R G\setminus T^1_i$, define $P^{-1}(y_{i+1})=x_{i+1}$, and return $x_{i+1}$.
\vspace*{0.005\textheight}
\end{minipage}
\end{footnotesize}
\hrule
\vspace*{0.015\textheight}

Notice that the only difference between \textbf{Game X'} and the game defining $P_X$ is that the latter has defined all values for the oracles beforehand while the former "defines as it goes." Still, an adversary cannot tell the difference between playing the \textbf{Game X'} or the game defining $P_X$. Thus, $Pr_{X'}\left[ \mathcal{A}_{E,D}^{P,P^{-1}}=1 \right] = P_X$.

What we wish to show is that
\begin{align*}
Pr_X \left[ \mathcal{A}_{E,D}^{P,P^{-1}}=1\right] = Pr_{X'} \left[ \mathcal{A}_{E,D}^{P,P^{-1}}=1\right],
\end{align*}
i.e. that no adversary $\mathcal{A}$ may distinguish between playing \textbf{Game X} and playing \textbf{Game X'}, even negligibly. We will do this by showing that no adversary $\mathcal{A}$ may distinguish between the outputs given by the two games. As both games begin by choosing a uniformly random key $k$ and as we show that for this value the games are identical, we hereby assume such a key $k$ to be a fixed, but arbitrary, value for the remainder of this proof.

Considering the definitions of \textbf{Game X} and \textbf{Game X'}, we see that the two games define their $E/D$- and $P/P^{-1}$-oracles differently: the former defining both, while the latter defines only the $P/P^{-1}$-oracle and computes the $E/D$-oracle. We show that \textbf{Game X} also answers its $E/D$-oracle queries by referring to $P/P^{-1}$, although not directly.

Given the partial functions $E$ and $P$ in \textbf{Game X}, i.e. functions having been defined for all values up to and including the $i$'th query, define the partial function $\widehat{P}$ as the following.
\begin{align*}
\widehat{P}(x) \defeq 	\begin{cases}
							P(x) & \text{if } P(x) \text{ is defined,} \\
							E(x\cdot k^{-1}) \cdot k^{-1} & \text{if } E(x\cdot k^{-1}) \text{ is defined, and} \\
							\text{undefined} & \text{otherwise.}
						\end{cases}
\end{align*}
Using the above definition, defining a value for $E$ or $P$ implicitly defines a value for $\widehat{P}$. The first question is, whether or not $\widehat{P}$ is well-defined, i.e. whether there are clashes of values (that is, differences between values differing by other than $\cdot k$ (or $\cdot k^{-1}$)) for some $x$ for which both $P(x)$ and $E(x\cdot k^{-1})$ are defined.

\begin{lem}
Let $E$ and $P$ be partial functions arising in \textbf{Game X}, then the partial function $\widehat{P}$ is well-defined.
\end{lem}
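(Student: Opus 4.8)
The plan is to prove, by induction on the number of oracle queries answered in \textbf{Game X}, the invariant that after every query the following holds: for each $x$ such that \emph{both} $P(x)$ and $E(x\cdot k^{-1})$ are defined, we have $P(x) = E(x\cdot k^{-1})\cdot k^{-1}$ (equivalently $P(x)\cdot k = E(x\cdot k^{-1})$). Since a clash in the definition of $\widehat{P}$ means exactly that both clauses apply at some $x$ but disagree, establishing this invariant is precisely the assertion that $\widehat{P}$ is well-defined. The base case is vacuous, as $E$ and $P$ are empty before any queries, so no $x$ has both values defined.

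For the inductive step I would first observe that the $(i+1)$'st query extends $\widehat{P}$ at exactly one input. An $E$- or $D$-query defines the pair $E(m_{i+1})=c_{i+1}$ and $D(c_{i+1})=m_{i+1}$, which touches $\widehat{P}$ only at $x=m_{i+1}\cdot k$; a $P$- or $P^{-1}$-query defines $P(x_{i+1})=y_{i+1}$, touching $\widehat{P}$ only at $x=x_{i+1}$. By the inductive hypothesis no earlier clash exists, so it suffices to verify that this single new point introduces none. I would handle the four oracle types in parallel, since they are symmetric under the Even--Mansour correspondence $P\leftrightarrow E$, $P^{-1}\leftrightarrow D$ (shifting by $k$).

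Take the $E$-query as representative. The only way $\widehat{P}(m_{i+1}\cdot k)$ could already be pinned down by the $P$-clause is if $P(m_{i+1}\cdot k)$ is already defined, which is exactly the test in Step~2 (i.e.\ the condition that $m_{i+1}\cdot k$ has already been queried to $P$). In that branch \textbf{Game X} reassigns $c_{i+1}:=P(m_{i+1}\cdot k)\cdot k$, whence $E(m_{i+1})\cdot k^{-1}=P(m_{i+1}\cdot k)$ and the two clauses agree. If $P(m_{i+1}\cdot k)$ is undefined, the $P$-clause simply does not apply at $m_{i+1}\cdot k$, so no clash can arise no matter which $c_{i+1}$ is finally returned. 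The $D$-, $P$-, and $P^{-1}$-queries are treated identically using the respective reassignments $m_{i+1}:=P^{-1}(c_{i+1}\cdot k^{-1})\cdot k^{-1}$, $y_{i+1}:=E(x_{i+1}\cdot k^{-1})\cdot k^{-1}$, and $x_{i+1}:=D(y_{i+1}\cdot k)\cdot k$, each of which forces precisely the relation $P(x)\cdot k=E(x\cdot k^{-1})$ at the affected point. Thus the invariant is preserved in every case.

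The one place to be careful is that Step~2 of each oracle also contains an ``Else if \ldots\ goto Step~1'' branch, and I would stress that these branches are irrelevant to well-definedness: they re-sample the answer to preserve \emph{injectivity} of $\widehat{P}$ (a separate concern handled elsewhere), and crucially they are only ever entered when the complementary clause is \emph{undefined} at the affected input, so no forward clash is possible there either. Consequently the main obstacle is not a deep difficulty but the bookkeeping of reading the terse Step~2 conditions correctly---recognizing that a test such as ``$P(m_{i+1}\cdot k)\in T^2_i$'' means ``$m_{i+1}\cdot k$ has already been queried to $P$''---and cleanly separating the reassignment branches, which enforce consistency, from the resampling branches, which enforce injectivity.
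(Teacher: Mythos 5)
Your proposal is correct and follows essentially the same route as the paper: induction over the steps at which $\widehat{P}$ acquires a new value, with the key observation that a clash at $m_{i+1}\cdot k$ can only occur when $P(m_{i+1}\cdot k)$ is already defined, in which case the Step~2 reassignment forces $E(m_{i+1})=P(m_{i+1}\cdot k)\cdot k$ and hence agreement of the two clauses. Your explicit separation of the reassignment branches (which enforce consistency) from the resampling branches (which only concern injectivity) is a slightly more careful bookkeeping of the same argument the paper gives.
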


\begin{proof}
Proof by induction on the number of "Define" steps in \textbf{Game X} (i.e. steps $E-3, D-3, P-3,$ and $P^{-1}-3$) as these are the steps where $\widehat{P}$ becomes defined. The initial case of the induction proof is trivial as $S_0$ and $T_0$ are empty such that no values may clash. Suppose now that in step $E-3$ we define $E(m)=c$. The only possibility that $\widehat{P}$ becomes ill-defined will occur if the new $E(m)$ value clashes with a prior defined $P(m\cdot k)$ value: If $P(m \cdot k)$ was not defined, then no clashes can arise. If $P(m\cdot k)$ was defined, then by step $E-2$, the value is $E(m)\cdot k^{-1}$, such that there is no clash.

For $D-3$, the argument is similar as $E(m)$ will become defined as well. Although, for the case where $P(m\cdot k)$ is defined, step $D-2$ forces a new uniformly random value of $m$ to be chosen until no clash occurs.

Analogously, for $P$ and $P^{-1}$, no clashes will arise, hence, $\widehat{P}$ must be well-defined.
\end{proof}

We may also consider $\widehat{P}$ in \textbf{Game X'}, in the sense that when we define a value for $P$ in the game, we implicitly define a value for $\widehat{P}$ where $\widehat{P}(x)=P(x)$ as $E(x\cdot k^{-1})=P(x)$ in \textbf{Game X'}.

We wish now to show that the oracle query-answers of $E, D, P,$ and $P^{-1}$ in \textbf{Game X}, expressed in terms of $\widehat{P}$, correspond exactly to those in \textbf{Game X'}.

\textbf{Case 1: $E$-oracle query.} Beginning with \textbf{Game X}, we first note that \textbf{Game X} never defines $E(m)$ unless $m$ has been queried to the $E$-oracle, or alternately, the $D$-oracle has been queried with a $c$ such that $E(m)=c$. However, as $\mathcal{A}$ never repeats a query if it can guess the answer, i.e. never queries any part of an already defined $E/D$-oracle pair, we may assume that $E(m)$ is undefined when $m$ is queried. Therefore, we see that concurrently with $m$ being queried, we have that $\widehat{P}(m\cdot k)$ will be defined if and only if $P(m\cdot k)$ is defined, and $\widehat{P}(m\cdot k) = P(m\cdot k)$. Let us consider the two cases: when $\widehat{P}(m\cdot k)$ is defined and when it is undefined.
\begin{itemize}
\item[\textit{Case 1a}:] When $\widehat{P}(m\cdot k)$ is defined, then \textbf{Game X} returns $c = \widehat{P}(m\cdot k)\cdot k$. Setting $E(m)=c$ leaves $\widehat{P}$ unchanged, i.e. the value $\widehat{P}(m\cdot k)$ remains the same, unlike the next case.
\item[\textit{Case 1b}:] \begin{sloppypar} When $\widehat{P}(m \cdot k)$ is undefined, then \textbf{Game X} repeatedly chooses ${c\in_R G \setminus S^2}$ uniformly until $P^{-1}(c\cdot k^{-1})$ is undefined, i.e. the set $U=\lbrace c\in G | P^{-1}(c\cdot k^{-1})\not\in T^1\rbrace$. It follows that $y=c\cdot k^{-1}$ is uniformly distributed over $G \setminus \widehat{T}^2$.\footnote{$\widehat{T}^1$ and $\widehat{T}^2$ are the corresponding sets on the query pairs of $\widehat{P}$.} This can be seen by showing that $S^2 \cup U^\complement = \widehat{T}^2\cdot k$, where the only non-triviality in the argument follows from the definition of $\widehat{P}$. In this case, setting $E(m)=c$ also sets $\widehat{P}(m\cdot k)=y$, in contrast to the prior case as it is now defined. \end{sloppypar}
\end{itemize}
We now consider the same query on \textbf{Game X'}.
\begin{itemize}
\item[\textit{Case 1a'}:] When $\widehat{P}(m\cdot k)=P(m\cdot k)$ is defined, $c=P(m\cdot k)\cdot k$ is returned, and $\widehat{P}$ is unchanged.
\item[\textit{Case 1b'}:] When $\widehat{P}(m\cdot k)=P(m\cdot k)$ is undefined, we choose $y\in_R G\setminus T^2 = G\setminus \widehat{T}^2$, $\widehat{P}(m\cdot k)$ is set to $y$, and $c=y\cdot k$ is returned.
\end{itemize}
Thus, the behaviour of \textbf{Game X} and \textbf{Game X'} are identical on the $E$-oracle queries.

We will be briefer in our arguments for the following $3$ cases as the arguments are similar.

\textbf{Case 2: $D$-oracle query.} Here we again assume that no element of an $E/D$-oracle pair $(m,c)$, such that $E(m)=c$, has been queried before. Like in the above case, we see that, as $\widehat{P}(m \cdot k) = P(m\cdot k)$, we also have $\widehat{P}^{-1}(c \cdot k^{-1}) = P^{-1}(c\cdot k^{-1})$.
\begin{itemize}
\item[\textit{Case 2a $+$ 2a'}:] When $\widehat{P}^{-1}(c\cdot k^{-1})=P^{-1}(c\cdot k^{-1})$ is defined, then $m=P^{-1}(c\cdot k^{-1})\cdot k^{-1}$ is returned, leaving $\widehat{P}^{-1}(c \cdot k^{-1})$ unchanged in both games.
\item[\textit{Case 2b $+$ 2b'}:] If $\widehat{P}^{-1}(c\cdot k^{-1})=P^{-1}(c\cdot k^{-1})$ is undefined, then $x \in_R G \setminus \widehat{T}^1$ is chosen uniformly and $\widehat{P}^{-1}(c\cdot k^{-1}) = x$, in both cases.
\end{itemize}
Thus, the behaviour of \textbf{Game X} and \textbf{Game X'} are identical on the $D$-oracle queries.

\textbf{Case 3: $P$-oracle query.} Here we instead assume that no element of a $P/P^{-1}$-oracle pair $(x,y)$ such that $P(x)=y$, has been queried before.
\begin{itemize}
\item[\textit{Case 3a $+$ 3a'}:] Using the definition of the $E$- and $P$-oracles in \textbf{Game X} and the definition of $\widehat{P}$ we see that $P(x)$ is defined if and only if $E(x\cdot k^{-1})$ is defined, but then this also holds if and only if $\widehat{P}(x)$ is defined (by the assumption in the beginning of case $3$). Hence, if $\widehat{P}(x)$ is defined, then $y=E(x\cdot k^{-1})\cdot k^{-1}=\widehat{P}(x)$. Indeed, both games secure this value.
\item[\textit{Case 3b $+$ 3b'}:] If $\widehat{P}(x)$ is undefined, then $y \in_R G \setminus \widehat{T}^2$ is chosen uniformly and $\widehat{P}(x)$ is defined to be $y$, in both cases. 
\end{itemize}
Thus, the behaviour of \textbf{Game X} and \textbf{Game X'} are identical on the $P$-oracle queries.

\textbf{Case 4: $P^{-1}$-oracle query.} Again, we assume that no element of a $P/P^{-1}$-oracle pair $(x,y)$ such that $P(x)=y$, has been queried before.
\begin{itemize}
\item[\textit{Case 4a $+$ 4a'}:] Using the definition of \textbf{Game X} and the definition of $\widehat{P}$, as well as our case $4$ assumption, we see that $\widehat{P}^{-1}(y)$ is defined if and only if $D(y\cdot k)$ is defined. Hence, if $\widehat{P}^{-1}(y)$ is defined, then $x=D(y\cdot k) \cdot k = \widehat{P}^{-1}(y)$. Indeed, both games secure this value.
\item[\textit{Case 4b $+$ 4b'}:] If $\widehat{P}^{-1}(y)$ is undefined, then $x \in_R G \setminus \widehat{T}^1$ is chosen uniformly and $\widehat{P}^{-1}(y)$ is defined to be $x$, in both cases. 
\end{itemize}
Thus, the behaviour of \textbf{Game X} and \textbf{Game X'} are identical on the $P^{-1}$-oracle queries. Q.E.D.

\newpage
\section{Proof that the probability of Game R and Game R' match}\label{App:ReqRR}
Recall the definition of $S^1_i, S^2_i, T^1_i$ and $T^2_i$ (see p.~\pageref{GamesXandR}). We write $S_s$ and $T_t$ to denote the final transcripts. We also introduce the following definition.
\begin{defn}\label{overlapidentical}
We say that two $E/D$-pairs $( m_i,c_i )$ and $( m_j,c_j )$ \textbf{overlap} if $m_i=m_j$ or $c_i=c_j$. If $m_i=m_j$ and $c_i=c_j$, we say that the pairs are \textbf{identical}. Likewise for $P/P^{-1}$-pairs $( x_i,y_i )$ and $( x_j,y_j )$.
\end{defn}

If two pairs overlap, then by the definition of the $E/D$- and $P/P^{-1}$-oracles, they must be identical. Therefore, WLOG, we may assume that all queries to the oracles are non-overlapping. Let us now prove the lemma.

\begin{lem}
$Pr_R\left[ BAD \right] = Pr_{R'} \left[ BAD \right]$.
\end{lem}

\begin{proof}
We need to prove that \textbf{Game R} has its flag set to \textbf{bad} if and only if \textbf{Game R'} has its flag set to \textbf{bad}.
	
"$\Rightarrow$": We want to show that there exists $(m,c)\in S_s$ and $(x,y)\in T_t$ such that either $m\cdot k = x$ or $c\cdot k^{-1} = y$ (i.e. such that $k$ becomes bad). We have to consider the $8$ cases where the flag is set to bad. All of the cases use an analogous argument to the following: If $P(m \cdot k)$ is defined then $P(m\cdot k) = y = P(x)$ for some $(x,y)\in T_{t}$ such that, as overlapping pairs are identical, $m\cdot k = x$.

"$\Leftarrow$": We assume that there exists $(m,c)\in S_s$ and $(x,y)\in T_t$  such that $k$ becomes bad. i.e. such that either $m\cdot k = x$ or $c\cdot k^{-1} = y$. We need to check that in all four oracle queries, the flag in \textbf{Game R} is set to bad, which needs a consideration of 8 cases.

Assume that $m\cdot k = x$, then
\begin{align*}
E\text{-oracle on } m &: P(m\cdot k)=P(x) = y \in T_t^2, \\
D\text{-oracle on } c &: P(m\cdot k)=P(x) = y \in T_t^2, \\
P\text{-oracle on } x &: E(x\cdot k^{-1}) = E(m) = c \in S_s^2, \\
P^{-1}\text{-oracle on } y &: E(x\cdot k^{-1}) = E(m) = c \in S_s^2.
\end{align*}

Assume now that $c\cdot k^{-1} = y$, then
\begin{align*}
E\text{-oracle on } m &: P^{-1}(c\cdot k^{-1})=P^{-1}(y) = x \in T_t^1, \\
D\text{-oracle on } c &: P^{-1}(c\cdot k^{-1})=P^{-1}(y) = x \in T_t^1, \\
P\text{-oracle on } x &: D(y\cdot k) = D(c) = m \in S_s^1, \\
P^{-1}\text{-oracle on } y &: D(y\cdot k) = D(c) = m \in S_s^1.
\end{align*}
\end{proof}

\end{document}